\newtheorem{theorem}{Theorem}[section]
\newtheorem*{theorem*}{Theorem}
\newtheorem{corollary}{Corollary}[theorem]
\newtheorem{lemma}[theorem]{Lemma}
\newtheorem{proposition}[theorem]{Proposition}
\newtheorem*{remark}{Remark}
\title[Surface defects in TFW model]{Analysis of quasi-planar defects using the Thomas--Fermi--von Weizs\"acker model}
\author{Dharamveer Kumar, Amuthan A. Ramabathiran}
\address[Dharamveer Kumar]{Department of Aerospace Engineering, Indian Institute of Technology Bombay, Mumbai 400042, India.}
\email{dharamveer\_kumar@iitb.ac.in}
\address[Amuthan A. Ramabathiran]{Aerospace Engineering, California Polytechnic State University, San Luis Obispo, CA 93407.}
\email{aramabat@calpoly.edu}
\begin{document}

\begin{abstract}
We analyze the convergence of the electron density and relative energy with respect to a perfect crystal of a class of volume defects that are compactly contained along one direction while being of infinite extent along the other two using the Thomas--Fermi--von Weizs\"acker (TFW) model. We take advantage of prior work on the thermodynamic limit and stability estimates in the TFW setting, and specialize it to the case of quasi-planar defects. In particular, we prove that the relative energy of the defective crystal with respect to a perfect crystal is finite, and in fact conforms to a well-posed minimization problem. In order to show the existence of the minimization problem, we modify the TFW theory for thin films and establish convergence of the electronic fields due to the perturbation caused by the quasi-planar defect. We also show that perturbations to both the density and electrostatic potential due to the presence of the quasi-planar defect decay exponentially away from the defect, in agreement with the  known locality property of the TFW model. We use these results to infer bounds on the generalized stacking fault energy, in particular the finiteness of this energy, and discuss its implications for numerical calculations. We conclude with a brief presentation of numerical results on the (non-convex) Thomas-Fermi-von Weizs\"acker-Dirac (TFWD) model that includes the Dirac exchange in the universal functional, and discuss its implications for future work. 
\end{abstract}

\maketitle

\section{Introduction}
The analysis of point, line, and surface defects in crystalline solids occupies a central role in materials science and engineering. The structure and energetics of these defects is best understood by solving the Schr\"odinger equation for the appropriate nuclear configurations. In practice, the only viable route for studying them is through numerical computations, but a purely mathematical study of these defects is of significant theoretical interest since they can provide useful bounds for the electronic field variables and the energy, scaling laws that provide insights into the effect of system size, and/or suggest new variational principles that suggest fast algorithms tailored for specific classes of defects. The present work focuses on the behavior of the electronic fields and energy corresponding to a class of quasi-planar defects using an approximate reformulation of the Schr\"odinger equation, namely the Thomas--Fermi--von Weizs\"acker (TFW) model. 

The TFW model is a type of Density Functional Theory (DFT), where the primary variable is the electron density, unlike wavefunction based methods for the Schr\"odinger equation like the Hartree-Fock method. DFT is, in principle, an exact reformulation of the Schr\"odinger equation due to Hohenberg and Kohn \cite{HK64} when one is interested only in the ground state properties of an atomic system. The simplest, and original, formulation of DFT is the Thomas--Fermi (TF) model \cite{Thomas27, Fermi27}, but it is unsatisfactory as a quantum chemistry model since atoms do not bind according to this model; see however \cite{Solovej16} for the surprising effectiveness of the TF model for predicting the size of atoms. 

A variety of modifications have been proposed to rectify the shortcomings of the TF model. The simplest of these are the von Weizs\"acker correction for the kinetic energy functional \cite{vw35} and the Dirac correction to model exchange interactions \cite{Dirac30}. Of these, the former has the virtue of retaining the convexity of the energy functional with respect to the electron density, while the latter renders it non-convex. The mathematical analysis of these models, accordingly, is more completely understood in the former case: the mathematical analysis of the TFW model for atoms and molecules was carried out in \cite{BBL81}.

We mention at the outset that while the TFW model is rarely, if ever, used in practical computations, it is a natural starting point for mathematical analysis since it offers a reasonable trade-off between the ability to do rigorous analysis and being physically accurate. The typical point of view adopted in these studies is that despite their shortcomings, the mathematical tools developed for these analyses are expected to have wider applicability to the more physically realistic DFT models. A mathematical outlook on more realistic models, and the difficulties associated with them, can be found in \cite{Lieb83}.

A systematic mathematical analysis of the Thomas-Fermi model was carried out in the seminal work of Lieb and Simon \cite{LS77}. In addition to proving the existence and uniqueness of a ground state density for a given atomic configuration, this work also systematically established the existence of a thermodynamic limit for the TF model for an infinitely extended perfect crystal; it is to be noted that the nomenclature \emph{thermodynamic} in this context---adopted in the present work too---is meant to indicate the passage from a finite crystal to its infinite extension. A crucial ingredient of this proof is the convexity of the energy functional when expressed as a function of electron density. The corresponding thermodynamic limit analysis of the TFW model is presented in \cite{CLBL98}. A key outcome of this work is that strong qualitative results for electronic fields can be proven for TFW under very weak assumptions regarding the underlying atomic nuclei positions, without requiring them to be periodic.

Rigorous analysis of defects in the context of the TFW model has received much lesser attention. Point defects have been studied in \cite{CE11}, where it was also established that crystals with point defects are neutral in the TFW setting. Bounds on how the electron density changes when the nuclear configuration is perturbed were developed in \cite{NO17}. The exponential decay of the electron density away from a surface of a semi-crystal is proved in \cite{Blanc06}.

The present work builds on these previous efforts and applies them to the special case of quasi-planar defects. The interest in this class of defects stems from the prominent role that surface defects like generalized stacking fault surface energies, introduced by Vitek \cite{Vitek92}, play in understanding dislocation core structures in continuum models of dislocations like the Peierls-Nabarro model \cite{peierls40, nabarro47}. In practice, the generalized stacking fault surfaces are computed by considering two semi-crystals that are shifted relative to each other by a specified amount of slip; see \cite{LKBK01} for a representative calculation. The goal of this work is to specialize the aforementioned results and study their implications for quasi-planar defects like generalized stacking fault surfaces. We study a slightly modified geometric perturbation that also covers defects like grain and twin boundaries, though we do not analyze these specific defects here. 

The larger goal of this work is to develop rigorous bounds for the relative energy of quasi-planar defects, with reference to a perfect crystal, using the TFW model, and establish a variational principle for the same. To accomplish this, we focus on a tricrystal configuration where the defect is sandwiched in a slab between two perfect crystals. The rationale for this choice is that it permits a much cleaner analysis in the thermodynamic limit, while also enlarging the class of defects from surface defects to slab-like defects like grain boundaries. We use the term \emph{quasi-planar} to describe these defects for this reason. We prove using the aforementioned thermodynamic limit analyses that the electronic fields in the defective crystal converge, and that the relative energy for a class of quasi-planar defects is finite. Crucially, we show that the relative energy satisfies a well-posed minimization principle. We also present results showing the exponential decay of the electronic fields due to the defect away from the defect core, as expected from the known local nature of the TFW model. We then present an application of these results to study stacking fault energies in crystals, and, in particular, show that the stacking fault energy is bounded from above. In practice, calculations with infinitely large domains are not feasible; the usual choice is the imposition of an artificial periodic boundary condition, or a large domain with appropriate Dirichlet boundary conditions. Our analysis reveals that either of these choices leads to controlled errors in the context of the TFW model.  

Though the primary focus of this work is the TFW model, we also present simple numerical results using the TFW model with Dirac exchange---abbreviated as TFWD in the sequel---which is non-convex. The analytical methods that are developed in this work, and previous research efforts in this context, do not readily translate to this and more complex DFT models that include exchange and correlation effects since they use convexity of the energy functional in an essential way. The numerical results we provide indicate that many of the qualitative properties of the TFW response carry over to the TFWD model too, thereby hinting at the possibility of alternative approaches to the thermodynamic limit problem that do not rely on convexity. We do not, however, present such an analysis here.

The outline of the paper is as follows. Section 2 collects known mathematical results on the TFW model that are relevant for this work and sets the overall notation. The central problems that are tackled in this work, namely the convergence of density and energy for a class of quasi-planar defects, the finiteness of their relative energy with respect to a perfect crystal, and the well-posed minimization problem for the relative energy are outlined in section 3, highlighting the key theorems in this regard. Detailed proofs of the key theorems are provided in section 4. A brief discussion of the relevance of these results, including the computation of bounds for the generalized stacking fault energy and numerical results on the TFWD model are provided in section 5. Supplementary results are provided in section 6. Appendix~\ref{app:tfw_thms} collects precise statements of known theorems pertaining to the thermodynamic limit and stability estimates. The details of the numerical algorithm employed in the discussion are presented in Appendix~\ref{app:tfw_numerics}. 

\section{Mathematical Background for TFW Models}
\label{sec:mathematical_background}
We begin by providing an overview of known mathematical results pertaining to the TFW model in the context of a crystal, with and without defects, that are relevant for the ensuing discussion. Detailed statements of the relevant theorems are provided in Appendix~\ref{app:tfw_thms}. We follow the notational conventions in \cite{CLBL98}. 

Consider a finite collection of point nuclei with unit charge located at $\Lambda \subset \mathbb{Z}^3$. Let $|\Lambda|$ represent the total nuclear charge, and $\rho$ represent the electron density of $|\Lambda|$ electrons that occupy the three dimensional space around the nuclear configuration $\Lambda$. We restrict ourselves in the present work to atomic systems that have no net charge. According to the TFW model, the total energy of this atomic system is computed by evaluating the following energy functional of the density:
\begin{equation} \label{eq:energy_TFW}
\begin{split}
I_\Lambda(\rho) &= E_\Lambda(\rho) + \frac{1}{2}\sum_{k \neq l \in \Lambda} \frac{1}{\lVert k - l\rVert},\\
E_\Lambda(\rho) &= C_W\int_{\mathbb{R}^3} \lVert \nabla \sqrt{\rho}(x) \rVert^2\,dx + C_{TF}\int_{\mathbb{R}^3} \rho^{\frac{5}{3}}(x)\,dx\\
 & - \int_{\mathbb{R}^3} \left(\sum_{k \in \Lambda} \frac{1}{\lVert x - k\rVert}\right)\rho(x)\,dx + \frac{1}{2}\int_{\mathbb{R}^3}\int_{\mathbb{R}^3} \frac{\rho(x)\rho(y)}{\lVert x - y\rVert} \, dx \,dy. 
\end{split}
\end{equation}
Here $E_\Lambda(\rho)$ is the energy associated with the electron-electron and electron-nuclear interactions. All electrostatic interactions are modeled using the Coulomb potential $V(x) = 1/\lVert x \rVert$. In comparison to a classical electrostatic formulation, the additional terms are the first two in the expression for $E_\Lambda(\rho)$ expressing the kinetic energy of the electrons. The second of these is identical to that in the TF model, while the first is a correction due to von Weizs\"acker. It is to be noted that on top of the approximation that is incurred by this expression for the kinetic energy, a crucial omission in the TFW model is the absence of any terms to handle exchange and correlation effects for the electron-electron interactions. For the purposes of the mathematical analysis that follows, the constants $C_{TF}$ and $C_W$ are set to unity; the corresponding modification to the results when these constants are not equal to $1$ is straightforward. It was shown in \cite{BBL81}, among other things, that a unique minimizing density exists for the variational TFW problem 
\begin{equation} \label{eq:TFW_min}
\text{inf}\;\left\{I_\Lambda(\rho) \;:\; \sqrt{\rho} \in H^1(\mathbb{R}^3), \rho \ge 0, \int_{\mathbb{R}^3} \rho = |\Lambda|\right\}.
\end{equation}
We present next the existence and uniqueness theorems for a special case when $\Lambda$ is infinitely large.

\subsection{Thermodynamic Limits}
Consider the case $\Lambda = \mathbb{Z}^3$. The energy of the crystal is infinite in this case, but we can meaningfully define the energy per unit cell as the limit $E_\Lambda(\rho_\Lambda)/|\Lambda|$ as $\Lambda$ fills up $\mathbb{Z}^3$ in the sense of van Hove sequences; see \cite{CLBL98} for details. The study of this limit, called the \emph{thermodynamic limit}, is carried out in the seminal work of \cite{CLBL98} where it was shown that the energy per unit cell converges to a finite value, and that the electron density converges to a periodic density with the periodicity of the lattice. More precisely, they show that the limit density and energy per unit cell are obtained by solving the following periodic minimization problem over the unit cell $\Gamma_0 = (-1/2, 1/2]^3 \subset \mathbb{R}^3$:
\begin{equation} \label{eq:periodic_TFW_thermo}
\begin{split}
\inf \; \biggl\{E_{\text{per}}(\rho) \biggr. &\;:\; \left. \sqrt{\rho} \in H^1_{\text{per}}(\mathbb{R}^3), \rho \ge 0, \int_{\Gamma_0} \rho = 1\right\}, \\
E_{\text{per}}(\rho) &= \int_{\Gamma_0} \lVert \nabla \sqrt{\rho}(x) \rVert^2\,dx + \int_{\Gamma_0} \rho^{\frac{5}{3}}(x)\,dx\\
 & - \int_{\Gamma_0} \rho(x)G(x)\,dx + \frac{1}{2}\int_{\Gamma_0}\int_{\Gamma_0} \rho(x)\rho(y)G(x - y) \, dx \,dy.
\end{split}
\end{equation}
In the periodic minimization problem \eqref{eq:periodic_TFW_thermo}, $G$ is the Green's function obtained by solving the following Partial Differential Equation (PDE):
\begin{equation} \label{eq:green_TFW}
-\nabla^2 G(x) = 4\pi\left(-1 + \sum_{k \in \mathbb{Z}^3} \delta(x - k)\right), \quad \int_{\Gamma_0} G(x)\,dx = 0.
\end{equation}
It is easily verified that solution of the PDE \eqref{eq:green_TFW} is given by
\begin{equation} \label{eq:green_TFW_soln}
G(x) = 4\pi\sum_{k \in (2\pi\mathbb{Z})^3 \setminus \{0\}} \frac{\exp (ik\cdot x)}{\lVert k \rVert^2} + C,
\end{equation}
where $C$ is a constant chosen to satisfy the integral constraint in \eqref{eq:green_TFW}. 

It is convenient to smear out the nuclear charge to a nuclear density satisfying certain additional physically relevant conditions. To this end, we follow \cite{CLBL98} and \cite{NO17} to introduce the following class $\mathcal{M}$ of smeared out nuclear densities:
\begin{equation} \label{eq:n_smeared}
\begin{split}
\mathcal{M} &= \{n \in L^2_{\text{unif}}({\mathbb{R}^3}) \;:\; n \ge 0, (H_1), (H_2) \text{ hold}\},\\
L^2_{\text{unif}}(\mathbb{R}^3) &= \{f \in L^2_{\text{\text{loc}}}(\mathbb{R}^3) \;:\; \sup_{x \in \mathbb{R}^3} \lVert f \rVert_{L^2(B_1(x))} < \infty\},\\
(H_1) &:\; \sup_{x \in \mathbb{R}^3} \int_{B_1(x)} n(y)\,dy < \infty,\\
(H_2) &:\; \lim_{R \to \infty} \inf_{x \in \mathbb{R}^3} \frac{1}{R}\int_{B_R(x)} n(y)\,dy = +\infty.
\end{split}   
\end{equation}
Conditions $(H_1)$ and $(H_2)$ restrict $\mathcal{M}$ to nuclear densities that do not have infinitely dense clusters and regions with large voids, respectively. Note that the nuclear density being a $L^2_{\text{unif}}(\mathbb{R}^3)$ function already implies $(H_1)$. For nuclear densities $n \in \mathcal{M}$ which are periodic over $\Gamma_0$ it is shown in \cite{CLBL98} that the energy functional \eqref{eq:periodic_TFW_thermo} can be written as the following functional of $u := \sqrt{\rho}$:
\begin{equation} \label{eq:energy_TFW_smeared}
\begin{split} 
I_n(u) &= \int_{\Gamma_0} \lVert \nabla u(x) \rVert^2\,dx + \int_{\Gamma_0} u^{10/3}(x)\,dx + D_{\Gamma_0}(n - u^2, n - u^2),\\
D_{\Gamma_0}(f, g) &= \iint_{{\Gamma_0} \times {\Gamma_0}} f(x)g(y)G(x-y)\,dx dy, \quad f, g \in L^{2}({\Gamma_0}).
\end{split}
\end{equation}
We note for subsequent use that the Euler-Lagrange equations for the minimization problem associated with the functional \eqref{eq:energy_TFW_smeared} is given by the following Schr\"odinger-Poisson system:
\begin{equation} \label{eq:TFW_smeared_EL}
\begin{split}
-\nabla^2 u &+ \frac{5}{3}u^{7/3} + \phi u = 0,\\
u \geq 0, \\
-\nabla^2 \phi &= 4\pi(u^2 - n).
\end{split}
\end{equation}
Note that this system is well-defined even when $n$ is not periodic; please refer to theorem~\ref{thm:TFW_smeared_EL_exst_uniq} for more details.

The foregoing kind of analysis has also been extended for other semi-infinite configurations like polymer and thin-films in ~\cite{BTF00}. The precise mathematical statements are listed in Appendix~\ref{app:tfw_thms}.  

\subsection{Comparison theorems}
For $M > 0$ define
\begin{equation}
  \mathcal{M}^M = \{ m \in \mathcal{M} : \|m\|_{L^2_{unif}(\mathbb{R}^3)} \leq M\}. 
\end{equation}
Since the goal of the present work is an analysis of surface defects that involves comparing a defective configuration with a reference defect-free configuration, the following general comparison theorems provide useful stability estimates for solutions of the TFW model:

\begin{theorem}\label{thm:ortner_TFW_stability}
\textit{\textup{[}Theorem 3.1 in \cite{NO17}\textup{]}}
Let $m_1, m_2 \in \mathcal{M}^M$ be two given nuclear densities, and let $(u_1, \phi_1)$ and $(u_2, \phi_2)$ be the corresponding unique ground state solutions, respectively. Then for any $y \in \mathbb{R}^3$ there exist positive constants $C_1$ and $C_2$ such that  
\begin{equation}
\label{eq:TFW_stability}
\sum_{|s| \leq 2}|\partial_s(u_1 - u_2)(y)|^2 + |(\phi_1 - \phi_2)(y)|^2 \leq C_1 \int_{\mathbb{R}^3} |(m_1-m_2)(x)|^2 e^{-C_2\lVert x-y \rVert}\,dx.
\end{equation}
The constants $C_1, C_2$ depend only on the uniform bounds of $m_1$ and $m_2$.  
\end{theorem}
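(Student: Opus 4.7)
The plan is to work with the linearization of the Schr\"odinger--Poisson system \eqref{eq:TFW_smeared_EL} around the pair of solutions. Writing $w = u_1 - u_2$ and $\psi = \phi_1 - \phi_2$, and using the identity $a^{7/3} - b^{7/3} = \tfrac{7}{3}\bar u^{4/3}(a-b)$ for some intermediate $\bar u(x) \in [\min(u_1,u_2),\max(u_1,u_2)]$ together with $\phi_1 u_1 - \phi_2 u_2 = \phi_1 w + u_2 \psi$ and $u_1^2 - u_2^2 = (u_1+u_2) w$, one arrives at the linear coupled system
\begin{equation*}
-\nabla^2 w + \Bigl(\tfrac{35}{9}\bar u^{4/3} + \phi_1\Bigr) w + u_2\,\psi = 0, \qquad -\nabla^2 \psi - 4\pi(u_1+u_2)\,w = -4\pi(m_1-m_2).
\end{equation*}
The right-hand side vanishes outside the support of $m_1 - m_2$, so the task reduces to proving an exponential decay estimate for this elliptic system driven by a compactly localized source.

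Next I would establish that the coefficients are uniformly controlled in terms of $M$ alone. The uniform $L^2_{\text{unif}}$ bound on $m_i$, combined with the a priori bounds for the TFW system recalled in Appendix~\ref{app:tfw_thms}, should yield uniform upper bounds $0 < u_i \le \bar C$ and $|\phi_i| \le \bar C$, as well as uniform \emph{lower} bounds $u_i \ge c_0 > 0$ on each unit ball. The lower bound on $u_i$ is crucial because it forces the effective diagonal coefficient $\tfrac{35}{9}\bar u^{4/3} + \phi_1$ to be at least some positive constant $\mu^2 > 0$ (up to a small admissible perturbation), which in turn endows the linearized operator with a genuine mass gap. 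This is the mechanism behind the known locality of TFW.

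With this structural picture in place, I would derive the pointwise estimate by an Agmon/Combes--Thomas argument. The idea is to test the linearized system against $e^{2\alpha \lVert x - y\rVert} w$ and $e^{2\alpha \lVert x - y\rVert}\psi$, absorbing the derivative cross-terms coming from the exponential weight into the coercive mass-gap term, provided $\alpha$ is chosen small enough depending only on $\mu$ and $\bar C$. This yields a weighted energy inequality of the form
\begin{equation*}
\int_{\mathbb{R}^3} e^{2\alpha\lVert x - y\rVert}\bigl(|\nabla w|^2 + w^2 + |\nabla\psi|^2 + \psi^2\bigr)\,dx \;\le\; C \int_{\mathbb{R}^3} e^{2\alpha\lVert x - y\rVert}|m_1 - m_2|^2\,dx.
\end{equation*}
Finally, to promote this $L^2$-type bound into the pointwise $C^2$ estimate claimed in \eqref{eq:TFW_stability}, I would apply interior elliptic regularity on the unit ball $B_1(y)$: bootstrap $\psi \in W^{2,p}$ from the Poisson equation, feed it back into the equation for $w$, and invoke Sobolev embedding together with Schauder estimates. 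The exponential weight $e^{-C_2 \lVert x - y\rVert}$ on the right-hand side emerges by rewriting the weighted bound on $B_1(y)$ using that $e^{2\alpha|x-y|}$ is comparable to $e^{2\alpha}$ there, while the global weight on the right survives.

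The main technical obstacle is the \emph{coupled} nature of the system together with the absence of a natural symmetry: the cross-coupling $u_2\psi$ in the first equation and $-4\pi(u_1+u_2)w$ in the second carry different coefficients, so one cannot directly test against a single quadratic form. Closing the weighted energy estimate therefore requires a careful choice of weights between the two equations (e.g.\ testing the Poisson equation against $\tfrac{1}{4\pi}\psi$ and the Schr\"odinger equation against $w$, then absorbing the asymmetric remainder $\int (u_2 - u_1) w\psi$ using the $u^{4/3}$ mass term via Young's inequality). Verifying that the resulting constants depend only on the uniform $L^2_{\text{unif}}$ bound $M$, and not on pointwise features of $m_1, m_2$, is where the delicate work lies; everything else is then standard elliptic regularity.
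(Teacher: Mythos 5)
This theorem is quoted verbatim from \cite{NO17}; the present paper does not reprove it, so I am evaluating your sketch against that source's argument. Your high-level architecture---linearize the Schr\"odinger--Poisson system, control the coefficients uniformly in $M$, run a Combes--Thomas/Agmon weighted energy estimate, then upgrade to pointwise $C^2$ by interior elliptic regularity on $B_1(y)$---does match the shape of the proof in \cite{NO17}. But there is a genuine gap at the precise point you yourself flag as ``the mechanism behind the known locality of TFW.''

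You claim the lower bound $u_i \ge c_0 > 0$ ``forces the effective diagonal coefficient $\tfrac{35}{9}\bar u^{4/3} + \phi_1$ to be at least some positive constant $\mu^2 > 0$.'' This is false. The potential $\phi_1$ is not sign-definite---it is strongly negative near the (smeared) nuclear charges---and no pointwise lower bound of the needed sign for $\tfrac{35}{9}\bar u^{4/3} + \phi_1$ follows from the $L^2_{\text{unif}}$ data. Worse, the mean-value factor $\bar u$ lies between $\min(u_1,u_2)$ and $\max(u_1,u_2)$, so when $u_2 < u_1$ one has $\bar u^{4/3}$ possibly much smaller than $u_1^{4/3}$, and the combination $\tfrac{35}{9}\bar u^{4/3} + \phi_1$ does not even dominate the form-nonnegative object $\tfrac{5}{3}u_1^{4/3} + \phi_1$ pointwise. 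A reader following your plan stalls here: the weighted energy inequality cannot be closed without a coercive term, and you have not produced one.

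The correct mechanism, which is used in \cite{NO17} and is echoed in the proof of Lemma~\ref{lem:Lposdef} of the present paper, requires a different linearization. Use the algebraic split
\begin{equation*}
u_1^{7/3} - u_2^{7/3} = u_1^{4/3}(u_1 - u_2) + u_2\bigl(u_1^{4/3} - u_2^{4/3}\bigr) = u_1^{4/3}\,w + \tfrac{4}{3}\,\tilde u^{1/3}u_2\,w,
\end{equation*}
so that the $w$-equation becomes
\begin{equation*}
\Bigl(-\nabla^2 + \tfrac{5}{3}u_1^{4/3} + \phi_1\Bigr)w + \tfrac{20}{9}\,\tilde u^{1/3}u_2\,w + u_2\,\psi = 0.
\end{equation*}
The operator $-\nabla^2 + \tfrac{5}{3}u_1^{4/3} + \phi_1$ annihilates the strictly positive function $u_1$, so it is \emph{nonnegative as a quadratic form} (this is the Allegretto--Piepenbrink-type inequality; see Lemma~6.2 of \cite{NO17} and its use in Lemma~\ref{lem:Lposdef} here). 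It contributes no negative mass, but also no positive mass. The genuine coercivity comes entirely from the residual $\tfrac{20}{9}\,\tilde u^{1/3}u_2 \ge \tfrac{20}{9}\,c_0^{4/3} > 0$, which is where the uniform lower bound on $u$ actually enters. It is this residual that absorbs the $|\nabla\theta|^2$ error from the exponential conjugation when $\alpha$ is small, and the $u_2\psi$ cross-term via Young's inequality together with the Poisson coupling. Once you replace your mean-value linearization by this split and use form-nonnegativity rather than a false pointwise bound, the rest of your outline (weighted $H^1$ estimate, then elliptic bootstrap to $W^{2,p}$ and Sobolev embedding on $B_1(y)$) does go through.
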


\begin{proposition}\label{prop:ortner_TFW_stability_self}
\textit{\textup{[}Proposition 4.1 in \cite{NO17}\textup{]}} Let $m_1, m_2 \in \mathcal{M}^M$ be two given nuclear densities, and let $(u_1, \phi_1)$ and $(u_2, \phi_2)$ be the corresponding unique ground state solutions, respectively.
Further, let $\Omega \subset \mathbb{R}^3$ be open such that $m_2 = m_1$ on $\Omega$.
Then for all $y \in \Omega$
\begin{equation}
\sum_{|s| \leq 2}|\partial_s(u_2 - u_1)(y)|^2 + |(\phi_2 - \phi_1)(y)|^2 \leq C_1e^{-C_2\text{dist}(y, \partial \Omega)}.
\end{equation}
The constants $C_1, C_2$ are positive, independent of $\Omega$, and depend only on the uniform bounds of $m_1$. 
\end{proposition}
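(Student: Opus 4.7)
The plan is to obtain this result as a direct consequence of Theorem~\ref{thm:ortner_TFW_stability}, exploiting the hypothesis that $m_1$ and $m_2$ agree on $\Omega$ so that the source term $(m_1 - m_2)^2$ in the stability estimate vanishes inside $\Omega$. Concretely, I would first apply \eqref{eq:TFW_stability} pointwise at the given $y \in \Omega$, obtaining
\begin{equation*}
\sum_{|s| \le 2}|\partial_s(u_2 - u_1)(y)|^2 + |(\phi_2 - \phi_1)(y)|^2 \;\le\; C_1 \int_{\mathbb{R}^3 \setminus \Omega} |(m_1 - m_2)(x)|^2 \, e^{-C_2 \|x - y\|}\, dx,
\end{equation*}
since the integrand is zero on $\Omega$ by assumption.

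The core task is then to convert the pointwise exponential factor into an exponential in $\mathrm{dist}(y, \partial\Omega)$. For $x \in \mathbb{R}^3 \setminus \Omega$ and $y \in \Omega$, any line segment from $y$ to $x$ must cross $\partial\Omega$, so $\|x - y\| \ge \mathrm{dist}(y, \partial\Omega)$. Splitting the exponent as $C_2 = \alpha + (C_2 - \alpha)$ with any $0 < \alpha < C_2$ gives
\begin{equation*}
e^{-C_2 \|x - y\|} \;\le\; e^{-\alpha \, \mathrm{dist}(y, \partial\Omega)} \, e^{-(C_2 - \alpha)\|x - y\|},
\end{equation*}
so that the bound becomes $C_1 e^{-\alpha\, \mathrm{dist}(y,\partial\Omega)}$ times the residual integral $\int_{\mathbb{R}^3} |(m_1 - m_2)(x)|^2 e^{-(C_2 - \alpha)\|x - y\|}\, dx$.

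To finish, I would bound this residual integral uniformly in $y$ using a cubic (or unit-ball) covering of $\mathbb{R}^3$ centered at lattice points. Since $m_1, m_2 \in \mathcal{M}^M$, the inequality $|m_1 - m_2|^2 \le 2(|m_1|^2 + |m_2|^2)$ and hypothesis $(H_1)$ imply that $\|m_1 - m_2\|_{L^2(B_1(z))}^2 \le 4 M^2$ for every $z$. Summing the contributions of each unit ball against the geometric series generated by $e^{-(C_2 - \alpha)\|x - y\|}$ yields a finite bound depending only on $M$ (and hence only on the uniform bound of $m_1$, as stated, since $m_2$ shares the same class). Absorbing everything into new constants $C_1, C_2 > 0$ independent of $\Omega$ and $y$ gives the claimed estimate.

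The only subtle point, which I expect to be the main bookkeeping obstacle rather than a genuine mathematical obstruction, is ensuring that the covering argument is carried out uniformly in $y$ so that the residual integral is truly bounded by a constant depending solely on $M$; the rest is an elementary manipulation of the exponential kernel supplied by Theorem~\ref{thm:ortner_TFW_stability}.
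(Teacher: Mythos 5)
Your argument is correct, but note that the paper does not actually prove this proposition: it is cited verbatim as Proposition~4.1 of \cite{NO17}, so there is no in-house proof to compare against. Your derivation from Theorem~\ref{thm:ortner_TFW_stability} --- vanishing of the integrand on $\Omega$, the geometric observation $\lVert x - y \rVert \ge \mathrm{dist}(y, \partial\Omega)$ for $x \notin \Omega$ and $y \in \Omega$, the split $e^{-C_2\lVert x-y\rVert} \le e^{-\alpha\,\mathrm{dist}(y,\partial\Omega)} e^{-(C_2-\alpha)\lVert x-y\rVert}$, and the uniform bound on the residual integral via a covering argument --- is exactly the standard route and matches how the estimate is obtained in \cite{NO17}. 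Two minor corrections to your bookkeeping: the bound $\lVert m_1 - m_2\rVert_{L^2(B_1(z))} \le 2M$ follows directly from the $L^2_{\mathrm{unif}}$ bound built into the definition of $\mathcal{M}^M$, not from hypothesis $(H_1)$, which is an $L^1$ condition on balls; and since the resulting constants depend on $M$, which bounds both $m_1$ and $m_2$, your remark that they reduce to depending on the ``uniform bounds of $m_1$'' only because $m_2$ lies in the same class $\mathcal{M}^M$ is the right reading of the proposition's slightly loose phrasing.
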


What these theorems establish is a \emph{locality} property for TFW models which states that perturbations in the electronic fields are controlled by perturbations in the nuclear distribution. In this work, we use these estimates to show the existence of a variational principle  corresponding perturbation of energy for a special class of perturbations.

\section{Problem Setting and Main Results}
We study the relative energies of two nuclear densities $m_1, m_2 \in \mathcal{M}^M$ such that the nuclear perturbation $\nu := m_2 - m_1$ is \emph{spatially confined} in a manner that will be made precise later. We are interested in understanding how the difference in energy $I_{m_2} - I_{m_1}$ is related to the perturbation $\nu$; the energy $I_m$ corresponding to a nuclear distribution $m$ is given as in equation~\eqref{eq:energy_TFW_smeared}, with the integrals defined over the appropriate domains, and the Green's function defined correspondingly. We analyze a special class of perturbations compactly contained in semi-infinite domains where the individual energies $I_{m_1}$ and $I_{m_2}$ are infinitely large and demonstrate that the difference in energy $I_{m_2} - I_{m_1}$ is finite, and further satisfies a variational principle. Throughout our analysis, we assume that both the systems corresponding to nuclear distributions $m_1$ and $m_2$ are charge neutral. 

To be more precise, we study a class of extended and quasi-planar defects in a perfect crystalline lattice that are contained within a strip $\mathbb{R} \times \mathbb{R} \times [-L_0/2, L_0/2] \subseteq \mathbb{R}^3$, where $L_0$ is an odd natural number, and periodic along the first two dimensions with periodicity 1, as shown in Figure~
\ref{fig:planar_defect_schematic}. We note that the analysis that follows can be readily extended to situations with unequal periodicities along the first two dimensions and $L_0$ not being an odd number.  Important examples of defects of this type in materials include grain boundaries, lenticular twins, and a pair of stacking faults resulting in a tricrystal configuration.  

\begin{figure}[h]
\centering
\includegraphics[width=0.35\textwidth]{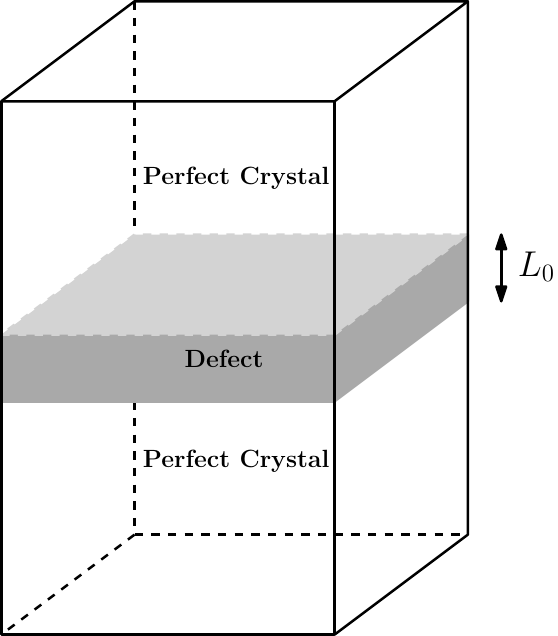}
\caption{Typical geometry of quasi-planar defects studied in this work. The defective region can be thought of as a thin film that is sandwiched between two perfect crystals of the same material and identical orientation. Periodicity is assumed along the in-plane directions of the defect. Note that we do \emph{not} assume that the nuclear distribution in the defective region is homogeneous.}
\label{fig:planar_defect_schematic}
\end{figure}

For the thermodynamic limit analysis, we consider supercells of the form $\Gamma_L := [-\frac{1}{2},\frac{1}{2}) \times [-\frac{1}{2},\frac{1}{2})  \times [-\frac{L}{2},\frac{L}{2})$ for some $L \in \mathbb{N}$ such that $L$ is odd and $L > L_0$; we will denote by $\Gamma_\infty$ the unit cell in the limit $L \to \infty$, just as in section~\ref{sec:thinfilm}. Let $\mathcal{M}_{xy}$ denote the set of all densities in $\mathcal{M}^M$ that are also periodic along the $x$ and $y$ directions with unit period. Given $m \in \mathcal{M}_{xy}$, we construct a periodic nuclear density $m_L \in \mathcal{M}_{xy} \cap L^2_{\text{per}}(\Gamma_L)$ as follows:
\begin{equation} \label{eq:periodic_nuclear_density}
m_L(x) = \sum_{y \in \mathcal{R}_L} (\chi_{\Gamma_L}m)(x - y).
\end{equation}
Here $\chi_A$ stands for the characteristic function of the set $A \subseteq \mathbb{R}^3$, and $\mathcal{R}_L$ denotes the lattice $\mathbb{Z} \times \mathbb{Z} \times L\mathbb{Z}$. It is easily verified that $m_L$ converges uniformly to $m$ on any compact set in the limit $L \to \infty$. Note that $m_L\vert_{\Gamma_L}=m\vert_{\Gamma_L}$. A visual representation of this procedure to obtain a periodic supercell is shown in Figure~\ref{fig:supercell_construction}.


\begin{figure}[h]
\begin{subfigure}{0.4\textwidth}
\includegraphics[width=0.9\textwidth]{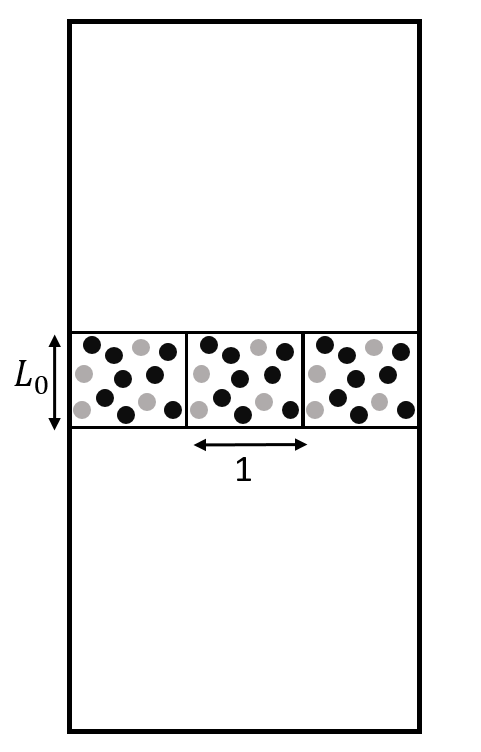}
\label{fig:m}
\caption{$m$}
\end{subfigure}
~
\begin{subfigure}{0.4\textwidth}
\includegraphics[width=0.9\textwidth]{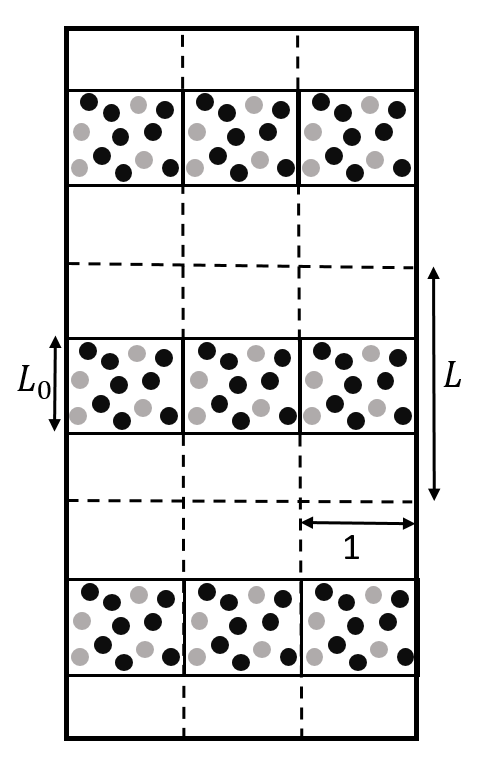}
\label{fig:mL}
\caption{$m_L$}
\end{subfigure}
\caption{Procedure to generate a supercell with periodic nuclear density $m_L$ from a given nuclear density $m \in \mathcal{M}_{xy}$. The original system is sliced into boxes with dimension $L$ along the out-of-plane direction of the planar defect and periodically tiled over $\mathbb{R}^3$.}
\label{fig:supercell_construction}
\end{figure}

Consider now a perfect cubic crystal with unit periodicity along each direction, and with nuclear charge distribution $m_1 \in \mathcal{M}^M$. Clearly, $m_1 \in \mathcal{M}_{xy}$ and $m_{1,L}=m_1$. The corresponding electron density $\rho_1 = u_1^2$, which is strictly positive, continuous, and  periodic, and potential $\phi_1$, do not depend on $L$ due to the uniqueness of the ground state electron density; see Theorem \ref{thm:TFW_smeared_EL_exst_uniq} and Theorem~\ref{thm:TFW_supercell_exst_uniq}. Without loss of generality we assume that $u_1 > 0$ and $\int_{\Gamma_1}m_1=1$; charge neutrality and periodicity then implies that $\int_{\Gamma_L} u_1^2 = \int_{\Gamma_L} m_1 = L$. Note that using~\ref{thm:TFW_supercell_exst_uniq}, $u_1 \in L^\infty(\mathbb{R}^3)$. Further assume that $m_1$ is smooth, which implies $\phi_1 \in L^\infty(\mathbb{R}^3)$ \cite{Blanc06}.

Within this context, we consider special perturbations of the perfect crystal that constitute a class of quasi-planar defects characterized by perturbed nuclear densities, $m_2$, of the form $m_2 = m_1 + \nu$, where $\nu$ belongs to the following set:
\begin{equation} \label{eq:nearly_planar_perturbation}
\mathcal{P} = \{ \nu \in \mathcal{M}_{xy} \;:\; m_1 + \nu \ge 0, \;  \exists\; L_0 \in \mathbb{N}, L_0 \text{ odd, }\text{ s.t. } \text{supp}(\nu\vert_{\Gamma_\infty})  \subset \overline{\Gamma}_{L_0}\}.
\end{equation}
The final condition above restricts the perturbation $\nu$ to be confined to a finite region of width $L_0$ along the $z$ direction. Note that $m_2 \in \mathcal{M}_{xy}$ and $\int_{\Gamma_\infty}\nu$ is a constant that depends on the difference in number of atoms between the original and perturbed systems. 

For such perturbations, we consider successively a series of nuclear densities $m_{2,L}$---constructed according to the procedure shown in equation~\eqref{eq:periodic_nuclear_density} and Figure~\ref{fig:supercell_construction}---for increasingly large values of $L$, and study the corresponding periodic supercell problem. In particular, we demonstrate the following:
\begin{enumerate}[(i)]
\item The ground state solution $(u_{2,L}, \phi_{2,L})$ of the supercell with nuclear density $m_{2,L}$ converges in the limit $L \to \infty$ to the pair $(u_2,\phi_2)$ that solves the following Schr\"odinger-Poisson system on $\mathbb{R}^3$:
\begin{equation} \label{eq:TFW_converged_supercell_pde}
\begin{split}
-\nabla^2 u_2 &+ \frac{5}{3}u_2^{7/3} + \phi_2 u_2 = 0,\\
-\nabla^2 \phi_2 &= 4\pi(u_2^2 - m_2).
\end{split}
\end{equation}
 As a direct consequence of proposition~\ref{prop:ortner_TFW_stability_self}, we show that $u_2$ exponentially decays to $u_1$ outside the perturbation core $\Gamma_{L_0}$, thereby establishing the locality of the electron density for the considered class of nuclear perturbations. Proofs of these statements are covered in Lemma~\ref{thm:denconvg} and Lemma~\ref{thm:density_decay}.
\item Denoting by $I^L_{m_1}$ and $I^L_{m_1 + \nu}$ the energy per unit supercell for the nuclear distributions $m_{1,L}$ and $m_{2,L}$, respectively, we show that the difference in energy between the perturbed system and the reference perfect crystal, namely
\begin{equation} \label{eq:gamma_L_defn}
\gamma^L_1(\nu) = I^L_{m_1 + \nu} - I^L_{m_1},
\end{equation}
converges in the limit $L \to \infty$ to a finite value, as a consequence of the locality of the electronic fields. It is worth emphasizing that both $I^L_{m_1}$ and $I^L_{m_1 + \nu}$ are infinitely large in the limit $L \to \infty$. A proof of this statement is provided in Theorem~\ref{thm:convggam}.
\item Finally, we establish a variational principle for the relative energy introduced above. In particular, we demonstrate that there exists a functional $\mathcal{E}_{m_1}^{\nu}$ such that 
\begin{equation} \label{eq:gamma_L_conv}
\lim_{L \to \infty} \gamma^L_1(\nu) = \inf_{w}\mathcal{E}^\nu_1(w).
\end{equation}
More precisely, we prove the following result:
\begin{theorem*}
The electron density $u_2^2$ of the perturbed problem is such that $u_2 - u_1$ minimizes the following energy functional:
\begin{equation*}
\lim_{L \to \infty} \gamma^L_1(\nu) := \gamma^\nu_1 = \inf_{w \in \mathcal{Q}_1^{\nu}} \; \mathcal{E}^\nu_1(w) = \mathcal{E}^\nu_1(u_2 - u_1),
\end{equation*}
where 
\begin{displaymath}
\begin{split}
\mathcal{E}^\nu_1(w) &= \int_{\Gamma_\infty} \lVert\nabla w\rVert^2 + \int_{\Gamma_\infty} (|u_1 + w|^{10/3}-u_{1}^{10/3} - \frac{10}{3}u_1^{7/3}w)\\ 
 & + \int_{\Gamma_\infty} \phi_{1} (w^2 - \nu) + \frac{1}{2}D_G((u_1 + w)^2 - u_1^2 -\nu, (u_1 + w)^2 - u_1^2 -\nu),\\
\mathcal{Q}_1^{\nu} &= \{w \in H^1_{per}(\Gamma_\infty) \;:\; u_1 + w \geq 0, |x_3|u_1w \in L^1(\Gamma_\infty), |x_3|w^2 \in L^1(\Gamma_\infty),\\
 &\qquad\qquad\qquad\qquad\int_{\Gamma_\infty} (u_1 + w)^2 - u_1^2 = \int_{\Gamma_\infty}\nu \}.
\end{split}
\end{displaymath}    
\end{theorem*}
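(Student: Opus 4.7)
The plan is to prove the variational characterization in four stages: (i) an algebraic identity $\gamma^L_1(\nu) = \mathcal{E}^{\nu_L}_{1,L}(u_{2,L}-u_1)$ on each finite supercell, where $\mathcal{E}^{\nu_L}_{1,L}$ denotes the analog of $\mathcal{E}^\nu_1$ with integrals over $\Gamma_L$ and $D_{G_L}$ in place of $D_G$; (ii) passage to the limit $L\to\infty$; (iii) admissibility of $u_2-u_1$; and (iv) the minimization principle via approximation by supercell problems.

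For Step~1, set $w_L := u_{2,L}-u_1$ and expand $I^L_{m_{2,L}}(u_{2,L}) - I^L_{m_1}(u_1)$ term by term.  The gradient cross term $2\int_{\Gamma_L}\nabla u_1\cdot\nabla w_L$ is handled by integrating by parts on the $\Gamma_L$-periodic cell and invoking the Euler--Lagrange equation $-\nabla^2 u_1 + \tfrac{5}{3}u_1^{7/3} + \phi_1 u_1 = 0$; this produces $-\tfrac{10}{3}\int u_1^{7/3}w_L - 2\int \phi_1 u_1 w_L$.  The Thomas--Fermi increment is regrouped as $\int(|u_1+w_L|^{10/3}-u_1^{10/3}-\tfrac{10}{3}u_1^{7/3}w_L) + \tfrac{10}{3}\int u_1^{7/3}w_L$.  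Expanding the Coulomb difference bilinearly around the perfect-crystal term $\tfrac{1}{2}D_{G_L}(u_1^2-m_1, u_1^2-m_1)$ leaves $\int\phi_1((u_1+w_L)^2-u_1^2-\nu_L) + \tfrac{1}{2}D_{G_L}((u_1+w_L)^2-u_1^2-\nu_L,\cdot)$; the linear coupling $2\int\phi_1 u_1 w_L$ cancels against its counterpart from the gradient manipulation, leaving exactly $\int\phi_1(w_L^2-\nu_L)$.  Collecting terms reproduces $\mathcal{E}^{\nu_L}_{1,L}(w_L)$.

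For Steps~2--3, Lemma~\ref{thm:denconvg} supplies $u_{2,L}\to u_2$, and Proposition~\ref{prop:ortner_TFW_stability_self} together with Lemma~\ref{thm:density_decay} gives uniform exponential decay of $|u_{2,L}-u_1|$ and $|u_2-u_1|$ away from $\Gamma_{L_0}$.  These facts dominate the gradient, Thomas--Fermi, and $\phi_1 w^2$ integrands by an exponentially decaying function of $|x_3|$, uniformly in $L$, so those terms pass to the limit.  The Coulomb term requires controlling $D_{G_L}(h_L,h_L)$ with $h_L = (u_1+w_L)^2 - u_1^2 - \nu_L$ in a weighted $L^2$ norm and establishing convergence of the $\Gamma_L$-periodic Green's function to the slab Green's function on bounded subsets of $\Gamma_\infty$, which follows from the thin-film machinery referenced in Section~\ref{sec:mathematical_background}.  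Combining with Theorem~\ref{thm:convggam} yields $\mathcal{E}^\nu_1(u_2-u_1) = \gamma^\nu_1$.  Admissibility is then immediate: $u_2-u_1 \in H^1_{\mathrm{per}}(\Gamma_\infty)$ and $u_1 + (u_2-u_1) \ge 0$ by construction; exponential decay together with boundedness of $u_1$ gives $|x_3|u_1(u_2-u_1)$, $|x_3|(u_2-u_1)^2 \in L^1(\Gamma_\infty)$; and the charge condition follows from joint neutrality of $m_1$ and $m_1+\nu$ combined with the convergence of the densities.

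For Step~4, given a competitor $\tilde w \in \mathcal{Q}^\nu_1$, I would build a $\Gamma_L$-periodic approximant $\tilde w_L$ by smoothly truncating $\tilde w$ outside a slab of width $L-O(1)$ along $x_3$, adjusted by a small additive corrector to restore the supercell charge constraint $\int_{\Gamma_L}((u_1+\tilde w_L)^2-u_1^2)=\int_{\Gamma_L}\nu_L$.  The weighted integrability conditions built into the definition of $\mathcal{Q}^\nu_1$ are precisely what is needed to show $\mathcal{E}^{\nu_L}_{1,L}(\tilde w_L) \to \mathcal{E}^\nu_1(\tilde w)$.  Since $u_{2,L}$ is the unique TFW minimizer for $m_{2,L}$ under its supercell charge constraint, the Step~1 identity gives $\mathcal{E}^{\nu_L}_{1,L}(w_L) \le \mathcal{E}^{\nu_L}_{1,L}(\tilde w_L)$; passing to $L\to\infty$ yields $\mathcal{E}^\nu_1(u_2-u_1) \le \mathcal{E}^\nu_1(\tilde w)$, and strict convexity of the TFW functional in the density $\rho=(u_1+w)^2$ rules out other minimizers.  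The main obstacle is this construction: the truncation must simultaneously preserve $\Gamma_L$-periodicity, enforce the supercell charge constraint exactly, and keep the Coulomb energy---whose kernel varies with $L$ from a fully $\Gamma_L$-periodic Green's function to the slab Green's function---uniformly convergent.  Managing this $L$-dependence of the Green's function and verifying that the corrector does not inflate the energy in the limit is the technical heart of the argument.
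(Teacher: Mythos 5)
Your proposal takes a genuinely different route from the paper's: you work with the supercell ($\Gamma_L$-periodic) family, whereas the paper deliberately switches to the thin-film machinery of \cite{BTF00}. The paper is quite explicit about this choice, remarking at the start of Section~\ref{sec: minproblem} that the thin-film approach was chosen ``over the approach involving supercells (Section~4.7 in \cite{CE11}) since the latter was not amenable to the derivation of a minimization principle.'' Your Steps~1--3 are essentially the content of Theorem~\ref{thm:convggam} and Corollary~\ref{cor:establish_green_function}, which are already proved and which you correctly invoke; the new work lives entirely in your Step~4, and that is precisely the part you leave open.

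The specific obstruction you flag --- ``the Coulomb kernel varies with $L$ from a fully $\Gamma_L$-periodic Green's function to the slab Green's function'' --- is not merely technical overhead; it is the structural reason the paper abandons supercells. In the thin-film formulation the Coulomb kernel is always the fixed slab kernel $G(x)=-2\pi|x_3|+\cdots$, independent of the film thickness $H$, so the only $H$-dependence enters through the reference fields $(\tilde u_{1,H},\psi_{1,H})$ and the admissible set $X^{\tilde m_{2,H}}$, both of which the paper controls via pointwise-a.e.\ convergence, Fatou's lemma, and the positivity of the linearized operator (Lemma~\ref{lem:Lposdef}). In your Step~4 you would instead have to prove $D_{G_L}(h_L,h_L)\to D_G(h,h)$, where $G_L$ carries a uniform compensating background charge $|\Gamma_L|^{-1}$ and is periodic in $x_3$ with period $L$, while a competitor $\tilde w\in\mathcal Q_1^\nu$ may decay only as fast as the weight conditions allow. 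The truncated tails near $x_3\approx\pm L/2$ are ``adjacent'' in the $\Gamma_L$-periodic topology, so their images interact through $G_L$ in a way that has no counterpart in $D_G$, and the charge-restoring corrector must be placed somewhere and its self- and cross-Coulomb energies shown to vanish. The paper's thin-film corrector argument (the $t_{H,\epsilon}g_\epsilon\tilde u_{1,H}$ construction in Step~2, with the bound $|G(x)+2\pi|x_3|-1/|x||\in L^\infty$) leans on the explicit form of the slab kernel; there is no analogous clean estimate for $G_L$ uniformly in $L$. So while your outline correctly locates the technical heart, it does not resolve it, and the paper's own attempt along this route apparently stalled at the same point. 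To make your version rigorous you would need a quantitative comparison $G_L\approx G + \text{const}$ in a weighted norm compatible with the $|x_3|$-weighted $L^1$ conditions defining $\mathcal Q$, plus a proof that the corrector's Coulomb contribution tends to zero --- neither of which is standard, and neither of which is supplied.
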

A proof of this theorem is presented in the next section as Theorem~\ref{thm:reflexivemin}.
\end{enumerate}

The second result listed above justifies the definition of a \emph{surface energy} for a given perturbation $\nu \in \mathcal{P}$; we clarify this with an application of these results to the case of a stacking fault in a perfect crystal. The third result listed above is a key contribution of this work, and establishes a variational principle for the relative energy.

\section{Thermodynamic limit analysis of electron density and surface energy}
This section presents a detailed analysis of the key results outlined in the previous section. 

\subsection{Convergence of electron density}
The convergence of the electron density of the supercell for the perturbed nuclear configuration follows as an easy consequence of the results summarized in section~\ref{sec:mathematical_background}. We also establish, following section 2 of ~\cite{NO17}, the rate of convergence and the decay property of the electron density away from the nuclear perturbation.  

Let $m_2 \in \mathcal{M}$ such that it is periodic with unit period in both the 
$x$ and $y$ directions. Then using theorem~\ref{thm:TFW_smeared_EL_exst_uniq} we obtain that there exists $(u_2, \phi_2)$ which unique solution in $L^{7/3}_{\text{loc}}\cap L^2_{\text{unif}}(\mathbb{R}^3) \times L^1_{\text{unif}}(\mathbb{R}^3)$ of the following system of PDEs:
\begin{displaymath}
    \begin{split}
        &-\nabla^2 u_2 + \frac{5}{3}u_2^{7/3} + \phi_2 u_2 = 0, \\
        & u_2 \geq 0,\\
&-\nabla^2 \phi_2 = 4\pi(u_2^2 - m_2). \\
    \end{split}
\end{displaymath}
Further, $\inf_{\mathbb{R}^3} u > 0$, $u \in L^\infty(\mathbb{R}^3) \cap H^2_{loc}(\mathbb{R}^3)$ and both $u_2$ and $\phi_2$ are periodic on $\Gamma_\infty$.

We then obtain periodic density $m_{2,L} \in L^2_{\text{per}}(\Gamma_L)$ using the procedure specified in equation~\eqref{eq:periodic_nuclear_density}. Theorem~\ref{thm:TFW_supercell_exst_uniq} then guarantees the existence of a unique continuous ground state density $\rho_{2,L} = u_{2,L}^2 > 0$ and potential $\phi_{2,L}$ for every $L$. Without loss of generality, we assume $u_{2,L} >0$. We now prove the first result listed in the previous section, namely that the electron density $u_{2,L}^2$ and potential $\phi_{2,L}$ converge in the limit $L \to \infty$ to the unique solution $u_2^2$ and $\phi_2$, respectively, of a Schr\"odinger-Poisson system of PDEs.

\begin{lemma}
\label{thm:denconvg} Let $\nu = m_2-m_1$ be such that $\nu|_{\Gamma_\infty}$ is compactly supported in $\Gamma_\infty$ and belongs to $L^2(\Gamma_\infty)$. Further, let $v_L = u_{2,L} - u_1$ and $\phi_L = \phi_{2,L} - \phi_1$ be such that $\|v_L\|_{H^2(\Gamma_L)}$ and $\|\phi_L\|_{L^2(\Gamma_L)}$ are bounded independently of $L$. Then  
$u_{2,L}$ converges to $u_2$ in $H^1_{loc}(\mathbb{R}^3)$ and in $L^p_{loc}(\mathbb{R}^3)$ for $1 \leq p < 6$, and $\phi_{2,L}$ converges to $\phi_2$ in $L^p_{loc}(\mathbb{R}^3)$ for $1 \leq p < 6$.  
\end{lemma}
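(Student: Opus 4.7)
The plan is a standard compactness/uniqueness argument: use the uniform-in-$L$ bounds on $v_L$ and $\phi_L$ to extract a locally convergent subsequence, pass to the limit in the Euler--Lagrange system, and invoke the uniqueness clause of Theorem~\ref{thm:TFW_smeared_EL_exst_uniq} to identify the limit as $(u_2, \phi_2)$ and thereby upgrade to convergence of the full sequence.

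First I would recast the hypothesis as local estimates. For any fixed $R > 0$ and all $L > R$, $\|v_L\|_{H^2(\Gamma_R)} \leq \|v_L\|_{H^2(\Gamma_L)} \leq C$. Since $H^2 \hookrightarrow L^\infty$ in three dimensions, this gives uniform $L^\infty_{\text{loc}}$ control on $u_{2,L} = u_1 + v_L$ (recalling $u_1 \in L^\infty$). The Poisson equation
\begin{equation*}
-\nabla^2 \phi_L = 4\pi\bigl(2 u_1 v_L + v_L^2 - (m_{2,L} - m_1)\bigr)
\end{equation*}
then has right-hand side uniformly bounded in $L^2_{\text{loc}}$, and interior elliptic regularity combined with the hypothesized bound on $\|\phi_L\|_{L^2(\Gamma_L)}$ yields a uniform-in-$L$ estimate on $\|\phi_L\|_{H^2(\Gamma_R)}$ for every $R$. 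Banach--Alaoglu plus a diagonal extraction then produces a subsequence (not relabeled) with $v_L \rightharpoonup \tilde{v}$ and $\phi_L \rightharpoonup \tilde{\phi}$ weakly in $H^2_{\text{loc}}(\mathbb{R}^3)$, and Rellich--Kondrachov upgrades this to strong convergence in $H^1_{\text{loc}}$ and in $L^p_{\text{loc}}$ for every $p \in [1, 6)$, together with a.e.\ pointwise convergence along a further subsequence.

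Second, setting $\tilde{u} := u_1 + \tilde{v}$ and $\tilde{\Phi} := \phi_1 + \tilde{\phi}$, I would pass to the limit in the Euler--Lagrange system satisfied by $(u_{2,L}, \phi_{2,L})$. The Laplacian terms pass by weak $L^2_{\text{loc}}$ convergence of second derivatives. The nonlinearities $u_{2,L}^{7/3}$, $u_{2,L}^2$, and $\phi_{2,L} u_{2,L}$ pass by dominated convergence, combining the uniform $L^\infty_{\text{loc}}$ bounds with a.e.\ convergence. The nuclear term converges trivially because $m_{2,L}|_{\Gamma_L} = m_2|_{\Gamma_L}$ by construction, so $m_{2,L} = m_2$ on any compact set once $L$ is large enough. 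Hence $(\tilde{u}, \tilde{\Phi})$ solves~\eqref{eq:TFW_converged_supercell_pde}. To invoke Theorem~\ref{thm:TFW_smeared_EL_exst_uniq}, I would apply Fatou to the uniform-in-$R$ Sobolev bounds above, obtaining $\tilde{v}, \tilde{\phi} \in H^2(\mathbb{R}^3) \hookrightarrow L^\infty(\mathbb{R}^3)$ and hence $\tilde{u} \in L^\infty \subset L^2_{\text{unif}}$ and $\tilde{\Phi} \in L^\infty \subset L^1_{\text{unif}}$, placing the pair in the uniqueness class. Since every subsequence of $\{(u_{2,L}, \phi_{2,L})\}$ admits a further subsequence with this same limit, the full sequences converge as claimed.

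The main obstacle I foresee is not the compactness itself---the hypotheses of the lemma are tailored to make it work---but rather verifying that the limit pair genuinely lies in the uniqueness class $L^{7/3}_{\text{loc}} \cap L^2_{\text{unif}} \times L^1_{\text{unif}}$ of Theorem~\ref{thm:TFW_smeared_EL_exst_uniq}. The fact that the $H^2$ bound on $v_L$ is taken on the \emph{entire} growing cell $\Gamma_L$ and is uniform in $L$ is essential: it propagates via Fatou to a global $H^2(\mathbb{R}^3)$ bound on $\tilde{v}$, which is precisely what places $\tilde{u}$ in the uniqueness class and enables the identification with $u_2$.
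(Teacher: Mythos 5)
Your proof is correct and reaches the same conclusion, but the compactness step is implemented differently from the paper's. The paper derives a uniform $H^1(\Gamma_L)$ bound on $\phi_L$ by pairing the Poisson equation $-\nabla^2\phi_L = 4\pi(v_L^2 + 2u_1 v_L - \nu)$ with $\phi_L$ and integrating by parts, then invokes a purpose-built compactness lemma for sequences uniformly bounded in $H^1_{per}(\Gamma_L)$ (the paper's Lemma~\ref{lem:denconvg_help}, modeled on Lemma~4.6 of \cite{CE11}) to extract a subsequence converging weakly in $H^1_{\text{loc}}$, strongly in $L^p_{\text{loc}}$ for $1 \le p < 6$, and a.e. To recover \emph{strong} $H^1_{\text{loc}}$ convergence of $v_L$, the paper then applies the same lemma a second time to $\nabla v_L$ (exploiting the $H^2(\Gamma_L)$ bound). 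Your route instead extracts $L^\infty_{\text{loc}}$ control on $v_L$ via the Sobolev embedding, bootstraps a uniform $H^2_{\text{loc}}$ bound on $\phi_L$ by interior elliptic regularity, and gets all the required convergences in one pass from weak $H^2_{\text{loc}}$ compactness plus Rellich--Kondrachov. This is a heavier-hammer elliptic-regularity argument where the paper uses a cheap energy estimate plus a dedicated compactness lemma, but it gives you the $H^1_{\text{loc}}$ convergence of $v_L$ for free rather than by a separate bootstrap. You are also more explicit than the paper about verifying that the limit pair lands in the uniqueness class $L^{7/3}_{\text{loc}}\cap L^2_{\text{unif}} \times L^1_{\text{unif}}$ of Theorem~\ref{thm:TFW_smeared_EL_exst_uniq}---the paper passes over this point---though you should be careful that the Fatou argument yields a bound in $H^2$ over the semi-infinite periodicity cell $\Gamma_\infty$ (equivalently $H^2_{\text{per}}$), \emph{not} $H^2(\mathbb{R}^3)$, since the functions are periodic in $x_1, x_2$ and hence cannot lie in a global Sobolev space on $\mathbb{R}^3$; the intended conclusion $\tilde v, \tilde\phi \in L^\infty(\mathbb{R}^3)$ still follows from the periodic-cell embedding, so the logic survives this correction.
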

\begin{proof}
 From theorem~\ref{thm:TFW_supercell_exst_uniq},
\begin{equation} 
-\nabla^2 \phi_L = 4\pi (v_L^2 +2u_1v_L -\nu).
\end{equation}
That implies that,
$$ \int_{\Gamma_L} \|\nabla \phi_L\|^2 = 4\pi \int_{\Gamma_L}(v_L^2 +2u_1v_L -\nu)\phi_L.$$ 
Therefore, $\|\phi_L\|_{H^1(\Gamma_L)}$ is bounded independently of $L$, since $\|\phi_L\|_{L^2(\Gamma_L)}$ is bounded independent of $L$. Using lemma~\ref{lem:denconvg_help} (which follows Lemma 4.6 in \cite{CE11}) we get that $v_L$ and $\phi_L$, along a subsequence, convergence to some $v \in H^1_{loc}(\mathbb{R}^3)$ and $\phi \in H^1_{loc}(\mathbb{R}^3)$, respectively, weakly in $H^1_{loc}(\mathbb{R}^3)$, strongly in $L^p_{loc}(\mathbb{R}^3)$ for $1 \leq p < 6$ and almost everywhere in $\mathbb{R}^3$.

Equivalently, $u_{2,L}$ converges to $u_1 + v$ and $\phi_{2,L}$ converges to $\phi_1 + \phi$. Since $u_{2,L}$ is positive for any $L$, that implies $u_1+v$ is non-negative. Taking the limit, in the sense of distributions, of equation~\eqref{eq:TFW_smeared_EL} corresponding to $m_{2,L}$ we infer that $u_1 +v$ and $\phi_1 + \phi$ satisfy
\begin{displaymath}
    \begin{split}
        &-\nabla^2 (u_1 + v) + \frac{5}{3}(u_1 + v)^{7/3} + (\phi_1 + \phi) (u_1 + v) = 0, \\
        & (u_1+v) \geq 0, \\
&-\nabla^2 (\phi_1 + \phi) = 4\pi((u_1 + v)^2 - m_2).
    \end{split}
\end{displaymath}

Using theorem~\ref{thm:TFW_smeared_EL_exst_uniq}, the above set of equations has a unique solution which is already satisfied by $(u_2, \phi_2)$. Hence 
$u_2 = u_1+v$ and $\phi_2=\phi_1 + \phi$. We can use the uniqueness of the solution to remove the subsequence condition for convergence in $L^p_{loc}(\mathbb{R}^3)$ for $1 \leq p < 6$. 

Finally, let $\nabla v_L$ converge to some $g \in H^1_{loc}(\mathbb{R}^3; \mathbb{R}^3 )$ in $L^2_{loc}(\mathbb{R}^3; \mathbb{R}^3)$,  using lemma~\ref{lem:denconvg_help} and the fact that $\|v_L\|_{H^2(\mathbb{R}^3)} $ is bounded independently of $L$. Since $v_L$ converge to $v$ weakly in $H^1_{loc}(\mathbb{R}^3)$ we get $g = \nabla v$. Hence $v_L$ converges to $v$ in $H^1_{loc}(\mathbb{R}^3)$.
\end{proof}

We showed in Lemma~\ref{thm:denconvg} that both $v_L$ and $\phi_L$, when they are uniformly bounded in some sense,  converge in $L^p_{loc}(\mathbb{R}^3)$ for $1 \leq p < 6$ to bounded functions $v = u_2 - u_1$ and $\phi = \phi_2 -\phi_1$, respectively. Now let \( \nu \in \mathcal{P} \) is a prescribed quasi-planar nuclear perturbation as defined in equation~\eqref{eq:nearly_planar_perturbation}. Then  \( m_2 = m_1 + \nu \) implies $m_2 \in \mathcal{M}_{xy}$. The following result establishes the decay properties of $v$ and $\phi$ away from the core of the nuclear perturbation $\Gamma_{L_0}$, and also shows the rate of convergence of $v_L$ and $\phi_L$ to $v$ and $\phi$, respectively.

\begin{lemma}
\label{thm:density_decay}
 For every $y \in \Gamma_\infty \setminus \Gamma_{L_0}$, it holds that $$|\partial_s v(y)| + |\phi(y)| \leq k_1e^{-k_2 dist(y, \Gamma_{L_0})},$$ for positive constants $k_1, k_2$ that depend only on the uniform bounds on $m_1$, and $s$ is a multi-index such that $|s| \in \{0, 1, 2\}$.
In addition, for $L > L_0$, $$\|u_{2,L}-u_2\|_{W^{k,p}(\Gamma_L)} + \|\phi_{2,L}-\phi_2\|_{L^p(\Gamma_L)} \leq k_3e^{-k_4(L-L_0)}$$ for $k\in \{0, 1, 2\}$, $1 \leq p \leq \infty$, and positive constants $k_3, k_4$ that depend only on uniform bounds of $m_2$ and $L_0$. 
\end{lemma}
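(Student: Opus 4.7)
The approach is to derive both bounds as corollaries of the TFW stability estimates in Proposition~\ref{prop:ortner_TFW_stability_self} and Theorem~\ref{thm:ortner_TFW_stability}, applied to different pairs of nuclear densities. The key geometric observation is that because $\nu \in \mathcal{P}$ is $xy$-periodic with its restriction to $\Gamma_\infty$ supported in $\overline{\Gamma}_{L_0}$, its full extension to $\mathbb{R}^3$ is supported in the slab $S := \mathbb{R}^2 \times [-L_0/2, L_0/2]$. For the first bound I would invoke Proposition~\ref{prop:ortner_TFW_stability_self} on the pair $(u_1,\phi_1), (u_2,\phi_2)$ with $\Omega := \mathbb{R}^3 \setminus S$: since $m_2 - m_1 = \nu$ vanishes on $\Omega$, the hypothesis is met, and for any $y \in \Gamma_\infty \setminus \Gamma_{L_0}$ one has $(y_1,y_2) \in [-1/2,1/2)^2$ and $|y_3| > L_0/2$, so $\text{dist}(y, \partial\Omega) = |y_3| - L_0/2 = \text{dist}(y, \Gamma_{L_0})$. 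Taking square roots of the proposition's estimate gives the first claim.

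For the second bound I would apply Theorem~\ref{thm:ortner_TFW_stability} to $(u_{2,L}, \phi_{2,L})$ and $(u_2, \phi_2)$. A direct check of the construction \eqref{eq:periodic_nuclear_density}, using that $m_1$ is $\mathbb{Z}$-periodic along $z$ and therefore $m_{1,L} = m_1$, shows that $m_{2,L} = m_2$ on $\Gamma_L$ while $m_{2,L} - m_2$ is supported on $\bigcup_{k \neq 0} \mathbb{R}^2 \times [kL - L_0/2, kL + L_0/2]$ with a uniform $L^\infty$ bound inherited from $m_2$. For $y \in \Gamma_L$, the stability integral on the right of Theorem~\ref{thm:ortner_TFW_stability} reduces to a sum of planar integrals of $e^{-C_2\sqrt{|x'-y'|^2 + (x_3 - y_3)^2}}$ over strips of thickness $L_0$ in $z$. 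The inner planar integral evaluates in polar coordinates to something of order $(a + 1)e^{-C_2 a}$ with $a := |x_3 - y_3|$, and for $x_3$ in the $k$-th slab with $y \in \Gamma_L$ one has $a \geq (|k| - 1/2)L - L_0/2$. The $k = \pm 1$ terms dominate the summation, yielding an overall bound of order $L\,e^{-C_2(L-L_0)/2}$ for the stability integral, uniformly in $y \in \Gamma_L$. Plugged back into Theorem~\ref{thm:ortner_TFW_stability}, this gives the pointwise estimate
\[ \|u_{2,L} - u_2\|_{W^{2,\infty}(\Gamma_L)} + \|\phi_{2,L} - \phi_2\|_{L^\infty(\Gamma_L)} \lesssim L^{1/2}\, e^{-C_2(L - L_0)/4},\]
and for $1 \leq p < \infty$ the elementary inequality $\|\cdot\|_{L^p(\Gamma_L)} \leq L^{1/p}\|\cdot\|_{L^\infty(\Gamma_L)}$ adds a polynomial factor in $L$ that is absorbed into the exponential by slightly shrinking the rate.

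The main technical obstacle is the in-plane integration in the stability estimate: since $\nu$ is confined in $z$ but $xy$-periodic, one has to verify that integrating $|m_{2,L} - m_2|^2 e^{-C_2\|x-y\|}$ over unbounded $xy$-planes does not destroy the exponential decay in $L - L_0$. This is resolved by the polar computation sketched above, which shows that the in-plane integration contributes only a polynomial prefactor in the $z$-separation while preserving the sharp exponential rate $(L-L_0)/2$ governed by the gap to the nearest periodic image of the defect. The remaining steps—upgrading the pointwise bound to $W^{k,p}$ and $L^p$ norms for $1 \leq p < \infty$, and absorbing polynomial factors in $L$ into the exponential at the cost of a slightly smaller $k_4$—are routine.
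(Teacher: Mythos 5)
Your proof is correct and, for the first bound, essentially identical to the paper's: both apply Proposition~\ref{prop:ortner_TFW_stability_self} to $(m_1,m_2)$ on $\Omega = \mathbb{R}^3\setminus\overline{\Omega}_{L_0}$ and observe that $\mathrm{dist}(y,\partial\Omega)=\mathrm{dist}(y,\Gamma_{L_0})$ for $y\in\Gamma_\infty\setminus\Gamma_{L_0}$.

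For the second bound you take a different (and more laborious) route. You invoke the integral form of the stability estimate, Theorem~\ref{thm:ortner_TFW_stability}, and then explicitly evaluate $\int|m_{2,L}-m_2|^2 e^{-C_2\|x-y\|}dx$ by identifying the periodic replicas of the defect on $\bigcup_{k\neq 0}\mathbb{R}^2\times[kL-L_0/2,\,kL+L_0/2]$ and performing the in-plane polar integration. This works, but it re-derives what Proposition~\ref{prop:ortner_TFW_stability_self} already packages up. The paper instead observes that $m_{2,L}=m_2$ on the interior of $\Omega_{2L-L_0}$ (the slab free of the $k=\pm1$ replicas) and applies the same Proposition~\ref{prop:ortner_TFW_stability_self} a second time, giving the pointwise bound $c_1 e^{-c_2\,\mathrm{dist}(y,\partial\Omega_{2L-L_0})}$ with $\mathrm{dist}(y,\partial\Omega_{2L-L_0})\geq (L-L_0)/2$ for $y\in\Gamma_L$. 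Integrating over $\Gamma_L$ then yields the $W^{k,p}$ and $L^p$ estimates for all $1\leq p\leq\infty$ without any in-plane analysis, since the proposition has already absorbed the unbounded $xy$-integration. Both routes give the same rate up to harmless polynomial prefactors, but the paper's is considerably shorter; the integral computation you sketch is precisely the proof of Proposition 4.1 in \cite{NO17}, so your argument effectively unwinds and re-proves that proposition rather than simply citing it. Two minor slips worth noting: the condition $y\in\Gamma_\infty\setminus\Gamma_{L_0}$ allows $y_3=L_0/2$ (distance zero, bound trivial), so you should write $|y_3|\geq L_0/2$; and the $L^2_{\mathrm{unif}}$ bound on $\nu$ controls integrals over unit balls, not 2D slice integrals, so the polar-coordinate estimate should be organized by first combining the $x_3$ and $x'$ integrations over unit boxes before summing, rather than slicing in $x_3$ — the conclusion is unchanged but the intermediate steps as written are not quite rigorous.
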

\begin{proof}
Define $\Omega_{L} = (-\infty, \infty) \times (-\infty, \infty) \times (-L/2, L/2)$ for all $L$. Since $m_2$ is identical to $m_1$ on the open set $\mathbb{R}^3 \setminus \overline{\Omega}_{L_0}$, we have using proposition~\ref{prop:ortner_TFW_stability_self} that
$$\sum_{|s|\leq 2}|\partial_s (u_2 - u_1)(y)|^2 + |(\phi_2-\phi_1)(y)|^2 \leq \frac{k_1^2}{4}e^{-2k_2 dist(y, \Omega_{L_0})},$$ 
for every $y \in \Gamma_\infty \setminus \Gamma_{L_0}$ and the positive constants $k_1, k_2$ depend only on the uniform bounds on $m_1$. Each term on the left hand side is bounded by $\frac{k_1}{2}e^{-k_2 dist(y, \Omega_{L_0})}$. Hence, using $v=u_2-u_1$, $\phi=\phi_2-\phi_1$ we get, 
$$|\partial_s v(y)| + |\phi(y)| \leq k_1e^{-k_2 dist(y, \Gamma_{L_0})},$$
for every $y \in \mathbb{R}^3 \setminus \overline{\Omega}_{L_0}$ and every multi-index $s$ such that $|s| \in \{0, 1, 2\}$. Moreover, integrating the above equation over $\Gamma_\infty \setminus \Gamma_{L_0}$ and using the fact that $(v,\phi) \in L^\infty(\mathbb{R}^3) \times L^\infty(\mathbb{R}^3)$ we see that $v \in W^{k,1}(\Gamma_\infty)$ and $\phi \in L^1(\Gamma_\infty)$ for $k \in \{0, 1, 2\}$. Further, using $(v,\phi) \in L^\infty(\mathbb{R}^3) \times L^\infty(\mathbb{R}^3)$ and are periodic over $\Gamma_\infty$ we conclude that $v \in W^{k,p}_{per}(\Gamma_\infty)$ and $\phi \in L^p_{per}(\Gamma_\infty)$ for $k \in \{0, 1, 2\}$ and $1 \leq p \leq \infty$.

To prove the second inequality, we note that $m_{2,L} \in \mathcal{M}$, and for any $L > L_0$, $m_{2,L}$ is identical to $m_2$ in the interior of $\Omega_{2L-L_0}$. We therefore have using proposition~\ref{prop:ortner_TFW_stability_self} that
\begin{equation}
\label{eq:dens_decay}
    \sum_{|s|\leq 2}|\partial_s (u_{2,L} - u_2)(y)|^2 + |(\phi_{2,L}-\phi_2)(y)|^2 \leq c_1e^{-c_2 dist(y, \partial \Omega_{2L-L_0})},
\end{equation} 
for every $y $ in the interior of $\Gamma_{L}$ and the positive constants $c_1, c_2$ depend only on the uniform bounds on $m_2$.  Integrating over $\Gamma_L$, we get
\begin{equation*}
\|u_{2,L}-u_2\|_{W^{k,p}(\Gamma_L)} + \|\phi_{2,L}-\phi_2\|_{L^p(\Gamma_L)} \leq k_3e^{-k_4(L-L_0)}
\end{equation*} 

for $k\in \{0, 1, 2\}$, $1 \leq p \leq \infty$, and positive constants $k_3, k_4$ that depend only on uniform bounds of $m_2$ and $L_0$.
\end{proof}

Lemma~\ref{thm:density_decay} provides the necessary uniform boundedness required for Lemma~\ref{thm:denconvg}. In addition, it directly implies the local convergence of the electronic density and potential. However, it requires that \( m_2 \in  \mathcal{M}^M\), whereas Lemma~\ref{thm:denconvg} does not require this condition and is therefore more general, allowing for \( m_2 \in \mathcal{M} \). This broader applicability also makes it suitable for extension to other types of defects. In the following corollary, we further extend the results of Lemma~\ref{thm:density_decay}, which will be instrumental in establishing energy convergence.

\begin{corollary}
\label{cor:supercell_field_convergence}
As $L \to \infty$,
\begin{align*}
    \chi_{\Gamma_L}v_L &\rightarrow \chi_{\Gamma_\infty} v, \\
    \chi_{\Gamma_L}\nabla v_L &\rightarrow \chi_{\Gamma_\infty} \nabla v, \\
    \chi_{\Gamma_L}\phi_L &\rightarrow \chi_{\Gamma_\infty} \phi,
\end{align*}
in $L^p(\Gamma_\infty)$ for all $1 \leq p < \infty$.
\end{corollary}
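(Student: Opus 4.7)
The plan is to reduce the statement to the two estimates already proved in Lemma~\ref{thm:density_decay}: the pointwise exponential decay of $v$ and $\phi$ outside $\Gamma_{L_0}$, and the exponential $W^{k,p}(\Gamma_L)$ convergence of $(u_{2,L},\phi_{2,L})$ to $(u_2,\phi_2)$. The natural approach is to split the $L^p(\Gamma_\infty)$ norm of the difference into a contribution from inside $\Gamma_L$, where both characteristic functions equal $1$, and a contribution from $\Gamma_\infty\setminus\Gamma_L$, where the first function vanishes and only the second remains.

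Concretely, for the $v_L$ statement I would write
\begin{equation*}
\|\chi_{\Gamma_L}v_L-\chi_{\Gamma_\infty}v\|_{L^p(\Gamma_\infty)}^{p}
= \|v_L-v\|_{L^p(\Gamma_L)}^{p} + \|v\|_{L^p(\Gamma_\infty\setminus\Gamma_L)}^{p}.
\end{equation*}
Since $v_L-v=u_{2,L}-u_2$ by definition of the two perturbations, the first term is controlled by $(k_3 e^{-k_4(L-L_0)})^p\to 0$ using the second inequality of Lemma~\ref{thm:density_decay}. For the second term, every $y\in\Gamma_\infty\setminus\Gamma_L$ satisfies $\mathrm{dist}(y,\Gamma_{L_0})\geq (L-L_0)/2$, so the pointwise bound $|v(y)|\leq k_1 e^{-k_2\,\mathrm{dist}(y,\Gamma_{L_0})}$ from Lemma~\ref{thm:density_decay}, combined with the fact that the in-plane cross-section of $\Gamma_\infty\setminus\Gamma_L$ is the unit square, reduces the integral to a decaying exponential in $|y_3|$, yielding a bound of order $e^{-pk_2(L-L_0)/2}$ that also vanishes.

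The same decomposition handles $\chi_{\Gamma_L}\nabla v_L$ by invoking the $W^{1,p}(\Gamma_L)$ part of Lemma~\ref{thm:density_decay} for the interior piece and the pointwise decay of $|\partial_s v|$ (for $|s|=1$) for the exterior piece, and it handles $\chi_{\Gamma_L}\phi_L$ in exactly the same way using the $L^p(\Gamma_L)$ bound on $\phi_{2,L}-\phi_2$ and the pointwise decay of $\phi$. I do not anticipate a genuine obstacle: the statement is essentially a corollary of the two quantitative decay estimates already in hand. The only point requiring a brief verification is that the tail integrals over $\Gamma_\infty\setminus\Gamma_L$ are finite and decay, which follows because the cross-section is compact and the decay in $z$ is exponential, so the bounds hold uniformly for all $1\leq p<\infty$.
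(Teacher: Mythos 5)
Your argument is correct and is essentially the same as the paper's: both split the norm over the disjoint pieces $\Gamma_L$ and $\Gamma_\infty\setminus\Gamma_L$, control the interior piece by the exponential $W^{k,p}(\Gamma_L)$-convergence of $(u_{2,L},\phi_{2,L})\to(u_2,\phi_2)$ from Lemma~\ref{thm:density_decay}, and control the tail by the pointwise exponential decay of $v,\nabla v,\phi$ away from $\Gamma_{L_0}$. The only cosmetic difference is that the paper first reduces to $L^1$-convergence using the uniform $L^\infty$ bounds together with H\"older's inequality and then runs the split in $L^1$, whereas you carry out the split directly in $L^p$ using the full range $1\leq p\leq\infty$ of the Lemma~\ref{thm:density_decay} estimates; both routes are valid and of comparable length.
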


\begin{proof}
From Lemma~\ref{thm:density_decay} we have that $v_L, \|\nabla v_L\|, $ and $\phi_L $ are uniformly bounded irrespective of $L$. Moreover, $v, \|\nabla v\|,$ and $\phi$ belong to $L^\infty(\mathbb{R}^3)$.
Hence, it suffices to show convergence of the sequences mentioned in the corollary in $L^1(\Gamma_\infty)$ to establish convergence in $L^p(\Gamma_\infty)$, for all $1\leq p < \infty$, using H\"older's inequality.
 Using lemma~\ref{thm:density_decay} we have that
    \begin{displaymath}
        \begin{split}
        \|\chi_{\Gamma_L}v_L-v\|_{L^1(\Gamma_\infty)} &\leq \|\chi_{\Gamma_L}u_{2,L}-\chi_{\Gamma_L}u_2\|_{L^1(\Gamma_\infty)} + \|\chi_{\Gamma_L}v - v\|_{L^1(\Gamma_\infty)}  \\
        & = \|u_{2,L}-u_2\|_{L^1(\Gamma_L)}  + \|v\|_{L^1(\Gamma_\infty \setminus \Gamma_L)} \\
        &\leq k_3e^{-k_4(L-L_0)}  + \int_{\Gamma_\infty \setminus \Gamma_L}k_1e^{-k_2 dist(y, \Gamma_{L_0})} , 
        \end{split}
    \end{displaymath}
    where the positive constants $k_1, k_2$ depend only on the uniform bounds of $m_1$ and $k_3, k_4$ are positive constants that depend only on $L_0$ and the uniform bounds of $m_2$. Hence,
    $$ \lim_{L \rightarrow \infty}\|\chi_{\Gamma_L}v_L - v\|_{L^1(\Gamma_\infty)} = 0 .$$ 
    The convergence of the other sequences is proved similarly. 
\end{proof}

Corollary~\ref{cor:supercell_field_convergence} also shows neutrality of charge in $\Gamma_\infty$. To see this, note that for large enough $L$, $$\int_{\Gamma_L}(v_L^2 + 2u_1v_L) = \int_{\Gamma_\infty}\nu. $$ 

Taking the limit $L \to \infty$ gives $\int_{\Gamma_\infty}(v^2 +2u_1v)= \int_{\Gamma_\infty}\nu$, which is the condition for charge neutrality, namely $\int_{\Gamma_\infty} u_2^2-u_1^2 = \int_{\Gamma_\infty} \nu $.

\begin{corollary}
\label{cor:establish_green_function}
Let $$ G(x) = -2\pi|x_3| + \sum_{k \in \mathbb{Z}^2 \times \{0\}}\left( 
\frac{1}{\lVert x-k\rVert} -\int_{(-\frac{1}{2}, \frac{1}{2})^2 \times \{0\}}\frac{dy}{\lVert x-k-y\rVert}\right).$$
Then $\phi = \phi_2 - \phi_1 = G \star_{\Gamma_\infty} (u_2^2 - u_1^2 - \nu)$ and $\phi \in H^1(\Gamma_\infty)$.     
\end{corollary}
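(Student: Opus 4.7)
\emph{Setup and source.} The plan is to recognize $\phi := \phi_2 - \phi_1$ as the (appropriately bounded) solution of a two-dimensional-periodic Poisson problem on $\mathbb{R}^3$, verify that the $G$ in the statement is a Green's function for it, identify $\phi$ with $G\star_{\Gamma_\infty}(u_2^2-u_1^2-\nu)$ by uniqueness, and finally conclude $H^1$-regularity from the already-proven exponential decay. Subtracting the Poisson equations in~\eqref{eq:TFW_smeared_EL} for $m_1$ and $m_2=m_1+\nu$ gives $-\nabla^2\phi = 4\pi f$ with $f := u_2^2 - u_1^2 - \nu$. This $f$ is $1$-periodic in $(x_1,x_2)$, satisfies $\int_{\Gamma_\infty}f=0$ (the charge-neutrality identity noted right after Corollary~\ref{cor:supercell_field_convergence}), and decays exponentially in $|x_3|$ (Lemma~\ref{thm:density_decay} together with $\text{supp}(\nu|_{\Gamma_\infty})\subset\overline{\Gamma}_{L_0}$).

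\emph{Distributional Laplacian of $G$.} Let $Q_0 := (-\tfrac{1}{2},\tfrac{1}{2})^2\times\{0\}$. Taylor-expanding $y\mapsto 1/\lVert z-y\rVert$ about $y=0$ and using evenness of $Q_0$ to kill the linear term shows that each lattice summand is $O(\lVert z\rVert^{-3})$, so $G$ is well-defined as an absolutely and locally uniformly convergent sum off $\mathbb{Z}^2\times\{0\}$. Computing the distributional Laplacian term-by-term one has $-\nabla^2(-2\pi|x_3|) = 4\pi\,\delta_{\{x_3=0\}}$ (surface measure on the plane), $-\nabla^2(1/\lVert x-k\rVert) = 4\pi\delta_k$, and $-\nabla^2\int_{Q_0}dy/\lVert x-k-y\rVert$ equals $4\pi$ times the surface measure on $Q_0+k$. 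Since $\{Q_0+k\}_{k\in\mathbb{Z}^2\times\{0\}}$ tiles $\{x_3=0\}$, the two sheet contributions cancel and one obtains
\begin{equation*}
-\nabla^2 G \;=\; 4\pi\sum_{k\in\mathbb{Z}^2\times\{0\}}\delta_k \qquad\text{in } \mathcal{D}'(\mathbb{R}^3).
\end{equation*}

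\emph{Identification $\phi = G\star_{\Gamma_\infty} f$ (the main obstacle).} Define $\psi(x):=\int_{\Gamma_\infty}G(x-y)f(y)\,dy$; absolute convergence uses that $f\in L^1(\Gamma_\infty)\cap L^\infty$ with exponential decay, $G$ is locally integrable, and the linear growth of $G$ in $|x_3-y_3|$ is controlled by the decay of $f$ in $y_3$. A Fubini calculation, invoking the identity for $-\nabla^2 G$ above and the $1$-periodicity of $f$ in $(y_1,y_2)$ to collapse $\sum_k f(\cdot-k)$ back to $f$, gives $-\nabla^2\psi = 4\pi f$. Hence $\phi-\psi$ is harmonic and $1$-periodic in $(x_1,x_2)$; a Fourier expansion in these variables forces the nonzero modes to vanish under boundedness in $x_3$ and reduces the zero mode to $\alpha+\beta x_3$. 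The hard part is the zero mode: the $-2\pi|x_3|$ piece of $G$ means $\psi$ a priori tends to $\pm 2\pi\int y_3 f\,dy$ as $x_3\to\pm\infty$, while $\phi\to 0$ by Lemma~\ref{thm:density_decay}. My preferred resolution is to run the identification first in the supercell $\Gamma_L$---where the analog $\phi_{2,L}-\phi_{1,L} = G_L\star_{\Gamma_L}(u_{2,L}^2-u_{1,L}^2-\nu)$ holds by standard torus Poisson theory---and to pass to $L\to\infty$ using Corollary~\ref{cor:supercell_field_convergence} together with a controlled convergence of $G_L$ to $G$. This simultaneously delivers $\phi=\psi$ and identifies the vanishing of the dipole $\int y_3 f\,dy$ as the analytic manifestation of the TFW screening encoded in Lemma~\ref{thm:density_decay}.

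\emph{Membership in $H^1(\Gamma_\infty)$.} The $L^2$ bound on $\phi$ is immediate from $\phi\in L^\infty(\mathbb{R}^3)$ with exponential decay in $|x_3|$. For the gradient, I would test $-\nabla^2\phi = 4\pi f$ against $\phi$ on $\Gamma_L' := [-\tfrac{1}{2},\tfrac{1}{2})^2\times(-L,L)$: the side-face boundary terms vanish by periodicity, while the top/bottom boundary terms vanish as $L\to\infty$ by the exponential decay of $\phi$ and (via elliptic regularity applied to the Poisson equation with exponentially decaying right-hand side) of $\nabla\phi$. Letting $L\to\infty$ leaves $\int_{\Gamma_\infty}|\nabla\phi|^2 = 4\pi\int_{\Gamma_\infty}f\phi<\infty$, completing the argument.
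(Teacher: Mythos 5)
Your proposal is correct in substance and provably reaches the conclusion, but it takes a noticeably different route from the paper and one of your two offered paths for the key identification contains a gap that the other path does not.

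The paper's proof is shorter and avoids almost all of your technical work. It passes $L\to\infty$ in the supercell Poisson equation to get $-\nabla^2\phi = 4\pi(u_2^2-u_1^2-\nu)$, simply asserts (``it is straightforward to verify'') that $\psi := G\star_{\Gamma_\infty}(u_2^2-u_1^2-\nu)$ is well-defined and satisfies the same Poisson equation, and then \emph{reduces the identification $\phi=\psi$ to a uniqueness statement in $H^1_{\text{per}}(\Gamma_\infty)$}: it shows $\phi\in H^1(\Gamma_\infty)$ by weak lower semicontinuity of the $H^1(\Gamma_K)$-norm together with the $L$-independent bound $\|\phi_L\|_{H^1(\Gamma_L)}\le C$ (inherited from Lemma~\ref{thm:denconvg}), and then invokes Lemma~\ref{lem:lapgaminf} (harmonic $+$ $H^1_{\text{per}}(\Gamma_\infty)$ $\Rightarrow$ zero). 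So the paper never computes $-\nabla^2 G$ explicitly, never discusses boundedness of $\psi$ at $x_3\to\pm\infty$, never needs the dipole moment $\int y_3 f$, and obtains the $H^1$-membership and the identification in one stroke. By contrast, you first compute $-\nabla^2 G = 4\pi\sum_{k}\delta_k$ distributionally (a worthwhile computation that the paper elides), and then separately establish $H^1$-membership by testing the Poisson equation against $\phi$ on a truncated slab using exponential decay of $\phi$ and $\nabla\phi$. Both of these are sound.

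Where you should be careful is your ``preferred resolution'' for the identification $\phi=\psi$, namely passing $L\to\infty$ in the supercell identity $\phi_{2,L}-\phi_{1,L}=G_L\star_{\Gamma_L}(u_{2,L}^2-u_{1,L}^2-\nu)$. Two things are unproven there: (i) the supercell potentials in Theorem~\ref{thm:TFW_supercell_exst_uniq} carry Lagrange multipliers, so $\phi_{2,L}-\phi_{1,L}$ equals $G_L\star f_L$ only up to an $L$-dependent additive constant (which does cancel, but needs an argument since $G_\Gamma$ in \eqref{eq:green_TFW_supercell} is normalized by $\min G_\Gamma=0$, not zero mean); and (ii) ``controlled convergence of $G_L$ to $G$'' is asserted but nontrivial ($G_L$ contains a $-4\pi/|\Gamma_L|$ background absent from $G$, and the mode of convergence needed to pass limits inside the convolution must be specified). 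Your fallback argument --- $\phi-\psi$ is harmonic and periodic, the nonzero Fourier modes in $(x_1,x_2)$ vanish by boundedness, the zero mode is affine in $x_3$ and boundedness kills the slope, and then the two one-sided limits $\psi\to\pm 2\pi\int y_3 f$ versus $\phi\to 0$ force both $\int y_3 f=0$ and the constant to vanish --- is self-contained and correct, and I would lead with it rather than the supercell route. The paper's $H^1_{\text{per}}$ route sidesteps the dipole question entirely by working in a space where the constant $\alpha$ is automatically excluded ($H^1(\Gamma_\infty)$ contains no nonzero constants), which is the main efficiency it buys you.
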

\begin{proof}
We have from theorem~\ref{thm:TFW_supercell_exst_uniq} that 
\begin{equation} \label{eq:phiL_poisson}
-\nabla^2 \phi_L = 4\pi (v_L^2 +2u_1v_L -\nu).
\end{equation}
Taking the limit $L \to \infty$ in equation~\eqref{eq:phiL_poisson}, we get
\begin{equation}
    \label{eq:perturbed_poisson}
    -\nabla^2 \phi = 4\pi(u_2^2 - u_1^2 -\nu),
\end{equation}
It is straightforward to verify that
$$ \phi(x) = \int_{\Gamma_\infty} G(x - y)(u_2^2(y) - u_1^2(y) -\nu(y))dy$$
with $G$ defined as above is well-defined and satisfies equation~\eqref{eq:perturbed_poisson}.

To show uniqueness of $\phi$, we show that $\phi \in H^1_{\text{per}}(\Gamma_{\infty})$, and then use lemma~\ref{lem:lapgaminf}. For any $K \in \mathbb{R}$, we have from lemma~\ref{thm:denconvg} that $\phi_L$ weakly converges in $H^1(\Gamma_K)$ to $\phi \in H^1_{\text{loc}}(\Gamma_\infty)$. Since $\phi$ is a weak limit, it holds that
$$ \|\phi\|_{H^1(\Gamma_K)} \leq \liminf_{L \rightarrow \infty} \|\phi_L\|_{H^1(\Gamma_K)}.$$
However, for $L>K$, $\|\phi_L\|_{H^1(\Gamma_K)}^2 \leq \|\phi_L\|_{H^1(\Gamma_L)}^2 \leq C$, where $C$ is not dependent on $L$ or $K$. Hence $\|\phi\|_{H^1(\Gamma_K)} \leq \sqrt{C}$ for every $K$, thereby yielding $\phi \in H^1(\Gamma_\infty)$.
\end{proof}

Note that the Green's function used in this theorem is the same as in subsection~\ref{sec:thinfilm}; this will be assumed in the sequel unless stated otherwise.

\subsection{Convergence of energy difference on $\Gamma_\infty$}
Unlike the previous section dealing with the convergence of the electron density based on general known results, there is no general recipe to show the convergence of the relative energy of quasi-planar defects and/or the computation of the relative energy using a minimization principle. In this section, we demonstrate that both these are possible, and further establish an exponential rate of convergence with respect to cell size.
 
The quantity of interest for the present analysis is the difference in energy between the reference and perturbed configurations, with nuclear densities $m_1$ and $m_2$, respectively. Since these are both infinitely large, they cannot be compared directly. The strategy that we adopt here is to compute this difference in energy by recourse to a thermodynamic limit argument---we consider supercells of increasingly larger extent along the $z$ direction, and study the energy difference between $m_{1,L} \equiv m_1$ and $m_{2,L}$ in the limit $L \to \infty$. We clarify that this limit process only modifies the extent of the supercell along the $z$ direction, but retains it intact along the $x$ and $y$ directions. The periodicity along the $x$ and $y$ directions motivates the reference to the relative energy difference as an energy difference per unit area, similar to how surface energies are defined. In this section, we show that the difference in energies between these to configurations, which we denote as $\gamma^\nu_1$, is finite in the limit $L \to \infty$.

We begin by defining useful spaces of test functions for differences in the square root of the electron density between the reference and perturbed configurations: 
\begin{equation} \label{eq:test_functions_w}
\begin{split}
\mathcal{Q} &= \{w \in H^1_{per}(\Gamma_\infty) \;:\; |x_3|u_1w \in L^1(\Gamma_\infty), |x_3|w^2 \in L^1(\Gamma_\infty)\},\\
\mathcal{Q}_0 &= \{w \in H^1_{per}(\Gamma_\infty) \;:\; \exists\; K > 0 \text{ s.t. } \text{supp}(w\vert_{\Gamma_\infty})  \subset \bar{\Gamma}_{K} \},\\
\mathcal{Q}_1 &= \{w \in \mathcal{Q} \;:\; u_1 + w \geq 0\},\\
\mathcal{Q}_1^\nu &= \left\{w \in \mathcal{Q}_1 \;:\; \int_{\Gamma_\infty} (u_1 + w)^2 - u_1^2 = \int_{\Gamma_\infty}\nu\right\}. 
\end{split}
\end{equation}
We note for future reference that $\mathcal{Q}_0 \subset \mathcal{Q}$, $\mathcal{Q}^\nu_1 \subset \mathcal{Q}_1 \subset \mathcal{Q}$, $\mathcal{Q}_0 \cup \mathcal{Q}_1 \subset \mathcal{Q}$ and $w \in \mathcal{Q}$ implies that $|x_3|w \in L^1(\Gamma_\infty)$\footnote{Since $|x_3|u_1w \in L^1(\Gamma_\infty)$ and $u_1 \in L^\infty(\mathbb{R}^3)$}, $w \in L^1 \cap L^2(\Gamma_\infty)$\footnote{$\int_{\Gamma_\infty}|w| \leq \int_{\Gamma_3}|w| + \int_{\Gamma_\infty \setminus \Gamma_3}|x_3||w| \leq 3 \|w\|_{L^2(\Gamma_\infty)} + \||x_3|w\|_{L^1(\Gamma_\infty)}   $}, and $w\in L^p_{\text{unif}}(\mathbb{R}^3)$ for all $1 \leq p \leq 6$. With these definitions in place, we define a candidate function for the limiting energy difference between the reference and perturbed crystals, $\mathcal{E}^\nu_1:\mathcal{Q}_0 \cup \mathcal{Q}_1 \to \mathbb{R}$, as
\begin{equation} \label{eq:inffunc}
\begin{split}
\mathcal{E}^\nu_1(w) &= \int_{\Gamma_\infty} \lVert\nabla w\rVert^2 + \int_{\Gamma_\infty} (|u_1 + w|^{10/3}-u_{1}^{10/3} - \frac{10}{3}u_1^{7/3}w)\\ 
 & + \int_{\Gamma_\infty} \phi_{1} (w^2 - \nu) + \frac{1}{2}D_G((u_1 + w)^2 - u_1^2 -\nu, (u_1 + w)^2 - u_1^2 -\nu).
\end{split}
\end{equation}
In equation \eqref{eq:inffunc}, the function $D_G:\mathcal{Q}\times\mathcal{Q}\to\mathbb{R}$ is defined as follows: for $f, g \in \mathcal{Q}$,
\begin{equation} \label{eq:DG_Gamma_infty}
D_G(f,g) = \int_{\Gamma_\infty} \int_{\Gamma_\infty} G(x-y)f(x) g(y) \,dx dy.
\end{equation}
Section~\ref{subsec:thinfilm} ensures that $D_G$ is a well defined function and lemma~\ref{lem:KEhelp} ensures that $\mathcal{E}^\nu_1$ in equation \eqref{eq:inffunc} is a also well defined function. We define the difference in energy between the perturbed system and the reference perfect crystal as in equation~\eqref{eq:gamma_L_defn}, reproduced below:
\begin{equation*}
\gamma^L_1(\nu) = I^L_{m_1 + \nu} - I^L_{m_1}.
\end{equation*}
Here, $I^L_m$ is defined as in equation~ (\ref{eq:TFW_min_supercell}). The following theorem proves that this difference in energy remains finite in the thermodynamic limit $L \to \infty$; we notate the limiting energy difference as $\gamma^\nu_1$.

\begin{theorem} \label{thm:convggam}
$\gamma^L_1(\nu)$ converges to $\gamma^\nu_1$ in the limit $L \to \infty$, where
\begin{equation} \label{eq:gamma_m1_nu}
\begin{split}
\gamma^\nu_1 &= \int_{\Gamma_\infty} \lVert\nabla v\rVert^2 + \int_{\Gamma_\infty} (u_1 + v)^{10/3}- u_{1}^{10/3} - \frac{10}{3}u_1^{7/3}v\\  
 &+ \frac{1}{2}\int_{\Gamma_\infty} \phi ((u_1 + v)^2 - u_1^2 -\nu) + \int_{\Gamma_\infty} \phi_{1} (v^2  - \nu).
\end{split}
\end{equation}
\end{theorem}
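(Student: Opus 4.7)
The plan is to expand $\gamma^L_1(\nu) = I^L_{m_1+\nu}(u_{2,L}) - I^L_{m_1}(u_1)$ algebraically into a sum of local integrals over $\Gamma_L$ whose integrands are polynomial in $v_L := u_{2,L} - u_1$, $\phi_L := \phi_{2,L} - \phi_1$, and the fixed periodic fields $u_1, \phi_1, \nu$, and then to pass term by term to the $L \to \infty$ limit using the $L^p$ convergence in Corollary~\ref{cor:supercell_field_convergence} and the exponential decay in Lemma~\ref{thm:density_decay}.

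For the expansion, I substitute $u_{2,L} = u_1 + v_L$ and use $m_{2,L} = m_1 + \nu$ on $\Gamma_L$ (valid for $L > L_0$). The kinetic difference $\int_{\Gamma_L}\lVert\nabla u_{2,L}\rVert^2 - \lVert\nabla u_1\rVert^2$ produces a cross term $2\int_{\Gamma_L}\nabla u_1 \cdot \nabla v_L$ plus $\int_{\Gamma_L}\lVert\nabla v_L\rVert^2$; integrating the cross term by parts on the triply periodic cell $\Gamma_L$ (no boundary contributions) and substituting the Euler--Lagrange equation $-\nabla^2 u_1 + \tfrac{5}{3}u_1^{7/3} + \phi_1 u_1 = 0$ rewrites it as $-\tfrac{10}{3}\int u_1^{7/3} v_L - 2\int \phi_1 u_1 v_L$. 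Setting $b_L := -[(u_1+v_L)^2 - u_1^2 - \nu]$, the Coulomb difference $D_{\Gamma_L}(m_{2,L} - u_{2,L}^2, \cdot\,) - D_{\Gamma_L}(m_1 - u_1^2, \cdot\,)$ expands by bilinearity as $2D_{\Gamma_L}(m_1 - u_1^2, b_L) + D_{\Gamma_L}(b_L, b_L)$, and the Poisson identities $-\nabla^2 \phi_1 = 4\pi(u_1^2 - m_1)$ and $-\nabla^2 \phi_L = -4\pi b_L$ (valid for $L$-periodic sources of mean zero on $\Gamma_L$) collapse each of these double integrals into single local integrals proportional to $\int \phi_1 b_L$ and $\int \phi_L b_L$ respectively. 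After the $\phi_1 u_1 v_L$ contributions cancel, one arrives at
\begin{equation*}
\gamma^L_1(\nu) = \int_{\Gamma_L}\lVert\nabla v_L\rVert^2 + \int_{\Gamma_L}\bigl[(u_1+v_L)^{10/3} - u_1^{10/3} - \tfrac{10}{3}u_1^{7/3} v_L\bigr] + \int_{\Gamma_L}\phi_1 (v_L^2 - \nu) + \tfrac{1}{2}\int_{\Gamma_L}\phi_L\bigl[(u_1+v_L)^2 - u_1^2 - \nu\bigr],
\end{equation*}
which is exactly the target $\gamma^\nu_1$ with $(v, \phi)$ replaced by $(v_L, \phi_L)$.

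The passage to the limit amounts to showing $\int_{\Gamma_\infty}\chi_{\Gamma_L} F(v_L, \phi_L) \to \int_{\Gamma_\infty} F(v, \phi)$ for each of the four integrands. This is a direct consequence of Corollary~\ref{cor:supercell_field_convergence}: each bilinear term is a product of a sequence converging strongly in $L^p(\Gamma_\infty)$ for some $1 \leq p < \infty$ with a uniformly bounded fixed or converging factor, using $\lVert u_1\rVert_{L^\infty}, \lVert\phi_1\rVert_{L^\infty} < \infty$ from Section~\ref{sec:mathematical_background} and the $L$-uniform $L^\infty$ bounds on $v_L, \phi_L$ from Lemma~\ref{thm:density_decay}. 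The nonlinear Thomas--Fermi integrand is controlled pointwise by $C(u_1^{4/3} v_L^2 + |v_L|^{10/3})$ and handled by dominated convergence; the integrals against $\nu$ reduce immediately to integrals over the compact region $\Gamma_{L_0}$ by the support condition on $\nu$.

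The main obstacle lies in the Coulomb self-interaction $D_{\Gamma_L}(b_L, b_L)$: written as a double integral it depends on the $L$-periodic Green's function $G_{\Gamma_L}$, which is not simply related to the limiting kernel $G$ of Corollary~\ref{cor:establish_green_function}, and both kernels grow linearly in $|x_3|$, so a direct kernel-level comparison as $L \to \infty$ is delicate. The essential move is to collapse this non-local object into the local integral $-\tfrac{1}{2}\int_{\Gamma_L}\phi_L b_L$ using the Poisson equation for $\phi_L$, after which the limit is handled by the same strong-convergence argument as the other terms. A subsidiary bookkeeping issue is the $\tfrac{1}{2}$ implicit in the normalization of $D$ (fixed by consistency with the coefficient of $\phi u$ in \eqref{eq:TFW_smeared_EL}) and the mean-zero property of $b_L$ on $\Gamma_L$ needed to invert the Poisson equation, which follows from the charge-neutrality identity $\int_{\Gamma_L}(v_L^2 + 2 u_1 v_L) = \int_{\Gamma_\infty}\nu$ noted immediately after Corollary~\ref{cor:supercell_field_convergence}.
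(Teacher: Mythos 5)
Your proposal is correct and follows essentially the same route as the paper: both derive the identity $\gamma^L_1(\nu) = \int_{\Gamma_L}\lVert\nabla v_L\rVert^2 + \int_{\Gamma_L}\bigl[(u_1+v_L)^{10/3} - u_1^{10/3} - \tfrac{10}{3}u_1^{7/3}v_L\bigr] + \int_{\Gamma_L}\phi_1(v_L^2 - \nu) + \tfrac{1}{2}\int_{\Gamma_L}\phi_L\bigl[(u_1+v_L)^2 - u_1^2 - \nu\bigr]$ by using the Euler--Lagrange equation for $u_1$ after integrating the kinetic cross term by parts, and the twice-integrated-by-parts Poisson identity $\int_{\Gamma_L}\phi_L(u_1^2 - m_1) = \int_{\Gamma_L}\phi_1(v_L^2 + 2u_1 v_L - \nu)$ to collapse the nonlocal Coulomb terms, and both then pass to the limit term by term via Corollary~\ref{cor:supercell_field_convergence}, the uniform $L^\infty$ bounds from Lemma~\ref{thm:density_decay}, and a dominated-convergence (or Lemma~\ref{lem:DCThelp}) argument for the Thomas--Fermi nonlinearity. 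The only presentational difference is that the paper starts from the already-localized form of the supercell energy in Theorem~\ref{thm:TFW_supercell_exst_uniq}, whereas you explicitly split $D_{\Gamma_L}$ by bilinearity before invoking Poisson; these are equivalent.
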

\begin{proof}
We have, using equation~\eqref{eq:TFW_energy_supercell} and Theorem~\ref{thm:TFW_supercell_exst_uniq}, that
\begin{displaymath}
\begin{split}
    I^L_{m_2} &= \int_{\Gamma_L} \lVert\nabla u_{2,L}\rVert^2 + \int_{\Gamma_L} |u_{2,L}|^{\frac{10}{3}}  + \frac{1}{2} \int_{\Gamma_L} \phi_{2,L}(u_{2,L}^2 - m_2), \\
    I^L_{m_1} &= \int_{\Gamma_L} \lVert\nabla u_{1}\rVert^2 + \int_{\Gamma_L} |u_{1}|^{\frac{10}{3}}  + \frac{1}{2} \int_{\Gamma_L} \phi_{1}(u_{1}^2-m_1).\\
\end{split}
\end{displaymath}
Hence, using equation \eqref{eq:TFW_smeared_EL} corresponding to $n = m_1$, we get
\begin{displaymath}
\begin{split}
\gamma^L_1(\nu) &= \int_{\Gamma_L} \lVert\nabla v_L\rVert^2 + \int_{\Gamma_L} (u_1+v_L)^{10/3}-u_{1}^{10/3} - \frac{10}{3}u_1^{7/3}v_L - 2 \int_{\Gamma_L}\phi_1 u_1v_L \\
  &+ \frac{1}{2}\int_{\Gamma_L} \phi_L (u_1^2 + v_L^2 +2u_1v_L - m_1 -\nu)  + \frac{1}{2}\int_{\Gamma_L} \phi_{1} (v_L^2 + 2u_1v_L - \nu). 
\end{split}
\end{displaymath}
Since $-\nabla^2 \phi_L = 4\pi (v_L^2 +2u_1v_L -\nu)$ and $-\nabla^2 \phi_1 = 4\pi (u_1^2 - m_1)$, by definition, we obtain, after integrating by parts twice, that
\begin{displaymath}
\int_{\Gamma_L} \phi_L (u_1^2 - m_1 ) = \int_{\Gamma_L} \phi_1 (2u_1v_L + v_L^2-\nu).    
\end{displaymath}
Therefore, 
\begin{displaymath}
\begin{split}
\gamma^L_1(\nu) &= \int_{\Gamma_L} \lVert\nabla v_L\rVert^2 + \int_{\Gamma_L} (u_1+v_L)^{10/3}-u_{1}^{10/3} - \frac{10}{3}u_1^{7/3}v_L\\ 
  &+ \frac{1}{2}\int_{\Gamma_L} \phi_L ( v_L^2 +2u_1v_L -\nu)    + \int_{\Gamma_L} \phi_{1} (v_L^2  - \nu). 
\end{split}
\end{displaymath}
Corollary~\ref{cor:supercell_field_convergence} informs us that, in the limit $L \to \infty$,
\begin{align*}
    \chi_{\Gamma_L}v_L &\rightarrow \chi_{\Gamma_\infty} v, \\
    \chi_{\Gamma_L}\nabla v_L &\rightarrow \chi_{\Gamma_\infty} \nabla v, \\
    \chi_{\Gamma_L}\phi_L &\rightarrow \chi_{\Gamma_\infty} \phi,
\end{align*}
in $L^p(\Gamma_\infty)$ for all $1 \leq p < \infty$. From Lemma~\ref{thm:density_decay} we also have that $v_L, \|\nabla v_L\|, $ and $\phi_L $ are uniformly bounded irrespective of $L$. Moreover, $v, \|\nabla v\|,$ and $\phi$ belong to $L^\infty(\mathbb{R}^3)$.

We therefore have, using lemma~\ref{lem:DCThelp}, that
\begin{displaymath}
\int_{\Gamma_L} (u_1+v_L)^{10/3}-u_{1}^{10/3} - \frac{10}{3}u_1^{7/3}v_L  \rightarrow \int_{\Gamma_\infty} (u_1+v)^{10/3}-u_{1}^{10/3} - \frac{10}{3}u_1^{7/3}v, 
\end{displaymath}
in the limit $L \to \infty$.

Finally, a direct application of Hölder's inequality yields
\begin{displaymath}
\frac{1}{2}\int_{\Gamma_L} \phi_L (v_L^2 + 2u_1v_L - \nu) + \int_{\Gamma_L} \phi_1 (v_L^2 - \nu) \to \frac{1}{2}\int_{\Gamma_\infty} \phi \big((u_1 + v)^2 - u_1^2 - \nu\big) + \int_{\Gamma_\infty} \phi_1 (v^2 - \nu),
\end{displaymath}
as \( L \to \infty \). The boundedness condition \( \phi_1 \in L^\infty(\mathbb{R}^3) \) is essential for the convergence of the second term, which justifies our initial assumption that \( m_1 \) is smooth. We thus conclude that \( \gamma^L_1(\nu) \to \gamma^\nu_1 \), as desired.
 For completeness, we note that $\gamma^\nu_1$ is $\mathcal{E}_1^\nu(v)$, thanks to corollary~\ref{cor:establish_green_function}.
\end{proof}

\subsection{Minimization principle for the energy difference on $\Gamma_\infty$}
\label{sec: minproblem}

We show in this section that the converged energy difference $\gamma^\nu_1$ that we highlighted in the previous section satisfies a minimization principle. For this purpose, we resort to the thin films approach summarized in Section~\ref{sec:thinfilm}; we remark that we chose this approach over the approach involving supercells (Section~4.7 in \cite{CE11}) since the latter was not amenable to the derivation of a minimization principle.

Let $H \in \mathbb{N}$ such that $\Omega_H = \mathbb{R}^2 \times (-\frac{H}{2}, \frac{H}{2})$, contains $\text{supp}(\nu)$. We define $\Tilde{m}_{1,H} = \chi_{\Omega_H}m_1$ and $\Tilde{m}_{2,H} = \chi_{\Omega_H}m_2$. A visual representation of $\Tilde{m}_{2,H}$ is shown in Figure~\ref{fig:thinfilm_construction}. We further assume in this section that $m_1$ is smooth and symmetric in $x_1$ and $m_2$ is also symmetric along $x_1$; we do not, however, require $m_2$ to be  smooth. The arguments we present below also hold if $m_1$ is symmetric along $x_2$, instead of $x_1$. In this context, we present in Section~\ref{subsec:thinfilmmod} a slight modification of the existence proof for the thermodynamic limit for thin films, originally presented in \cite{BTF00}, that relaxes the symmetry requirement on $m_1$ as required above. 

\begin{figure}[h]
\begin{subfigure}{0.45\textwidth}
\centering
\includegraphics[width=0.9\textwidth]{figures/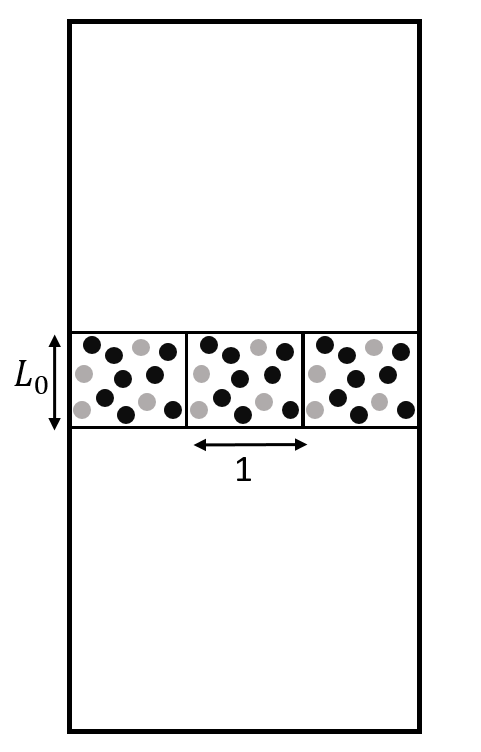}
\label{fig:m_tf}
\caption{$m$}
\end{subfigure}
~
\begin{subfigure}{0.49\textwidth}
\centering
\includegraphics[width=0.85\textwidth]{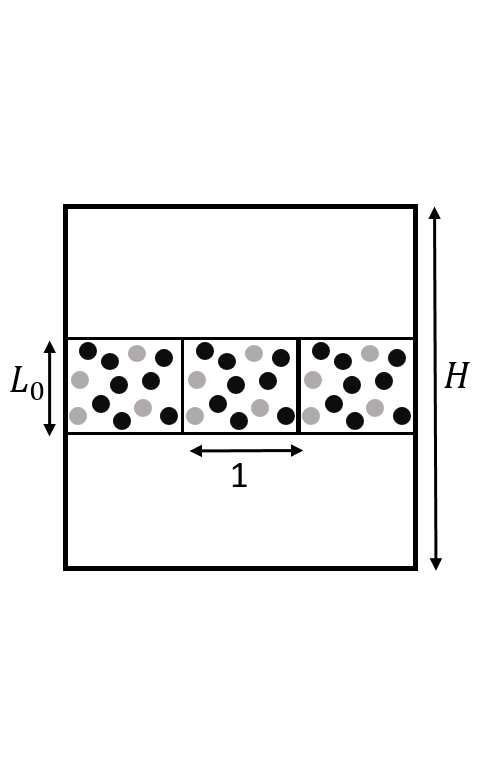}
\label{fig:mH}
\caption{$\Tilde{m}_H$}
\end{subfigure}
\caption{Procedure to generate a thin film with nuclear density $\Tilde{m}_H$ from a given nuclear density $m \in \mathcal{M}_{xy}$. The original system is sliced into a box with dimension $H$ along the out-of-plane direction of the planar defect.}
\label{fig:thinfilm_construction}
\end{figure}

We know from the thin films theory that there exist unique electron densities $\Tilde{u}_{1,H}$ and $\Tilde{u}_{2,H}$ for both these systems; the corresponding potentials are denoted as $\psi_{1,H}$ and $\psi_{2,H}$. 
It also holds that $(\Tilde{u}_{1,H}, \psi_{1,H})$ and $(\Tilde{u}_{2,H}, \psi_{2,H})$ are bounded uniformly in $H^4_{\text{unif}}(\mathbb{R}^3) \times H^2_{\text{unif}}(\mathbb{R}^3)$. Taking the local limit as $H \to \infty$ for the thin films case (equation~\eqref{eq:ELthinfilm2}), we recover the corresponding Schrodinger-Poisson system on $\Gamma_\infty$ (equation~\eqref{eq:TFW_smeared_EL}). Since the latter admits a unique solution, $\Tilde{u}_{1,H}, \psi_{1,H}, \Tilde{u}_{1,H}$ and $\psi_{2,H}$ converge to $u_1, \phi_1, u_2$ and $\phi_2$, respectively, pointwise almost everywhere, along a subsequence.

In analogy with the analysis in the supercell approach, we define $\Tilde{v}_{H} = \Tilde{u}_{2,H}-\Tilde{u}_{1,H}$ and $\psi_H=\psi_{2,H}-\psi_{1,H}$. We know that they converge to $v$ and $\phi$ pointwise almost everywhere, along a subsequence. Moreover, we also have that $\psi_{1,H}$ is uniformly bounded almost everywhere since $m_1$ is smooth. With these definitions in place, we state the central result of this section, namely that the thermodynamic limit of the difference in energy between the reference and perturbed systems satisfies a minimum principle.

\begin{theorem}
\label{thm:reflexivemin}
$v = u_2 - u_1$ is the minimizer of the following variational principle:
\begin{equation} \label{eq:gamma_min_ppl}
\inf_{w \in \mathcal{Q}_1^{\nu}} \; \mathcal{E}^\nu_1(w), 
\end{equation}
where 
\begin{displaymath}
\begin{split}
\mathcal{E}^\nu_1(w) &= \int_{\Gamma_\infty} \lVert\nabla w\rVert^2 + \int_{\Gamma_\infty} (|u_1 + w|^{10/3}-u_{1}^{10/3} - \frac{10}{3}u_1^{7/3}w)\\ 
 & + \int_{\Gamma_\infty} \phi_{1} (w^2 - \nu) + \frac{1}{2}D_G((u_1 + w)^2 - u_1^2 -\nu, (u_1 + w)^2 - u_1^2 -\nu),\\
\end{split}
\end{displaymath}  
The minimum value is equal to the thermodynamic limit of the energy difference between the reference and perturbed configurations:
\begin{equation} \label{eq:gamma_min_energy}
\gamma^\nu_1 = \inf_{w \in \mathcal{Q}_1^{\nu}} \; \mathcal{E}^\nu_1(w).
\end{equation}
\end{theorem}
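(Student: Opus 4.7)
The plan is to establish the theorem in three stages: verify that $v = u_2 - u_1$ is an admissible competitor, recognize that $\mathcal{E}^\nu_1(v)=\gamma^\nu_1$ is already essentially contained in Theorem~\ref{thm:convggam}, and then prove the lower bound $\mathcal{E}^\nu_1(w)\ge\gamma^\nu_1$ for arbitrary $w\in\mathcal{Q}_1^\nu$. For admissibility, I check each defining condition of $\mathcal{Q}_1^\nu$ in turn: periodicity and $H^1$-regularity on $\Gamma_\infty$ are inherited from $u_1$ and $u_2$; non-negativity $u_1+v=u_2\ge 0$ is immediate; charge neutrality $\int_{\Gamma_\infty}((u_1+v)^2-u_1^2)=\int_{\Gamma_\infty}\nu$ is exactly the statement already derived after Corollary~\ref{cor:supercell_field_convergence}; and the weighted integrability conditions $|x_3|u_1 v,\,|x_3|v^2\in L^1(\Gamma_\infty)$ follow from the pointwise exponential decay $|v(y)|\le k_1 e^{-k_2 \mathrm{dist}(y,\Gamma_{L_0})}$ in Lemma~\ref{thm:density_decay}, together with $u_1\in L^\infty(\mathbb{R}^3)$. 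For stage two I simply combine the explicit formula for $\gamma^\nu_1$ in Theorem~\ref{thm:convggam} with the representation $\phi = G\star_{\Gamma_\infty}((u_1+v)^2-u_1^2-\nu)$ from Corollary~\ref{cor:establish_green_function} to rewrite the half-Coulomb term as $\tfrac12 D_G((u_1+v)^2-u_1^2-\nu,(u_1+v)^2-u_1^2-\nu)$; the remaining summands match $\mathcal{E}^\nu_1(v)$ term-by-term.

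The substantive work is the lower bound. I plan to run the thin-film comparison introduced just before the theorem statement. Fix an arbitrary $w\in\mathcal{Q}_1^\nu$ and, for each sufficiently large $H$, construct a test density $\hat u_H$ on the slab $\Omega_H$ by taking $\tilde u_{1,H}+w$ restricted to $\Omega_H$ and adjusting it by a compactly supported correction $\zeta_H$ so that the thin-film charge constraint $\int_{\Omega_H}\hat u_H^2=\int_{\Omega_H}\tilde m_{2,H}$ is met exactly; since $\int_{\Gamma_\infty}(2u_1w+w^2)=\int_{\Gamma_\infty}\nu$ by the definition of $\mathcal{Q}_1^\nu$ and $\tilde u_{1,H}\to u_1$ pointwise a.e.\ with uniform $L^\infty$ bounds, the required correction $\zeta_H$ can be chosen with $\|\zeta_H\|_{H^1}\to 0$. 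The minimality of $\tilde u_{2,H}$ for the thin-film functional $I_H^{\tilde m_{2,H}}$ gives
\begin{equation*}
I_H^{\tilde m_{2,H}}(\hat u_H)-I_H^{\tilde m_{1,H}}(\tilde u_{1,H})\ \ge\ I_H^{\tilde m_{2,H}}(\tilde u_{2,H})-I_H^{\tilde m_{1,H}}(\tilde u_{1,H}).
\end{equation*}
The right-hand side converges to $\gamma^\nu_1$ as $H\to\infty$ by running the computation of Theorem~\ref{thm:convggam} in the thin-film setting: the identity is algebraic in $\tilde u_{i,H},\psi_{i,H}$ and uses only the thin-film Schr\"odinger--Poisson system, while the passage to the limit uses pointwise a.e.\ convergence $\tilde u_{i,H}\to u_i$, $\psi_{i,H}\to\phi_i$, the uniform $H^4_{\mathrm{unif}}\times H^2_{\mathrm{unif}}$ bounds on the thin-film minimizers, and the smoothness of $m_1$ (giving $\psi_{1,H}\in L^\infty$ uniformly in $H$) exactly as in Theorem~\ref{thm:convggam}.

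For the left-hand side, I expand $I_H^{\tilde m_{2,H}}(\tilde u_{1,H}+w+\zeta_H)-I_H^{\tilde m_{1,H}}(\tilde u_{1,H})$ using the thin-film Schr\"odinger--Poisson equations for $(\tilde u_{1,H},\psi_{1,H})$ to kill the $O(w)$ terms, precisely mirroring the algebraic manipulation in the proof of Theorem~\ref{thm:convggam} with $v_L$ replaced by $w+\zeta_H$. This rewrites the left-hand side as a sum whose integrands are controlled by $\|\nabla w\|^2$, the convex excess $|u_1+w|^{10/3}-u_1^{10/3}-\tfrac{10}{3}u_1^{7/3}w$, $\phi_1(w^2-\nu)$, and a Coulomb self-energy of $(u_1+w)^2-u_1^2-\nu$. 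Dominated convergence, combined with the uniform $L^\infty$ bounds on $\tilde u_{1,H}$ and $\psi_{1,H}$ and their pointwise a.e.\ convergence to $u_1$ and $\phi_1$, then yields $\mathcal{E}^\nu_1(w)$ in the limit. Combining, $\mathcal{E}^\nu_1(w)\ge\gamma^\nu_1=\mathcal{E}^\nu_1(v)$, which proves the minimization principle.

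The hard part is the Coulomb/electrostatic limit on the noncompact strip $\Gamma_\infty$, because the Green's function $G$ has a $-2\pi|x_3|$ piece that grows at infinity. This is exactly what makes the weighted $L^1$ constraints $|x_3|u_1w,\,|x_3|w^2\in L^1(\Gamma_\infty)$ in $\mathcal{Q}$ nontrivial, and I would rely on the thin-film/$D_G$ well-posedness lemmas from Section~\ref{sec:thinfilm} to justify both the finiteness of $\mathcal{E}^\nu_1(w)$ and the exchange of limit and integral. A secondary, but manageable, technicality is the charge-balance correction $\zeta_H$: one must show $\zeta_H\to 0$ fast enough that its contribution to every term, including the long-range Coulomb energy, is $o(1)$, which follows from the exponential decay estimates on $\tilde u_{1,H}-u_1$ implicit in Lemma~\ref{thm:density_decay} together with Cauchy--Schwarz and the bilinear bound on $D_G$.
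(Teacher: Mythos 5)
Your high-level strategy is the same as the paper's: pass to the thin-film problems, build a competitor from $w$, invoke minimality of $\tilde u_{2,H}$, and let $H\to\infty$. But there are two substantive gaps in the way you propose to take the limit.

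\textbf{The right-hand side does not obviously converge.} You claim that $I_H^{\tilde m_{2,H}}(\tilde u_{2,H})-I_H^{\tilde m_{1,H}}(\tilde u_{1,H})$ converges to $\gamma^\nu_1$ ``by running the computation of Theorem~\ref{thm:convggam} in the thin-film setting''. But that computation hinges on the exponential decay estimates of Lemma~\ref{thm:density_decay}, and those come from Proposition~\ref{prop:ortner_TFW_stability_self}, which requires the nuclear densities to lie in $\mathcal{M}^M$ (in particular to satisfy $(H_2)$). The truncated thin-film densities $\tilde m_{1,H},\tilde m_{2,H}$ have large voids and so are not in $\mathcal{M}$; consequently no analogue of Lemma~\ref{thm:density_decay} is available for $\tilde v_H=\tilde u_{2,H}-\tilde u_{1,H}$. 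All one has is pointwise a.e.\ convergence along a subsequence plus uniform $L^\infty$ and $H^4_{\mathrm{unif}}\times H^2_{\mathrm{unif}}$ bounds — not enough for dominated convergence on the infinite strip $\Gamma_\infty$. Fortunately the argument only needs $\liminf_H\mathcal{E}^\nu_H(\tilde v_H)\ge\mathcal{E}^\nu_1(v)$, not convergence, but establishing even this one-sided bound is the hard part of the theorem. It requires a Fatou argument on the Thomas--Fermi excess and on $\tfrac12\int\|\nabla\psi_H\|^2$, and crucially a positivity lemma for the linearized TFW operator (the paper's Lemma~\ref{lem:Lposdef}, asserting $\int\|\nabla g\|^2+\int(\tfrac53\tilde u_{1,H}^{4/3}+\psi_{1,H})g^2\ge 0$ for $g\in H^1_{\mathrm{per}}(\Gamma_\infty)$) to lower-bound $\int\|\nabla\tilde v_H\|^2+\int f_H\tilde v_H^2$ by the corresponding quantity at $v$ plus vanishing cross terms. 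Your proposal has no analogue of this step; simply asserting convergence of the right-hand side papers over the genuine content of the proof.

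\textbf{The left-hand side competitor needs to be truncated, not merely corrected.} You build $\hat u_H=\tilde u_{1,H}+w+\zeta_H$ directly from the given $w\in\mathcal{Q}_1^\nu$ and a small charge-balancing correction $\zeta_H$. The paper instead first truncates $w$ to a compactly supported $w_\epsilon$ and only then performs a charge-balance correction $t_{H,\epsilon}g_\epsilon\tilde u_{1,H}$ whose support is disjoint from that of $w_\epsilon$; passing $H\to\infty$ and then $\epsilon\to 0^+$ recovers $\mathcal{E}^\nu_1(w)$. The compact support makes $\tilde u_{1,H}+w_{H,\epsilon}$ manifestly an element of $X^{\tilde m_{2,H}}$, kills the boundary terms in the integration by parts against the thin-film Euler--Lagrange equation, and — most importantly — turns the estimates on the Coulomb bilinear form into finite-range computations on which explicit bounds can be placed via $|x_3-y_3|\le C/\epsilon$. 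Working directly with $w$ forces you to justify dominated convergence for $D_G$ with the long-range $-2\pi|x_3-y_3|$ kernel, to control the cross terms between $\tilde u_{1,H}-u_1$ (which does not vanish in any $L^p(\Gamma_\infty)$ norm as $H\to\infty$) and $w$, and to verify that $\zeta_H$'s contribution to the Coulomb energy vanishes. None of this is impossible, but it is by no means the routine ``dominated convergence with uniform $L^\infty$ bounds'' you describe; the truncation is what makes the argument clean. Your first two stages (admissibility of $v$ and the identity $\mathcal{E}^\nu_1(v)=\gamma^\nu_1$ via Corollary~\ref{cor:establish_green_function}) are correct.
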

\begin{remark}
Note that $v \in \mathcal{Q}_1^{\nu}$. As a consequence of this theorem, we define the surface energy density as $\gamma^\nu_1 = \mathcal{E}^\nu_1(v) = \mathcal{E}^\nu_1(u_2-u_1)$.
\end{remark}

\begin{proof}
We present first an overview of the proof first, for the sake of readability, and subsequently provide all the relevant details in multiple step. Since $v \in \mathcal{Q}^\nu_1$, the strategy we adopt to prove the minimization principle is by showing that $\mathcal{E}^\nu_1(v) \le \mathcal{E}^\nu_1(w)$ for every $w \in \mathcal{Q}^\nu_1$. We employ the thin films theory, as mentioned previously, to establish this via the following steps:
\begin{enumerate}[(i)]
\item \emph{Step-1:} Given $w \in \mathcal{Q}^\nu_1$, construct a one-parameter family of functions $w_\epsilon$ in $\mathcal{Q}_1$, with $\epsilon \in \mathbb{R}$, such that, along a subsequence,
\begin{displaymath}
\lim_{\epsilon \rightarrow 0^+}\mathcal{E}^\nu_1(w_\epsilon) = \mathcal{E}^\nu_1(w).
\end{displaymath}
\item \emph{Step-2:} Starting from $w_\epsilon$, construct a two-parameter family of functions $w_{H,\epsilon}$ in $\mathcal{Q}_0$ such that the family of functions $\Tilde{u}_{1,H} + w_{H,\epsilon}$ are also candidate solutions for $\Tilde{m}_{2,H}$, and prove the following inequality:
\begin{displaymath}
\limsup_{H \rightarrow \infty} \mathcal{E}^\nu_1(w_{H, \epsilon}) \leq  \mathcal{E}^\nu_1(w_\epsilon).
\end{displaymath}
\item \emph{Step-3:} Defining $\mathcal{E}^\nu_H(w) = E_H^{\Tilde{m}_{2,H}}(\Tilde{u}_{1,H} + w) - E_H^{\Tilde{m}_{1,H}}(\Tilde{u}_{1,H})$, where $E^m_H$ is the thin films energy \eqref{eq:thinfilm_green}, show that
\begin{displaymath}
\liminf_{H \rightarrow \infty} \mathcal{E}^\nu_H(w_{H,\epsilon}) \leq  \limsup_{H \rightarrow \infty} \mathcal{E}^\nu_1(w_{H,\epsilon}) < \infty.
\end{displaymath} 
\item \emph{Step-4:} Defining $\Tilde{v}_{H} = \Tilde{u}_{2,H} - \Tilde{u}_{1,H}$ in terms of the solutions $\Tilde{u}_{1,H}$ and $\Tilde{u}_{2,H}$ of the thin films problem with nuclear densities $\Tilde{m}_{1,H}$ and $\Tilde{m}_{2,H}$, respectively, establish the following inequality:
\begin{displaymath}
\mathcal{E}^\nu_1(v) \leq \liminf_{H \rightarrow \infty} \mathcal{E}^\nu_H(\Tilde{v}_{H}).
\end{displaymath}
\end{enumerate}
The proof of the theorem follows by combining the conclusion of the foregoing steps. Indeed, since 
\begin{displaymath}
\mathcal{E}^\nu_H(\Tilde{v}_{H}) \le \mathcal{E}^\nu_H(w_{H,\epsilon}).
\end{displaymath}
We therefore see that
\begin{displaymath}
\begin{split}
\gamma^\nu_1 = \mathcal{E}^\nu_1(v) &\le \liminf_{H \rightarrow \infty} \mathcal{E}^\nu_H(\Tilde{v}_{H})\\
 &\le \liminf_{H \rightarrow \infty}  \mathcal{E}^\nu_H(w_{H,\epsilon})\\
 &\le \limsup_{H \rightarrow \infty} \mathcal{E}^\nu_1(w_{H,\epsilon})\\
 &\le  \mathcal{E}^\nu_1(w_{\epsilon}) .
\end{split}
\end{displaymath}
Taking the limit $\epsilon \to 0^{+}$ establishes the theorem. 

We present next detailed proofs of the individual steps. 

\emph{Step-1:} Define the function $f_\epsilon:\mathbb{R} \to \mathbb{R}$, for any $\epsilon > 0$, as
$$f_\epsilon(t) = \begin{cases}
1, & |t| \leq 1/\epsilon,\\ 
(-\epsilon t + 2), & 1/\epsilon < t < 2/\epsilon, \\
(\epsilon t + 2), & -2/\epsilon < t < -1/\epsilon, \\
0, & \text{otherwise.}
\end{cases} $$
Given $w \in \mathcal{Q}_1^\nu$, define $w_\epsilon(x) = f_\epsilon(x_3)w(x)$. It is easy to see that $|w_{\epsilon}| \leq |w|$,  $w_{\epsilon} \in H^1_{per}(\Gamma_\infty)$, $\text{supp}(\chi_{\Gamma_\infty}w_{\epsilon}) \subset  \Gamma_{\frac{4}{\epsilon}}$ and hence $w_\epsilon \in \mathcal{Q}_0$. Since $w > -u_1$, we also have $w_{\epsilon} \geq -u_1$, and hence, $w_\epsilon \in \mathcal{Q}_1$. Since 
\begin{displaymath}
    \begin{split}
               \|w - w_{\epsilon}\|_{H^1(\Gamma_\infty)}^2 &= 
        \| w - w_{\epsilon}\|_{H^1(\Gamma_{4/\epsilon} \setminus \Gamma_{2/\epsilon})}^2 + \| w\|_{H^1(\Gamma_\infty \setminus \Gamma_{4/\epsilon})}^2 \\
        &= \|w(1 - f_\epsilon(x_3))\|^2_{L^2(\Gamma_{4/\epsilon} \setminus \Gamma_{2/\epsilon})} + \| |\nabla (w(1 - f_\epsilon(x_3)))| \|^2_{L^2(\Gamma_{4/\epsilon} \setminus \Gamma_{2/\epsilon})} \\& + \| w\|_{H^1(\Gamma_\infty \setminus \Gamma_{4/\epsilon})}^2\\
        &\leq \| w \|_{L^2(\Gamma_{4/\epsilon} \setminus \Gamma_{2/\epsilon})}^2 + 2\| |\nabla w| \|_{L^2(\Gamma_{4/\epsilon} \setminus \Gamma_{2/\epsilon})}^2 + 2 \| \epsilon w \|_{L^2(\Gamma_{4/\epsilon} \setminus \Gamma_{2/\epsilon})}^2 \\ &+ \| w\|_{H^1(\Gamma_\infty \setminus \Gamma_{4/\epsilon})}^2 \\
                         &\leq C\| w \|_{H^1(\Gamma_{\infty} \setminus \Gamma_{2/\epsilon})}^2 + 
        2\epsilon^2 \| w \|_{L^2(\Gamma_\infty)},
    \end{split}
\end{displaymath}
for some constant $C > 0$, we see that $w_{\epsilon}$ converges to $w$ in $H^1(\Gamma_\infty)$ as $\epsilon$ goes to $0$. Similarly we have convergence in $L^p(\Gamma_\infty)$ when $w$ is in $L^p(\Gamma_\infty)$ for all $1 \le p < \infty$. Since $|w_\epsilon| \leq |w|$ and $w_\epsilon$ converges pointwise almost everywhere to $w$ along a subsequence, we can use Dominated Convergence Theorem (DCT) along with Lemma~\ref{lem:KEhelp} to show that 
$$ \int_{\Gamma_\infty} (|u_1 + w_\epsilon|^{10/3}-u_{1}^{10/3} - \frac{10}{3}u_1^{7/3}w_\epsilon) \rightarrow \int_{\Gamma_\infty} (|u_1 + w|^{10/3}-u_{1}^{10/3} - \frac{10}{3}u_1^{7/3}w).$$
The bound $|w_\epsilon| \leq |w|$ can further be used with DCT to yield
\begin{equation}
   \label{eq:minproblem_local1}
   \lim_{\epsilon \rightarrow 0^+}\mathcal{E}^\nu_1(w_\epsilon) = \mathcal{E}^\nu_1(w),
\end{equation} 
along a subsequence. We will henceforth work with this subsequence, and denote it as $w_\epsilon$. 

\emph{Step-2:} The one-parameter family of functions $w_\epsilon$ constructed in Step-1 may not satisfy the charge neutrality constraint. To rectify this, we construct a two-parameter family of functions $w_{H,\epsilon}$, starting from $w_\epsilon$, as follows. Define 
$$q_\epsilon = \int_{\Gamma_\infty}((u_1 + w_\epsilon)^2 - u_1^2 -\nu).$$ 
Since $w \in \mathcal{Q}_1^\nu$, we see that
\begin{displaymath}
\begin{split}
\frac{|q_\epsilon|}{\epsilon} &=  \frac{1}{\epsilon}\Bigg|\int_{\Gamma_\infty} w_\epsilon^2 + 2u_1w_\epsilon - w^2 -2u_1w \Bigg| \\ 
&\leq \frac{1}{\epsilon} \Big| \int_{\Gamma_\infty} w_\epsilon^2 - w^2 \Big| + \frac{1}{\epsilon} \Big|\int_{\Gamma_\infty} 2u_1(w_\epsilon-w) \Big| \\
   &= \frac{1}{\epsilon}\Big| \int_{\Gamma_{\frac{4}{\epsilon}} \setminus \Gamma_{\frac{2}{\epsilon}}} (w_\epsilon^2 - w^2) - \int_{\Gamma_{\infty} \setminus \Gamma_{\frac{4}{\epsilon}}}w^2\Big| + 
   \frac{1}{\epsilon}\Big| \int_{\Gamma_{\frac{4}{\epsilon}} \setminus \Gamma_{\frac{2}{\epsilon}}}  2u_1(w_\epsilon - w) - \int_{\Gamma_{\infty} \setminus \Gamma_{\frac{4}{\epsilon}}}2u_1w \Big| \\
    &\le \frac{1}{\epsilon} \int_{\Gamma_{\frac{4}{\epsilon}} \setminus \Gamma_{\frac{2}{\epsilon}}} |w_\epsilon^2 - w^2|  + \frac{1}{\epsilon} \int_{\Gamma_{\infty} \setminus \Gamma_{\frac{4}{\epsilon}}}w^2 + 
   \frac{1}{\epsilon} \int_{\Gamma_{\frac{4}{\epsilon}} \setminus \Gamma_{\frac{2}{\epsilon}}} 2u_1|w_\epsilon - w|  + \frac{1}{\epsilon}\int_{\Gamma_{\infty} \setminus \Gamma_{\frac{4}{\epsilon}}}2u_1|w| \\
   &\leq  \frac{1}{\epsilon} \int_{\Gamma_{\infty} \setminus \Gamma_{\frac{2}{\epsilon}}}w^2 + 
     \frac{1}{\epsilon}\int_{\Gamma_{\infty} \setminus \Gamma_{\frac{2}{\epsilon}}}2u_1|w| \\
   &\leq   \int_{\Gamma_{\infty} \setminus \Gamma_{\frac{2}{\epsilon}}}|x_3|w^2 + 
     \int_{\Gamma_{\infty} \setminus \Gamma_{\frac{2}{\epsilon}}}2u_1|x_3||w|. 
\end{split}
\end{displaymath}
We thus get, as a consequence of DCT, that 
\begin{equation} \label{eq:qepsilon_limit}
   \lim_{\epsilon \rightarrow 0^+} \frac{q_\epsilon}{\epsilon} =0. 
\end{equation}
For $H, \epsilon \in \mathbb{R}^+$, we follow a procedure analogous to that used in Lemma~4.3 in ~\cite{CE11} and construct
$$w_{H,\epsilon}(x)  = w_\epsilon(x) + t_{H,\epsilon} g_\epsilon(x_3) \Tilde{u}_{1,H}(x),$$
where $g_\epsilon:\mathbb{R}\to\mathbb{R}$ is defined as
$$g_\epsilon(t) =  f_{\epsilon}(t - (4/\epsilon + 1)).$$ In the equation displayed above, $t_{H,\epsilon} \in \mathbb{R}$ is a constant that is chosen to obtain the charge neutrality of $w_{H,\epsilon}$. By construction, $w_\epsilon(x)$ and $t_{H,\epsilon} g_\epsilon(x_3) \Tilde{u}_{1,H}(x)$ are supported on disjoint sets, the distance between which is at least $1$. Therefore, 
$$\int_{\Gamma_\infty}w_{H, \epsilon}^2 + 2\Tilde{u}_{1,H}w_{H, \epsilon} - \nu = t_{H,\epsilon}^2 \int_{\Gamma_\infty}g_\epsilon^2\Tilde{u}_{1,H}^2 + 2t_{H,\epsilon}\int_{\Gamma_\infty}g_\epsilon \Tilde{u}_{1,H}^2 + 2 \int_{\Gamma_\infty}(\Tilde{u}_{1,H}-u_1)w_\epsilon + q_\epsilon.$$ 
For charge neutrality of $w_{H,\epsilon}$, we require the expression shown above to vanish. The quadratic equation for $t_{H,\epsilon}$ has real roots if 
$$ \left(\int_{\Gamma_\infty}g_\epsilon \Tilde{u}_{1,H}^2\right)^2 \geq \left(\int_{\Gamma_\infty}g_\epsilon^2\Tilde{u}_{1,H}^2\right)\left(2 \int_{\Gamma_\infty}(\Tilde{u}_{1,H}-u_1)w_\epsilon + q_\epsilon\right).$$
Consider the term
$$\frac{(\int_{\Gamma_\infty}g_\epsilon \Tilde{u}_{1,H}^2)^2}{(\int_{\Gamma_\infty}g_\epsilon^2\Tilde{u}_{1,H}^2)} - \left(2 \int_{\Gamma_\infty}(\Tilde{u}_{1,H}-u_1)w_\epsilon + q_\epsilon\right).$$ 
It is well defined since according to Theorem~\ref{thm:thinfilm}, for any $R> 0$ there exists a positive constant $\nu_R$ such that $\inf_{|x_3|< R} \Tilde{u}_{1,H} \geq \nu_R$.
Noting that $\Tilde{u}_{1,H}$ converges to $u_1$ in the limit $H \to \infty$ (see Appendix~\ref{subsec:thinfilm}), and $q_\epsilon \to 0$ in the limit $\epsilon \to 0$, we see that
\begin{displaymath}
\lim_{\epsilon \to 0} \lim_{H \to \infty} \frac{(\int_{\Gamma_\infty}g_\epsilon \Tilde{u}_{1,H}^2)^2}{(\int_{\Gamma_\infty}g_\epsilon^2\Tilde{u}_{1,H}^2)} - \left(2 \int_{\Gamma_\infty}(\Tilde{u}_{1,H}-u_1)w_\epsilon + q_\epsilon\right) = \lim_{\epsilon \to 0} \lim_{H \to \infty} \frac{(\int_{\Gamma_\infty}g_\epsilon \Tilde{u}_{1,H}^2)^2}{(\int_{\Gamma_\infty}g_\epsilon^2\Tilde{u}_{1,H}^2)}.
\end{displaymath}
 Hence we see that the limit displayed above is positive for large enough $H$ and small enough $\epsilon$. We therefore get two real values of $t_{H,\epsilon}$ such that, for large $H$ and small $\epsilon$,
$$\int_{\Gamma_\infty}w_{H,\epsilon}^2 + 2\Tilde{u}_{1,H}w_{H,\epsilon} = \int_{\Gamma_\infty} \nu.$$ 
Fix the larger of these two values as $t_{H,\epsilon}$. For small enough $\epsilon$, we then have,
\begin{align*}
    \lim_{H \rightarrow \infty}t_{H, \epsilon} &= 
    \frac{- \int_{\Gamma_\infty}g_\epsilon u_{1}^2 + \sqrt{  \left(\int_{\Gamma_\infty}g_\epsilon u_{1}^2\right)^2 - \left(\int_{\Gamma_\infty}g_\epsilon^2u_{1}^2\right) q_\epsilon}} {\int_{\Gamma_\infty}g_\epsilon^2u_{1}^2} \\
    & = - z_\epsilon + z_\epsilon\sqrt{1 - \frac{q_\epsilon} {z_\epsilon^2\int_{\Gamma_\infty} g_\epsilon^2u_{1}^2}},
\end{align*}
where $z_\epsilon = \frac{ \int_{\Gamma_\infty}g_\epsilon u_{1}^2}{\int_{\Gamma_\infty}g_\epsilon^2u_{1}^2}$ and is of the order of $\frac{1 +\epsilon}{1+\epsilon^2}$. Since $q_\epsilon$ is at least of the order of $\epsilon$ from equation~\eqref{eq:qepsilon_limit} we use the binomial theorem for small enough $\epsilon$ to get
$$\lim_{H \rightarrow \infty}|t_{H, \epsilon}| = C\frac{|q_\epsilon|}{\int_{\Gamma_\infty} g_\epsilon^2} + O(\epsilon^4) \leq C|q_\epsilon| \epsilon + O(\epsilon^4).$$ Here, $C$ is a constant that depends only on the uniform bounds of $m_1$.

The two-parameter family of functions $w_{H,\epsilon}$ constructed above thus satisfies the charge neutrality condition for large enough $H$ and small enough $\epsilon$. Moreover, $ w_{H, \epsilon} \in H^1_{per}(\Gamma_\infty)$ and is compactly supported in $x_3$ direction, hence belongs to the space $X^{\Tilde{m}_{2,H}}$, as defined in equation~\eqref{eq:thinfilm_green}.

It is easy to see that $w_{H, \epsilon} \in \mathcal{Q}_0$. Therefore, 
\begin{equation}
    \label{eq:constructwhe_loc1}
    \mathcal{E}^\nu_1(w_{H, \epsilon}) = \mathcal{E}^\nu_1(w_\epsilon) + \mathcal{E}_{m_1}^0(t_{H,\epsilon} g_\epsilon \Tilde{u}_{1,H}) + D_G\Big(w_\epsilon^2 + 2u_1w_\epsilon - \nu, t_{H,\epsilon} g_\epsilon \Tilde{u}_{1,H} \Big).
\end{equation}

Choosing $\epsilon$ and $H$ such that $t_{H,\epsilon} \leq 1$, we get 
\begin{displaymath}
 |\mathcal{E}_{1}^0(t_{H,\epsilon} g_\epsilon \Tilde{u}_{1,H})| \leq \frac{C_1 t_{H,\epsilon}}{\epsilon} + C_1 D_G(t_{H,\epsilon}^2 g_\epsilon^2  + t_{H,\epsilon} g_\epsilon, t_{H,\epsilon}^2 g_\epsilon^2  + t_{H,\epsilon} g_\epsilon),     
\end{displaymath}
where $C_1 > 0$ is  only dependent on uniform bounds of $m_1$. Now since $|G(x) + 2\pi |x_3| - \frac{1}{|x|}| \in L^\infty(\mathbb{R}^3)$ we get, 
$$|D_G(t_{H,\epsilon}^2 g_\epsilon^2  + t_{H,\epsilon} g_\epsilon, t_{H,\epsilon}^2 g_\epsilon^2  + t_{H,\epsilon} g_\epsilon)| \leq C_2\left( \frac{t^2_{H,\epsilon}}{\epsilon^2} + \frac{4}{\epsilon}\frac{t^2_{H,\epsilon}}{\epsilon^2} +  \frac{t^2_{H,\epsilon}}{\epsilon^2} \right).$$
Here $C_2 >0$, doesn't depend on $\epsilon$ or $H$. Note that we have used the fact that $|x_3-y_3| \leq \frac{4}{\epsilon}$ and $\int_{\Gamma_K}\int_{\Gamma_K} \frac{1}{|x-y|} \leq C K^2$, for some constant $C$. We thus obtain
\begin{equation}
    \label{eq:constructwhe_loc2}
    \limsup_{\epsilon \rightarrow 0^+} \limsup_{H \rightarrow \infty} \, \mathcal{E}_{1}^0(t_{H,\epsilon} g_\epsilon \Tilde{u}_{1,H}) =0.
\end{equation}
Further, use the fact that the supports of $w_\epsilon$ and $g_\epsilon$ are separated by a distance which is at least larger than $1$, we get  
$$\lvert D_G(w_\epsilon^2 + 2u_1w_\epsilon - \nu, t_{H,\epsilon} g_\epsilon \Tilde{u}_{1,H}) \rvert \leq C_3 \left( \frac{t_{H,\epsilon}}{\epsilon} + \left(\frac{8}{\epsilon} + 1\right)\frac{t_{H,\epsilon}}{\epsilon} \right)\int_{\Gamma_\infty}(w_\epsilon^2 + 2u_1w_\epsilon - \nu).$$
In deriving the bounds above, we have used that $|x_3-y_3| \leq \frac{8}{\epsilon} + 1$. Note that all terms will be at least of the order of $\epsilon$ except $t_{H,\epsilon}{\epsilon^2}$. Using $\lim_{\epsilon \rightarrow 0^+} \frac{q_\epsilon}{\epsilon} = 0$ we get
\begin{equation}
   \label{eq:constructwhe_loc3}
    \limsup_{\epsilon \rightarrow 0^+} \limsup_{H \rightarrow \infty} D_G\Big(w_\epsilon^2 + 2u_1w_\epsilon - \nu, t_{H,\epsilon} g_\epsilon \Tilde{u}_{1,H} \Big) \leq 0.
\end{equation}
Hence using equations~\eqref{eq:constructwhe_loc1}, \eqref{eq:constructwhe_loc2}, and \eqref{eq:constructwhe_loc3} we get, 
$$\limsup_{\epsilon \rightarrow 0^+}\limsup_{H \rightarrow \infty} \mathcal{E}^\nu_1(w_{H, \epsilon}) \leq \limsup_{\epsilon \rightarrow 0^+} \mathcal{E}^\nu_1(w_\epsilon).$$
Hence, we get from equation~\eqref{eq:minproblem_local1} that 
\begin{equation}
   \label{eq:minproblem_local2}
   \limsup_{\epsilon \rightarrow 0^+}\limsup_{H \rightarrow \infty}\mathcal{E}^\nu_1(w_{H,\epsilon}) \leq \mathcal{E}^\nu_1(w).
\end{equation}

\emph{Step-3:} Define $$\mathcal{E}^\nu_H(w) = E_H^{\Tilde{m}_{2,H}}(\Tilde{u}_{1,H} + w) - E_H^{\Tilde{m}_{1,H}}(\Tilde{u}_{1,H}),$$ where $E_H^{m}$ is defined as in Section~\ref{subsec:thinfilm}, for all $w$ such that $\Tilde{u}_{1,H} + w \in X^{\Tilde{m}_{2,H}}$. Using equation~\eqref{eq:thin_film_EL}, we can rewrite the foregoing equation as 
\begin{displaymath}
\begin{split}
\mathcal{E}^\nu_H(u) &= \int_{\Gamma_\infty} \lVert\nabla u\rVert^2 + \int_{\Gamma_\infty} (|u+\Tilde{u}_{1,H}|^{10/3}-\Tilde{u}_{1,H}^{10/3} - \frac{10}{3}\Tilde{u}_{1,H}^{7/3}u)  + \int_{\Gamma_\infty} \psi_{1,H} (u^2 - \nu_H) \\ 
& + \frac{1}{2}D_G(u^2 +2\Tilde{u}_{1,H}u -\nu_H, u^2 +2\Tilde{u}_{1,H}u -\nu_H),
\end{split}
\end{displaymath}
where $\nu_H = \Tilde{m}_{2,H}-\Tilde{m}_{1,H}$. Note that the terms involving the surface integrals when using the divergence theorem vanish since $u \in H^1_{\text{per}}(\Gamma_\infty)$.  

Since $\Tilde{u}_{1,H} + w_{H,\epsilon} \in X^{\Tilde{m}_{2,H}}$ and $\nu_H$ is compactly supported independently of $H$, it is trivial to see that $\lim_{H \rightarrow \infty} |\mathcal{E}^\nu_H(w_{H,\epsilon}) - \mathcal{E}^\nu_1(w_{H,\epsilon})| = 0 $. Therefore, 
\begin{equation}
    \label{eq:minproblem_local3}
    \liminf_{H \rightarrow \infty} \mathcal{E}^\nu_H(w_{H,\epsilon}) \leq  \limsup_{H \rightarrow \infty} \mathcal{E}^\nu_1(w_{H,\epsilon}) < \infty.
\end{equation}

\emph{Step-4:} We have $\mathcal{E}^\nu_H(\Tilde{v}_{H}) = E_H^{\Tilde{m}_{2,H}}(\Tilde{v}_{H} + \Tilde{u}_{1,H}) - E_H^{\Tilde{m}_{1,H}}(\Tilde{u}_{1,H})$, and since $E_H^{\Tilde{m}_{2,H}}(\Tilde{v}_{H} + \Tilde{u}_{1,H}) \leq E_H^{\Tilde{m}_{2,H}}(\Tilde{u}_{1,H} + w_{H,\epsilon})$ we get
\begin{equation}
   \label{eq:minproblem_local4}
     \mathcal{E}^\nu_H(\Tilde{v}_{H}) \leq \mathcal{E}^\nu_H(w_{H, \epsilon}). 
\end{equation}
Let us write for convenience $\mathcal{E}^\nu_H(\Tilde{v}_{H})$, using neutrality, as
\begin{displaymath}
\begin{split}
\mathcal{E}^\nu_H(\Tilde{v}_{H}) &= \int_{\Gamma_\infty} \lVert\nabla \Tilde{v}_{H}\rVert^2 + \int_{\Gamma_\infty}\frac{5}{3}\Tilde{u}_{1,H}^{4/3}\Tilde{v}_{H}^2 +  \int_{\Gamma_\infty}\psi_{1,H} \Tilde{v}_{H}^2 \\ 
 &+\int_{\Gamma_\infty} (\Tilde{u}_{1,H}+\Tilde{v}_{H})^{10/3}-\Tilde{u}_{1,H}^{10/3} - \frac{5}{3}\Tilde{u}_{1,H}^{4/3}\Tilde{v}_{H}^2 - \frac{10}{3}\Tilde{u}_{1,H}^{7/3}\Tilde{v}_{H} \\ 
 &+ \frac{1}{2}\int_{\Gamma_\infty} \psi_H(\Tilde{v}_{H}^2 + 2u_1\Tilde{v}_{H} -\nu) - \int_{\Gamma_\infty} \psi_{1,H} \nu.
\end{split}
\end{displaymath}
In the beginning of this section we showed that $\Tilde{u}_{1,H}, \Tilde{v}_{H}, \psi_{1,H}, \psi_{H}$ converge everywhere along a subsequence to $u_1, v, \phi_1, \phi$ pointwise almost everywhere, respectively. Note that in this step whenever we are taking a limit or liminf, it is on the convergent subsequence of the thin film solutions. For convenience, we still represent this subsequence by the subscript $H$. Using Fatou's lemma and lemma~\ref{lem:KEhelp}, we see that
\begin{equation}
\label{eq:liminf_local1}
\begin{split}
    \int_{\Gamma_\infty} (\Tilde{u}_{1,H}+v)^{10/3}-\Tilde{u}_{1,H}^{10/3} - \frac{5}{3}\Tilde{u}_{1,H}^{4/3}v^2 - \frac{10}{3}\Tilde{u}_{1,H}^{7/3}v  &\leq \liminf_{H \rightarrow \infty} \left(\int_{\Gamma_\infty} (\Tilde{u}_{1,H}+\Tilde{v}_{H})^{10/3}-\Tilde{u}_{1,H}^{10/3}\right. \\  
    & - \left.\int_{\Gamma_\infty} \frac{5}{3}\Tilde{u}_{1,H}^{4/3}\Tilde{v}_{H}^2 + \frac{10}{3}\Tilde{u}_{1,H}^{7/3}\Tilde{v}_{H}\right).
\end{split}
\end{equation}
 Further, $\psi_{1,H}$ is uniformly bounded almost everywhere, we get, using the compactness of $\nu$, that 
\begin{equation}
    \label{eq:liminf_local4}
    \lim_{H \rightarrow \infty} \int_{\Gamma_\infty} \psi_{1,H} \nu = \int_{\Gamma_\infty} \phi_{1} \nu.
\end{equation}

We also have that $\psi_H$ converges to $\phi$ pointwise almost everywhere along a subsequence. Hence using the uniform bound of $\psi_H$ in $H^2_{\text{unif}}(\mathbb{R}^3)$, we conclude that $\nabla \psi_H$ converges to $\nabla \phi$ pointwise almost everywhere, along a subsequence, by uniqueness of weak limits. We again represent this subsequence by the subscript $H$, for convenience.    

Further using lemma~\ref{lem:KEhelp} and lemma~\ref{lem:Lposdef} we see that for large enough $H$, $\int_{\Gamma_\infty} |\nabla \psi_H|^2$ is bounded by $2\mathcal{E^\nu}(0) + 1$. Hence using Fatou's lemma,
\begin{equation}
\label{eq:liminf_local2}
\begin{split}
    \frac{1}{2}\int_{\Gamma_\infty} \phi(v^2 + 2u_1v -\nu) &= 
    \frac{1}{2}\int_{\Gamma_\infty} \lVert\nabla \phi\rVert^2\\
    &\leq \liminf_{H \rightarrow \infty}  \frac{1}{2}\int_{\Gamma_\infty} \lVert\nabla \psi_H\rVert^2\\
    &= \liminf_{H \rightarrow \infty} \frac{1}{2}\int_{\Gamma_\infty} \psi_H(\Tilde{v}_{H}^2 + 2u_1\Tilde{v}_{H} -\nu).
\end{split}
\end{equation}
Note that we have used the fact that 
$\nabla^2 \psi_H = \Tilde{v}_{H}^2 +2u_1\Tilde{v}_{H} -\nu$ and $\nabla \psi_H \in (L^2_{per}(\Gamma_\infty))^3$. Using Lemma ~\ref{lem:Lposdef} we get
\begin{displaymath}
\begin{split}
    \int_{\Gamma_\infty} \lVert\nabla \Tilde{v}_{H}\rVert^2 + \int_{\Gamma_\infty}f_H\Tilde{v}_{H}^2  \geq &\int_{\Gamma_\infty} \lVert\nabla v\rVert^2 + \int_{\Gamma_\infty}f_Hv^2 \\&- 2\int_{\Gamma_\infty} (\Tilde{v}_{H}-v)\nabla^2 v + 2\int_{\Gamma_\infty}f_H(\Tilde{v}_{H}-v)v,
\end{split}
\end{displaymath}
where $f_H = \frac{5}{3}\Tilde{u}_{1,H}^{4/3} + \psi_{1,H}$. Note that $f_H$ and $\Tilde{v}_{H}-v$ are uniformly bounded almost everywhere. Therefore, as $H \rightarrow \infty$, $\int_{\Gamma_\infty} (\Tilde{v}_{H}-v) \nabla^2 v$ and $\int_{\Gamma_\infty}f_H(\Tilde{v}_{H}-v)v$ converge to $0$ using DCT as $v \in H^2(\Gamma_\infty) \cap L^1(\Gamma_\infty)$. Hence,
\begin{equation*}
    \liminf_{H \rightarrow \infty} \int_{\Gamma_\infty} \lVert\nabla v\rVert^2 + \int_{\Gamma_\infty}f_H v^2 \leq \liminf_{H \rightarrow \infty} \int_{\Gamma_\infty} \lVert\nabla \Tilde{v}_{H}\rVert^2 + \int_{\Gamma_\infty}f_H \Tilde{v}_{H}^2 .
\end{equation*}
Since $v \in L^1\cap L^2(\Gamma_\infty)$ and $f_H$ is uniformly bounded almost everywhere and converge to $\frac{5}{3}u_{1}^{4/3} + \phi_{1}$ pointwise almost everywhere, using DCT we obtain 
\begin{equation}
 \label{eq:liminf_local3}
        \int_{\Gamma_\infty} \lVert\nabla v\rVert^2 + \int_{\Gamma_\infty}\frac{5}{3}u_{1}^{4/3}v +  \int_{\Gamma_\infty}\psi_{1} v^2 \leq \liminf_{H \rightarrow \infty} \int_{\Gamma_\infty} \lVert\nabla \Tilde{v}_{H}\rVert^2 + \int_{\Gamma_\infty}\Big(\frac{5}{3}\Tilde{u}_{1,H}^{4/3} + \psi_{1,H}\Big)\Tilde{v}_{H}^2 
\end{equation}
This concludes the proof of the theorem.
\end{proof}

\subsection{Uniqueness of the minimizer}
Since $w \in \mathcal{Q}_1^\nu$ implies that $w \in L^1\cap L^2(\Gamma_\infty)$, we get, using equation~\eqref{eq:TFW_smeared_EL}, that
\begin{displaymath}
\begin{split}
 \mathcal{E}^\nu(w) &= \int_{\Gamma_\infty} (\lVert\nabla (u_1+w)\rVert^2 - \lVert\nabla u_1\rVert^2) + \int_{\Gamma_\infty}(w^2+2u_1w) \phi_1  \int_{\Gamma_\infty} \left( |u_1 + w|^{10/3} - u_1^{10/3} \right) \\ 
 &+ \frac{1}{2}D_G(w^2 + 2u_1w -\nu, w^2 + 2u_1w -\nu)  - \int_{\Gamma_\infty} \phi_1 \nu.
\end{split}
\end{displaymath}
Using an argument similar to that presented in section 4.4 of \cite{CE11}, we see that $v$ is minimizer of $\mathcal{E}^\nu_1$ over $\mathcal{Q}_1^\nu$ if and only if $(u_1+v)^2$ solves the minimization problem  
$$\inf_{\rho\in\mathcal{K}} \; G(\rho),$$
where 
\begin{displaymath}
\begin{split}
G(\rho) &= \int_{\Gamma_\infty} (\lVert\nabla \sqrt{\rho}\rVert^2 - \lVert\nabla u_1\rVert^2) +  \int_{\Gamma_\infty}(\rho - u_1^2) \phi_1 + \int_{\Gamma_\infty} \left( \rho^{5/3} - u_1^{10/3} \right)\\
& + \frac{1}{2}D_G(\rho - u_1^2 -\nu, \rho - u_1^2 -\nu) , \\
\mathcal{K} = \Big\{ & \rho \geq 0 \;:\; \sqrt{\rho} - u_1 \in H^1_{per}(\Gamma_\infty), |x_3|(\rho - u_1^2) \in L^1(\Gamma_\infty),\\
& |x_3|(\sqrt{\rho}-u_1)^2 \in L^1(\Gamma_\infty), \int_{\Gamma_\infty}(\rho - u_1^2)=\int_{\Gamma_\infty}\nu \Big\}. 
\end{split}
\end{displaymath}
It is straightforward to see that $\mathcal{K}$ is a convex set and that $D_G(\rho - u_1^2 - \nu, \rho - u_1^2 - \nu)$ is convex on $\mathcal{K}$, as shown by Proposition 2.3 in~\cite{BTF00}. Therefore, $G(\rho)$ is strictly convex over $\mathcal{K}$, implying that $\mathcal{G}$ can have at most one minimizer within $\mathcal{K}$. We thus see that $v$ is the unique minimizer of $\mathcal{E}^\nu_1$ over $\mathcal{Q}_1^\nu$.  

\section{Discussion and Conclusion}
We present a few applications of the main results of the paper and discuss certain outstanding questions that remain unresolved. 

\subsection{Approximate evaluation of relative energies of quasi-planar defects}
A central conclusion of the results presented here is the \emph{short range} nature of the electronic response of the TFW model to nuclear perturbations, even for semi-infinite configurations like the one shown in Figure~\ref{fig:planar_defect_schematic}. It is to be noted that in practice, we rarely evaluate energies over infinite domains, but rather compute them on finite domains large enough for the electronic fields due to the perturbation to decay. The theory developed in this work helps us quantify the convergence of the corresponding approximate relative energy to the true relative energy. Suppose that the reference nuclear distribution $m_1$ is smooth. The perturbed nuclear distribution is denoted as $m_2 = m_1 + \nu$, with the perturbation $\nu$ as in equation~\eqref{eq:nearly_planar_perturbation}. Let us define the approximate relative energy $\tilde\gamma^\nu_{K}$, with $K \in \mathbb{N}$, as
   \begin{displaymath}
       \begin{split}
           \tilde\gamma_{K}^\nu &= \int_{\Gamma_K} \lVert\nabla v\rVert^2 + \int_{\Gamma_K} (u_1 + v)^{10/3}- u_{1}^{10/3} - \frac{10}{3}u_1^{7/3}v\\  
 &+ \frac{1}{2}\int_{\Gamma_K} \phi ((u_1 + v)^2 - u_1^2 -\nu) + \int_{\Gamma_K} \phi_{1} (v^2  - \nu).\\
       \end{split}
   \end{displaymath} 
The difference between the approximate relative energy $\tilde\gamma^\nu_{K}$ and the true relative energy $\gamma^\nu_1$ can be controlled, as shown by the following proposition:
\begin{proposition}
If $K>L_0$ there exist positive constant $C_1$ and $C_2$ such that,
   \begin{equation*}
       \lVert \gamma^\nu_1 - \tilde\gamma_K^\nu\rVert \le C_1 \exp(-C_2(K - L_0)).
   \end{equation*}
\end{proposition}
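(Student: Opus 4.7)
The plan is to observe first that the proposition reduces to bounding a tail integral over $\Gamma_\infty \setminus \Gamma_K$. Since $\text{supp}(\nu|_{\Gamma_\infty}) \subset \overline{\Gamma}_{L_0}$ and $K > L_0$, the perturbation $\nu$ vanishes on $\Gamma_\infty \setminus \Gamma_K$. Subtracting the two expressions, I would therefore write
\begin{displaymath}
\gamma^\nu_1 - \tilde\gamma^\nu_K = \int_{\Gamma_\infty \setminus \Gamma_K}\Bigl[ \lVert\nabla v\rVert^2 + F(u_1,v) + \tfrac{1}{2}\phi\bigl((u_1+v)^2 - u_1^2\bigr) + \phi_1 v^2\Bigr],
\end{displaymath}
where $F(u_1,v) := (u_1+v)^{10/3} - u_1^{10/3} - \tfrac{10}{3}u_1^{7/3}v$. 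The entire task is then to show that each of these four integrands decays exponentially in the distance to $\Gamma_{L_0}$, and that the integral over $\Gamma_\infty \setminus \Gamma_K$ inherits an exponential bound in $K - L_0$.

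For the first, third, and fourth integrands, I would invoke Lemma~\ref{thm:density_decay} directly: for $y \in \Gamma_\infty \setminus \Gamma_{L_0}$ one has $|\partial_s v(y)| + |\phi(y)| \leq k_1 e^{-k_2\,\mathrm{dist}(y,\Gamma_{L_0})}$ for $|s| \le 2$, and together with the $L^\infty$ bounds on $u_1$ and $\phi_1$ (the latter available because $m_1$ was assumed smooth), each of these three integrands is dominated pointwise by a constant multiple of $e^{-2k_2\,\mathrm{dist}(y,\Gamma_{L_0})}$. For the nonlinear term $F(u_1,v)$, I would apply a second-order Taylor expansion of the map $t \mapsto t^{10/3}$ about $u_1$: since $u_1$ is bounded and strictly positive and $v$ is uniformly bounded on $\Gamma_\infty \setminus \Gamma_K$ (indeed, exponentially small there by Lemma~\ref{thm:density_decay}), the function $t \mapsto t^{10/3}$ is $C^2$ with uniformly bounded second derivative on the relevant interval, giving $|F(u_1,v)| \le C v^2$ pointwise, which again decays like $e^{-2k_2\,\mathrm{dist}(y,\Gamma_{L_0})}$.

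To finish, I would estimate $\mathrm{dist}(y,\Gamma_{L_0}) \ge |x_3| - L_0/2$ for $y = (x_1,x_2,x_3) \in \Gamma_\infty \setminus \Gamma_K$, so that $\mathrm{dist}(y,\Gamma_{L_0}) \ge (K - L_0)/2$ on this set. Using the periodicity of the integrands in $x_1, x_2$ to reduce the cross-sectional integral to a unit square and integrating the remaining one-dimensional exponential in $x_3$ from $K/2$ to $\infty$ yields a bound of the form $C_1 e^{-C_2(K - L_0)}$ for each of the four terms, with constants depending only on the uniform bounds of $m_1, m_2$ and on $L_0$. Collecting the four contributions gives the claim.

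I do not expect any of these steps to be genuinely difficult: the decay estimate of Lemma~\ref{thm:density_decay} does all the real work. The only mildly delicate point is the nonlinear term, where one must check that $u_1 + v$ stays in a compact subset of $(0,\infty)$ on $\Gamma_\infty \setminus \Gamma_K$ so that the Taylor remainder bound is uniform; this follows because $u_1$ is bounded away from zero (Theorem~\ref{thm:TFW_smeared_EL_exst_uniq}) and $\|v\|_{L^\infty(\Gamma_\infty\setminus\Gamma_K)}$ is small for $K$ sufficiently large, so one may absorb any $K$-independent threshold into the constant $C_1$.
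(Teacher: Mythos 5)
Your proposal is correct and follows essentially the same route as the paper: cancel the $\nu$ terms on $\Gamma_\infty \setminus \Gamma_K$, bound each remaining integrand pointwise by the exponential decay of $v$, $\nabla v$, $\phi$ from Lemma~\ref{thm:density_decay} together with the $L^\infty$ bounds on $u_1, \phi_1$, and integrate the tail. The only cosmetic difference is that the paper invokes the prepackaged algebraic inequality Lemma~\ref{lem:KEhelp} (yielding $|F|\leq C(v^2+|v|^{10/3})$) where you re-derive the pointwise bound $|F|\leq Cv^2$ by a second-order Taylor expansion of $t\mapsto t^{10/3}$; on $\Gamma_\infty\setminus\Gamma_K$ with $v$ bounded these two bounds are equivalent, and your appeal to strict positivity of $u_1$ is actually superfluous since the second derivative $\tfrac{70}{9}t^{4/3}$ stays bounded down to $t=0$.
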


\begin{proof}
We have, according to lemma~\ref{lem:KEhelp}, the existence of a positive constant $C$ such that
\begin{equation}
\label{eq:app_algo_loc1}
     \left \lvert \int_{\Gamma_\infty \setminus \Gamma_K} (u_1 + v)^{10/3}- u_{1}^{10/3} - \frac{10}{3}u_1^{7/3}v \right \rvert \leq C \int_{\Gamma_\infty \setminus \Gamma_K} (v^2 + |v|^{10/3}).
\end{equation}
Using equation~\eqref{eq:app_algo_loc1} and the compactness of $\nu$ in $x_3$ direction we see that 
\begin{displaymath}
    \begin{split}
        \lVert \gamma^\nu_1 - \tilde\gamma_K^\nu\rVert &\leq \int_{\Gamma_\infty \setminus \Gamma_K} \lVert\nabla v\rVert^2 + C \int_{\Gamma_\infty \setminus \Gamma_K} (v^2 + |v|^{10/3})\\ 
        & +  \frac{1}{2}\int_{\Gamma_\infty \setminus \Gamma_K} |\phi (v^2+2u_1v)| + \int_{\Gamma_\infty \setminus \Gamma_K} |\phi_{1}| v^2.\\
    \end{split}
\end{displaymath}
According to Lemma~\ref{thm:density_decay},
$\|\nabla v(y)\|$, $|v(y)|$, and $|\phi(y)|$, for any $y \in \Gamma_\infty \setminus \Gamma_{K}$, are bounded by 
$k_1e^{-k_2 dist(y, \Gamma_{L_0})}$ for positive constants $k_1, k_2$ which depend only on the uniform bounds of $m_1$ whenever $K > L_0$.
Using this along with the fact that $u_1$ and $\phi_1$ belong to $L^\infty(\mathbb{R}^3)$ yields the desired bounds.
\end{proof}

\subsection{Stacking fault energy in bicrystals}

As alluded to in the introduction, an important application of the theory developed here is the study of generalized stacking fault energies. The simplest approximation in this regard is obtained by sliding one half of a perfect crystal over the other half along a specified direction on the plane separating the two halves. The relative energy, called the \emph{generalized stacking fault energy}, plays a key role in the study of dislocations in crystals. As an application of the theory developed in this work, we study stacking faults by a slight modification of the geometry considered earlier. Let $m_1 \in \mathcal{M}$ be a nuclear distribution that is periodic, with unit period, and symmetric about each of the coordinate axis. Define, for $L_0 \in \mathbb{N}$, the domain $\Tilde{\Gamma}_{L_0} = (-\frac{1}{2}, \frac{1}{2}) \times (-\frac{1}{2}, \frac{1}{2}) \times (-\frac{1}{2}, L_0 - \frac{1}{2})$. To study the generalized stacking fault energy, we shift the nuclear distribution contained in $\tilde\Gamma_{L_0}$ by a distance $b$ along the $x_1$ direction; a similar analysis can be carried out for a shift along any other direction. The corresponding nuclear perturbation $\nu$ is then given by
$$\nu|_{\Gamma_\infty}(x)=\begin{cases}
m_b(x) - m_1(x), & x \in \overline{\Tilde{\Gamma}}_{L_0},\\ 
0, & x \in \Gamma_\infty \setminus \Tilde{\Gamma}_{L_0} .
\end{cases} $$
where $m_b$ is defined as $m_b(x_1, x_2, x_3) = m_1(x_1 + b, x_2, x_3)$. To get bounds on the generalized stacking fault energy, we define two new fields $\tilde{v}$ and $\tilde\phi$ as follows: 
\begin{displaymath}
\begin{split}
\Tilde{v}(x) &= \begin{cases}
v(x) + u_1(x) - u_b(x), & x \in \overline{\Tilde{\Gamma}}_{L_0},\\ 
v(x), & x \in \Gamma_\infty \setminus \Tilde{\Gamma}_{L_0}
\end{cases}\\
\Tilde{\phi}(x) &= \begin{cases}
\phi(x) + \phi_1(x) - \phi_b(x), & x \in \overline{\Tilde{\Gamma}}_{L_0},\\ 
\phi(x), & x \in \Gamma_\infty \setminus \Tilde{\Gamma}_{L_0} .
\end{cases}
\end{split}
\end{displaymath}
In the equations displayed above, $u_b(x) = u_1(x_1 + b, x_2, x_3)$, and $\phi_b(x) = \phi_1(x_1 + b, x_2, x_3)$. Expressing $v$ and $\phi$ in terms of these fields, we see that 
\begin{align*}
    \mathcal{E}^\nu_1(v) &= \int_{\Gamma_\infty} \|\nabla \Tilde{v}\|^2 + 2\int_{\Tilde{\Gamma}_{L_0}} \Big( \|\nabla u_1\|^2 - \nabla u_b \cdot \nabla u_1 - \nabla \Tilde{v}\cdot \nabla u_1 + \nabla \Tilde{v} \cdot \nabla u_b  \Big)\\
    &+ \int_{\Gamma_\infty \setminus \Tilde{\Gamma}_{L_0}} \Big(|u_1 + \Tilde{v}|^{10/3} - u_1^{10/3} - \frac{10}{3}u_1^{7/3}\Tilde{v} \Big)\\
    &+ \int_{\Tilde{\Gamma}_{L_0}}\Big(|u_b + \Tilde{v}|^{10/3} - u_1^{10/3} - \frac{10}{3}u_1^{7/3}\Tilde{v} \Big) + \frac{10}{3}\int_{\Tilde{\Gamma}_{L_0}} (u_1^{10/3} - u_1^{7/3}u_b) \\
    &+ \int_{\Gamma_\infty} \phi_1 \Tilde{v}^2 -\int_{\Tilde{\Gamma}_{L_0}} \phi_1\nu + \int_{\Tilde{\Gamma}_{L_0}} \phi_1(2\Tilde{v}u_1 - 2\Tilde{v}u_b  ) + \int_{\Tilde{\Gamma}_{L_0}} \phi_1(u_b - u_1)^2 \\
    & + \frac{1}{2}\int_{\Gamma_\infty} \Tilde{\phi}\Tilde{v}^2 + \frac{1}{2}\int_{\Tilde{\Gamma}_{L_0}} \Tilde{\phi} \Big( (u_b-u_1)^2 + 2\Tilde{v}(u_b-u_1) \Big)
     + \frac{1}{2}\int_{\Tilde{\Gamma}_{L_0}} \Big(\Tilde{v}^2 + 2 \Tilde{v}(u_b-u_1)\Big)(\phi_b - \phi_1)\\ 
    &+ \frac{1}{2}\int_{\Tilde{\Gamma}_{L_0}} (\phi_b-\phi_1) (u_b-u_1)^2 + \int_{\Gamma_\infty} u_1\Tilde{\phi}\Tilde{v} + \int_{\Tilde{\Gamma}_{L_0}} u_1 \Big(\Tilde{\phi}(u_b-u_1) + \Tilde{v} (\phi_b - \phi_1)\Big) \\
    & + \int_{\Tilde{\Gamma}_{L_0}} u_1 (u_b-u_1)(\phi_b-\phi_1) - \frac{1}{2}\int_{\Tilde{\Gamma}_{L_0}} \nu(\Tilde{\phi} - \phi_1 +\phi_b).
\end{align*}
Weakly enforcing equation~\eqref{eq:TFW_smeared_EL} for the periodic crystal with test functions $u_1$ and $u_b$ over $\Gamma_{L_0}$ yields 
\begin{align*}
    \mathcal{E}^\nu_1(v) &= \int_{\Gamma_\infty} \|\nabla \Tilde{v}\|^2 + 2\int_{\Tilde{\Gamma}_{L_0}} \Big( - \nabla \Tilde{v}\cdot \nabla u_1 + \nabla \Tilde{v} \cdot \nabla u_b  \Big)\\
    &+ \int_{\Gamma_\infty \setminus \Tilde{\Gamma}_{L_0}} \Big(|u_1 + \Tilde{v}|^{10/3} - u_1^{10/3} - \frac{10}{3}u_1^{7/3}\Tilde{v} \Big)\\
    &+ \int_{\Tilde{\Gamma}_{L_0}}\Big(|u_b + \Tilde{v}|^{10/3} - u_1^{10/3} - \frac{10}{3}u_1^{7/3}\Tilde{v} \Big)  \\
    &+ \int_{\Gamma_\infty} \phi_1 \Tilde{v}^2 -\int_{\Tilde{\Gamma}_{L_0}} \phi_1\nu + \int_{\Tilde{\Gamma}_{L_0}} \phi_1(2\Tilde{v}u_1 - 2\Tilde{v}u_b  ) + \int_{\Tilde{\Gamma}_{L_0}} \phi_1(u_b^2 - u_1^2) \\
    & + \frac{1}{2}\int_{\Gamma_\infty} \Tilde{\phi}\Tilde{v}^2 + \frac{1}{2}\int_{\Tilde{\Gamma}_{L_0}} \Tilde{\phi} \Big( (u_b-u_1)^2 + 2\Tilde{v}(u_b-u_1) \Big)
     + \frac{1}{2}\int_{\Tilde{\Gamma}_{L_0}} \Big(\Tilde{v}^2 + 2 \Tilde{v}(u_b-u_1)\Big)(\phi_b - \phi_1)\\ 
    &+ \frac{1}{2}\int_{\Tilde{\Gamma}_{L_0}} (\phi_b-\phi_1) (u_b-u_1)^2 + \int_{\Gamma_\infty} u_1\Tilde{\phi}\Tilde{v} + \int_{\Tilde{\Gamma}_{L_0}} u_1 \Big(\Tilde{\phi}(u_b-u_1) + \Tilde{v} (\phi_b - \phi_1)\Big) \\
    & + \int_{\Tilde{\Gamma}_{L_0}} u_1 (u_b-u_1)(\phi_b-\phi_1) - \frac{1}{2}\int_{\Tilde{\Gamma}_{L_0}} \nu(\Tilde{\phi} - \phi_1 +\phi_b).
\end{align*}
Note that we have used the fact that surface terms vanish as $\nabla u \cdot n$ is zero on the boundary due to symmetry and periodicity. Using lemma~\ref{lem:KEhelp}, we get
\begin{align*}
    \mathcal{E}^\nu_1(v) &= \int_{\Gamma_\infty} \|\nabla \Tilde{v}\|^2 + 2\int_{\Tilde{\Gamma}_{L_0}} \Big( - \nabla \Tilde{v}\cdot \nabla u_1 + \nabla \Tilde{v} \cdot \nabla u_b  \Big)\\
    &+  k_1 \int_{\Gamma_\infty} \Big( |\Tilde{v}| + |\Tilde{v}|^{10/3} + \Tilde{v}^2 \Big)  \\
    &+ \int_{\Gamma_\infty} \phi_1 \Tilde{v}^2 -\int_{\Tilde{\Gamma}_{L_0}} \phi_1\nu + \int_{\Tilde{\Gamma}_{L_0}} \phi_1(2\Tilde{v}u_1 - 2\Tilde{v}u_b  ) + \int_{\Tilde{\Gamma}_{L_0}} \phi_1(u_b^2 - u_1^2) \\
    & + \frac{1}{2}\int_{\Gamma_\infty} \Tilde{\phi}\Tilde{v}^2 + \frac{1}{2}\int_{\Tilde{\Gamma}_{L_0}} \Tilde{\phi} \Big( (u_b-u_1)^2 + 2\Tilde{v}(u_b-u_1) \Big)
     + \frac{1}{2}\int_{\Tilde{\Gamma}_{L_0}} \Big(\Tilde{v}^2 + 2 \Tilde{v}(u_b-u_1)\Big)(\phi_b - \phi_1)\\ 
    &+ \frac{1}{2}\int_{\Tilde{\Gamma}_{L_0}} (\phi_b-\phi_1) (u_b-u_1)^2 + \int_{\Gamma_\infty} u_1\Tilde{\phi}\Tilde{v} + \int_{\Tilde{\Gamma}_{L_0}} u_1 \Big(\Tilde{\phi}(u_b-u_1) + \Tilde{v} (\phi_b - \phi_1)\Big) \\
    & + \int_{\Tilde{\Gamma}_{L_0}} u_1 (u_b-u_1)(\phi_b-\phi_1) - \frac{1}{2}\int_{\Tilde{\Gamma}_{L_0}} \nu(\Tilde{\phi} - \phi_1 +\phi_b).
\end{align*}
for some some positive constant $k_1$. Collecting all the terms not involving any function of $\Tilde{v}$ and $\Tilde{\phi}$ and using the fact that $u_1, u_b, \phi_1, \phi_b, \Tilde{v}, \Tilde{\phi}$ all belong to $L^\infty(\mathbb{R}^3)$, we get
\begin{displaymath}
    \begin{split}
          |\mathcal{E}^\nu(v)| &\leq \Tilde{C}( \|\Tilde{\phi}\|_{L^1(\Gamma_\infty)} + \|\Tilde{v}\|_{L^1(\Gamma_\infty)} +  \|\Tilde{v}\|_{H^1(\Gamma_\infty)}^2 \\
    & + \frac{1}{2}\Big|\int_{\Tilde{\Gamma}_{L_0}} (\phi_b + \phi_1)(u_b^2-u_1^2 -\nu)\Big|,   
    \end{split}
\end{displaymath}
where $\Tilde{C}$ is a positive constant only dependent on uniform bounds of $m_1$.

Note that $\int_{\Gamma_0}(\phi_b u_b^2 - \phi_1 u_1^2 ) = 0$ due to the periodicity of $u_1$ and $\phi_1$. We also have $\int_{\Gamma_0}(\phi_1 u_b^2 - \phi_b u_1^2 ) = 0$ due to the periodicity and symmetry of $u_1$ and $\phi_1$. Therefore, 
$$\frac{1}{2}\left\lvert\int_{\Tilde{\Gamma}_{L_0}} (\phi_b + \phi_1)(u_b^2-u_1^2 -\nu)\right\rvert = \frac{1}{2}\left\lvert\int_{\Tilde{\Gamma}_{L_0}} (\phi_b + \phi_1)\nu\right\rvert.$$
A similar argument employing the symmetry and periodicity of $\phi_1$ shows that the term above on the right hand side is zero. Using proposition~\ref{prop:ortner_TFW_stability_self} to compare $m_2$ and $m_b$ in $\Tilde{\Gamma}_{L_0}$, and $m_2$ and $m_1$ in $\Gamma_\infty \setminus \Tilde{\Gamma}_{L_0}$, we get 
$$\Tilde{C}(\|\Tilde{\phi}\|_{L^1(\Gamma_\infty)} + \|\Tilde{v}\|_{L^1(\Gamma_\infty)} + \|\Tilde{v}\|_{H^1(\Gamma_\infty)}) \leq C_1 + C_2e^{-C_3L_0},$$
where $C_1, C_2, C_3$ are positive constants only dependent on uniform bounds of $m_1$. We therefore obtain
\begin{displaymath}
|\mathcal{E}^\nu_1(v)| \le C_1 + C_2e^{-C_3L_0}.
\end{displaymath}
In the limit $L_0 \to \infty$, we thus see that the constant term $C_1$ provides the upper bound on the desired stacking fault energy. Note also that if $m_1|_{\Gamma_0}$ is smooth then we can also say $I^\nu = \mathcal{E}^\nu(v)$, thanks to Theorem~\ref{thm:reflexivemin}. We have thus shown that the surface energy of a bicrystal, with symmetric and periodic nuclear density, is finite in the TFW setting. We were, however, not able to pose a minimization problem for this case.

This analysis thus guarantees that we can obtain a reasonably accurate stacking fault energy by considering finite bicrystal domains since the stacking fault energy converges exponentially as a function of the domain size $L_0$. This also justifies the additivity of the bulk and surface energies that is often assumed as the starting point for many numerical analyses. While this is numerically evident, the arguments presented here provide a mathematical justification for such calculations in the context of TFW models.

\subsection{Extension of the TFW model with Dirac Exchange}

As mentioned in the introduction, the TFW model is a simple but non-trivial model for orbital free DFT. A more accurate model, albeit still insufficient for predictive calculations, called the Thomas-Fermi-von Weizs\"acker-Dirac (TFWD) model, includes an additional term to account for exchange energy. Focusing on a nuclear distribution of finite extent, we can write the energy of the TFWD model as 
\begin{displaymath}
E^{TFWD}_\Lambda(\rho) = E_\Lambda(\rho) - C_D \int_{\mathbb{R}^3} \rho^{4/3}(x)\,dx.
\end{displaymath}
In the equation displayed above, $C_D =  \frac{3}{4}(\frac{3}{\pi})^{\frac{1}{3}}$. It is evident that the energy in the TFWD model is non-convex, and hence the thermodynamic limit analysis presented in this work is not directly applicable. While previous analytical studies such as \cite{Ricaud18} have examined the effect of non-convexity by varying the parameter $C_D$, they do not provide quantitative bounds on $C_D$ that ensure a unique density, which is crucial for establishing the locality results proven in the TFW model.

We can, however, solve the TFWD model numerically for a given nuclear distribution. For simplicity, we focus on a homogeneous defect to investigate the effect of the Dirac term on the locality of the fields. This choice is further supported by the findings in \cite{GLM21}, which showed that, for homogeneous 2D materials, the electronic fields obtained from the Thomas–Fermi model closely approximate those obtained from the more general reduced Hartree–Fock model. We use a staggered algorithm that uses an augmented Lagrangian approach to enforce the charge constraint, as outlined in Appendix~\ref{app:tfw_numerics}. We use $C_W=1$ and $C_{TF} = 3.2$ for all the results shown here. As a simple illustration, we consider a nuclear distribution $m_1$ that is a mollified Dirac comb:
$$
m_1(z) = \sum_{k \in \mathbb{R}} \frac{1}{\sqrt{2\pi\sigma^2}}\exp\left(-\left(\frac{(z - k)}{\sigma}\right)^2\right).
$$
The width of the nuclear distribution $\sigma$ is chosen such that it is much smaller than then periodicity of the crystal: $\sigma \ll 1$. We consider a perturbation $\nu$ of the form 
\begin{displaymath}
\nu(z) = \frac{M}{\sqrt{2\pi\sigma^2}}\exp\left(-\frac{z^2}{\sigma^2}\right),
\end{displaymath}
for some $M > 0$. The reference and perturbed crystal are shown in Figure~\ref{fig:gaussian_density}. Physically the reference nuclear configurations correspond to a set of thin slabs that are periodically spaced along the $z$ axis with unit period, and the perturbed nuclear configurations corresponds to the case where one of the slabs has a different nuclear charge. 
\begin{figure}[h]
    \centering
    \includegraphics[width=0.5\linewidth]{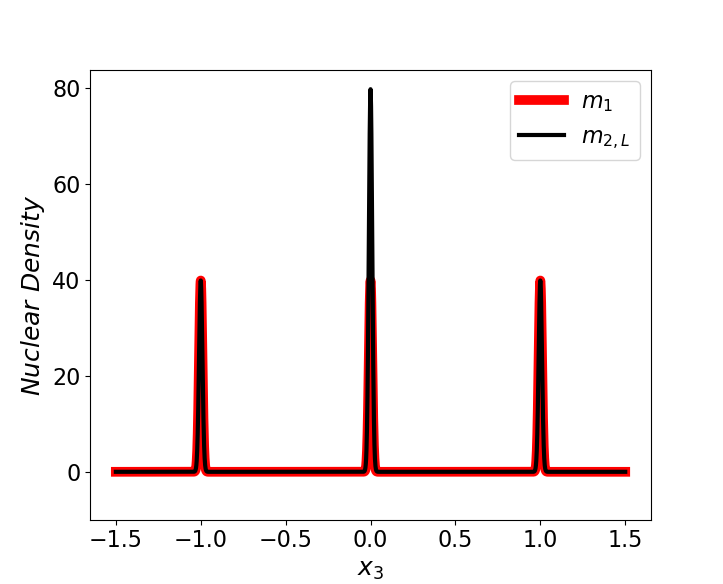}
    \caption{Perfect crystal modeled by a smooth Dirac comb and perturbed crystal with perturbations on one of the comb's fingers.}
    \label{fig:gaussian_density}
\end{figure}

We show in Figure~\ref{fig:tfw_vs_tfwd_smooth_dirac} the numerical solutions obtained using the TFW and TFWD models.
\begin{figure}[h]
\begin{subfigure}{0.45\textwidth}
\centering
\includegraphics[width=0.9\textwidth]{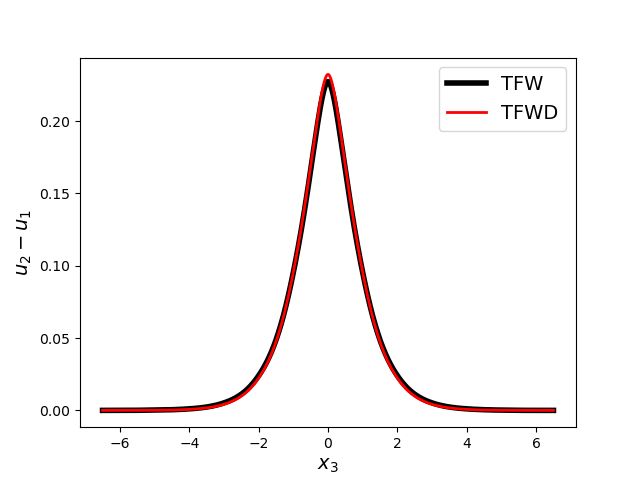}
\label{Difference in square root electronic density}
\end{subfigure}
~
\begin{subfigure}{0.45\textwidth}
\centering
\includegraphics[width=0.9\textwidth]{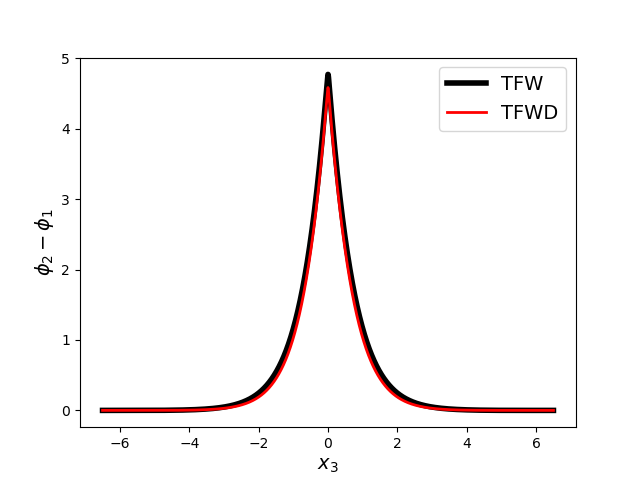}
\label{Difference in potential}
\end{subfigure}
\caption{Comparison of TFW and TFWD models for the nuclear distributions shown in Figure~\ref{fig:tfw_vs_tfwd_smooth_dirac}. Both the difference in square root density and the difference in potential are shown. The results indicate that the TFWD response shares the decay properties of the TFW model, despite being non-convex.}
\label{fig:tfw_vs_tfwd_smooth_dirac}
\end{figure}
Periodic boundary conditions were used for both the reference and perturbed nuclear configurations; more details are presented in Appendix~\ref{app:tfw_numerics}. The numerical results indicate that the electronic fields in the TFWD model are very close to those predicted by the TFW model. In particular, this indicates that the locality of the TFW model extends to the TFWD model too. We however do not have a rigorous proof of this statement since the TFWD energy functional is non-convex, and the analytical methods used here are thus not directly applicable. The numerical similarity does, however, raise the interesting question as to whether a proof of the thermodynamic limit that doesn't use convexity of the energy functional in an essential way is possible. We relegate this for a future study. 

\subsection{Conclusion}
The present work extends previous studies on the thermodynamic limit of the energy of point defects within the framework of the Thomas-Fermi-von Weizs\"acker model, as presented in \cite{CE11}, to an analysis of quasi-planar defects. We analyze nuclear perturbations that are restricted to a thin slab in an otherwise perfect crystal and prove the convergence of the electron density and potential for these class of perturbations. The key contribution of this paper is the formulation of a well-posed minimization problem to compute the relative energy of these perturbed configurations. In particular, we show the difference in the square root electron densities between the defective and perfect crystals is a unique minimizer of a relative energy functional, which can also be evaluated by solving a Schr\"odinger-Poisson system of PDEs. As an application of these results, we show that the stacking fault energy for unidirectional slip is finite within the TFW setting. This analysis also formally establishes the additivity hypothesis, in the context of TFW models, for the bulk and surface energies that is usually assumed in continuum models like the Peierls-Nabarro model.  

There are several outstanding questions that remain: (i) The class of perturbations considered in this work could potentially be enlarged by removing the compactness condition, thus extending the theory to a larger class of defective nuclear perturbations. (ii) The theoretical framework developed in this work, particularly those related to the generalized stacking fault energy, could be used in conjunction with the Peierls-Nabarro model to infer useful bounds on the energetics of dislocation core structures. (iii) The preliminary numerical results presented in this work indicate that many of the locality and decay properties of the TFW model also extend to the TFWD model, but there is no straightforward means to prove this rigorously as the TFWD functional is non-convex. Density functionals used in orbital-free DFT are non-convex in general, and the development of the right analytical tools to study defects using these more general DFT models remains an open problem. (iv) Though the emphasis of this work is entirely analytical, the eventual goal of this line of research is to identify simplified variational formulations for defect energies that can be used to develop faster numerical algorithms. These and related issues will be explored in future works. 

\section{Supplementary Results}
We discuss first a simple extension of the thermodynamic limit argument for thin films to relax a symmetry requirement. We then list proofs of several propositions that are used in the theorems above. 

\subsection{Modification of thin film solution}
\label{subsec:thinfilmmod}
 Given a positive bounded nuclear density which is compact in $x_2, x_3$ direction, but of infinite extent in the $x_1$ direction, let us denote the polymer nuclear density with width $\Lambda$ in $x_2$ direction and height $H$ in $x_3$ direction as $m_{H, \Lambda}$. From subsection~\ref{subsec:polymers}, there exists a unique solution $(u_{H, \Lambda}, \phi_{H, \Lambda})$ in $(L^2_{\text{unif}} \cap L^{7/3}_{\text{loc}}(\mathbb{R}^3)) \times L^1(\mathbb{R})$ satisfying the PDE:
\begin{equation} \label{eq:ELrod}
\begin{split}
    -\nabla^2  u_{H, \Lambda} &+ \frac{5}{3}u_{H, \Lambda}^{7/3} + u_{H, \Lambda} \phi_{H, \Lambda} = 0, \\
-\nabla^2 \phi_{H, \Lambda} &= 4\pi (u_{H, \Lambda}^2 - m_{H, \Lambda}).
\end{split}
\end{equation}
Moreover, $u_{H, \Lambda}, \phi_{H, \Lambda}$ are periodic in $x_1$ and  $u_{H, \Lambda}, \phi_{H, \Lambda}$ are uniformly bounded (independent of $\Lambda$ or $H$) in $H^4_{\text{unif}}(\mathbb{R}^3) \times H^2_{\text{unif}}(\mathbb{R}^3)$.
Further note that $u_H$ is uniformly bounded almost everywhere and for $x$ outside support of $m_{\Lambda, H}$, 
$|u_{H, \Lambda}(x)|+ |\nabla u_{H, \Lambda}(x)| \leq \frac{C}{(x_2^2 + x_3^2)^{3/2}}$, where $C$ is independent of $\Lambda$ and $H$. 

Further, assume that nuclear density is symmetric in $x_2$ direction. Using the uniform bounds of $u_{H, \Lambda}, \phi_{H, \Lambda}$ in $H^4_{\text{unif}}(\mathbb{R}^3) \times H^2_{\text{unif}}(\mathbb{R}^3)$ we take the limit as $\Lambda$ goes to infinity by ensuring symmetry and periodicity in of $m_{H, \Lambda}$ in $x_2$ direction. Here, $m_{H, \Lambda}$ converges to the same $m_H$ defined in the subsection~\ref{subsec:thinfilm} section which corresponds to a thin film without the restriction of symmetry in $x_1$ direction. Let's call the converged solutions, along a subsequence, as $(u_H, \phi_H)$, and passing as the local limit in \eqref{eq:ELrod}
\begin{equation}
\label{eq:ELthinfilm2}
    \begin{split}
         &-\nabla^2  u_{H} + \frac{5}{3}u_{H}^{7/3} + u_{H} \phi_{H} = 0, \\
&-\nabla^2 \phi_{H} = 4\pi (u_{H}^2 - m_{H}),\\
&u_H \geq 0.   
    \end{split}  
\end{equation}

This also implies that $(u_H, \phi_H)$ is uniformly bounded in $H^4_{\text{unif}}(\mathbb{R}^3) \times H^2_{\text{unif}}(\mathbb{R}^3)$ and is periodic in $x_1$. Moreover, for $x$ outside support of $m_H$, $|u_{H}(x)|+ |\nabla u_{H}(x)| \leq \frac{C}{x_3^{3/2}}$, where $C$ is independent of $H$. 

It is easy to deduce that $u_H \in H^1(\Gamma_\infty)$. Now use the periodicity of $u_H$ in $x_1$ direction and symmetry in $x_2$ direction to conclude that $u_H$ is the minimizer of equation~\ref{eq:thinfilmmin} corresponding to the new $m_H$ which is not symmetric in $x_1$ direction. This follows from exact steps in Theorem 3.2 ~\cite{BTF00}. Moreover, $u_H$ is uniformly bounded almost everywhere independent of $H$ and for any $R> 0$ there exists a positive constant $\nu_R$ such that $\inf_{|x_3|< R} u_H \geq \nu_R$. The later conclusion follows from Proposition 3.5 ~\cite{BTF00}. In addition, if we assume $m_H$ to be smooth
then $\phi_H$ is also uniformly bounded almost everywhere.

\subsection{Supplementary Lemmas}
\label{app_2}
\begin{lemma} 
\label{lem:KEhelp}
\textit{\textup{[}Lemma 4.1 in \cite{CE11}\textup{]}}
For all $0< m \leq M < \infty $ and all $t \geq 2$, there exists a positive constant $C$ such that for all $m \leq a \leq M$ and all $b \geq -a$, 
\begin{equation*}
        (t-1)a^{t-2}b^2 \leq (a+b)^t -a^t -ta^{t-1}b \leq Cb^2(1+|b|^{t-2}). 
    \end{equation*} 
Note that $|(a+b)^t -a^t -ta^{t-1}b| \leq C_1 b^2+ C_2|b|^t$, where $C_1, C_2$ only depends on $C,M,t$.
\end{lemma}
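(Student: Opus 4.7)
The plan is to derive both inequalities from the exact Taylor identity
\begin{equation*}
(a+b)^t - a^t - ta^{t-1}b \;=\; t(t-1)\,b^2 \int_0^1 (1-u)(a+bu)^{t-2}\,du,
\end{equation*}
valid whenever $a > 0$ and $a + b \geq 0$. The entire problem reduces to sandwiching the integral $J := \int_0^1 (1-u)(a+bu)^{t-2}\,du$ between quantities of the claimed form, after which multiplication by $t(t-1)b^2$ delivers both bounds.

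For the upper bound I would use the pointwise estimate $0 \leq a + bu \leq a + |b| \leq M + |b|$, so that $(a+bu)^{t-2} \leq (M+|b|)^{t-2}$. Combined with the elementary subadditivity $(x+y)^p \leq 2^{\max(p-1,0)}(x^p + y^p)$ for $x,y \geq 0$ with $p = t-2 \geq 0$, this yields $J \leq \tfrac12\, C_0 (M^{t-2} + |b|^{t-2})$, from which $t(t-1)b^2 J \leq C\, b^2 (1 + |b|^{t-2})$ with $C$ depending only on $M$ and $t$. The split form $C_1 b^2 + C_2|b|^t$ in the closing remark of the lemma then follows by expanding this product.

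The lower bound splits on the sign of $b$. When $b \geq 0$, every $a+bu \geq a$, so $(a+bu)^{t-2} \geq a^{t-2}$ for $t \geq 2$ and $J \geq a^{t-2}/2$; multiplying by $t(t-1)b^2$ and using $t \geq 2$ yields at least $(t-1)\,a^{t-2} b^2$. When $-a \leq b < 0$, however, the integrand $(a+bu)^{t-2}$ can be arbitrarily small near $u = 1$ (since $a + b$ may be close to $0$), so no pointwise lower estimate suffices. This is the main obstacle, and the Lagrange remainder form $\tfrac{t(t-1)}{2}\xi^{t-2}b^2$ with $\xi \in [a+b, a]$ only yields an upper bound in this regime, not a lower one.

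I would dispatch the remaining case by a direct analysis of
\begin{equation*}
g(b) := (a+b)^t - a^t - ta^{t-1} b - (t-1) a^{t-2} b^2.
\end{equation*}
A direct computation gives $g(-a) = g(0) = 0$, $g'(0) = 0$, $g'(-a) = (t-2)\,a^{t-1} \geq 0$, and $g''(b) = (t-1)\bigl[t(a+b)^{t-2} - 2 a^{t-2}\bigr]$. For $t = 2$ one has $g \equiv 0$. For $t > 2$, $g''$ is strictly increasing in $b$ and changes sign exactly once at $b^* := \bigl[(2/t)^{1/(t-2)} - 1\bigr] a \in (-a, 0)$, so $g'$ is decreasing on $(-a, b^*)$ and increasing on $(b^*, \infty)$. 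Combined with the boundary values $g'(-a) \geq 0$ and $g'(0) = 0$, this forces $g'$ to have a unique zero $b_1 \in (-a, b^*)$ besides the one at $b = 0$, so $g$ increases on $(-a, b_1)$, decreases on $(b_1, 0)$, and increases on $(0, \infty)$; together with $g(-a) = g(0) = 0$ this forces $g \geq 0$ throughout $[-a, \infty)$, completing the lower bound.
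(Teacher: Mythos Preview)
Your proof is correct. The paper itself does not prove this lemma; it merely states it with a citation to Lemma~4.1 of \cite{CE11}. Your argument supplies a complete, self-contained proof.

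The integral-remainder identity is the natural starting point and handles the upper bound and the $b \ge 0$ half of the lower bound efficiently. The real content is the case $-a \le b < 0$, where you correctly observe that neither the integral form nor the Lagrange form of the remainder yields a usable lower estimate (since $\xi^{t-2}$ with $\xi \in [a+b,a]$ can be arbitrarily small). Your direct analysis of $g(b) = (a+b)^t - a^t - ta^{t-1}b - (t-1)a^{t-2}b^2$ is sound: the boundary values $g(-a)=g(0)=0$, the derivative values $g'(-a)=(t-2)a^{t-1}\ge 0$ and $g'(0)=0$, and the single sign change of $g''$ on $(-a,0)$ together force the up--down shape of $g$ on $[-a,0]$, hence $g\ge 0$ there. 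The final remark about the split $C_1 b^2 + C_2|b|^t$ follows immediately since the expression is nonnegative and bounded above by $Cb^2(1+|b|^{t-2})$.
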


\begin{lemma}
\label{lem:DCThelp}
Let $u_{2,L}, u_{2},$ and $u_1$ be the unique positive ground state solution for the square root of electronic density corresponding to nuclear densities $m_{2,L}, m_{2},$ and $m_1$, respectively. Then $$\lim_{L \rightarrow \infty}\int_{\Gamma_L}  u_{2,L}^{10/3} - u_1^{10/3} = \int_{\Gamma_\infty}u_2^{10/3} - u_1^{10/3}.$$ 
\end{lemma}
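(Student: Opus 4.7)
The plan is to reduce the problem to the $L^p$ convergence of $\chi_{\Gamma_L}v_L$ established in Corollary~\ref{cor:supercell_field_convergence}, after eliminating the divergent periodic background via a Taylor-like decomposition around $u_1$. Writing $u_{2,L}=u_1+v_L$, I would split
\begin{equation*}
u_{2,L}^{10/3}-u_1^{10/3} \;=\; F(u_1,v_L)+\frac{10}{3}\,u_1^{7/3}\,v_L,
\end{equation*}
where $F(u_1,v_L):=(u_1+v_L)^{10/3}-u_1^{10/3}-\tfrac{10}{3}u_1^{7/3}v_L$. Since $u_1\in L^{\infty}(\mathbb{R}^3)$ is bounded above and bounded away from zero, Lemma~\ref{lem:KEhelp} with $t=10/3$ supplies the pointwise bound $|F(u_1,v_L)|\leq C_1 v_L^2+C_2|v_L|^{10/3}$ with constants depending only on the uniform bounds of $u_1$. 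Applying the same bound to $v$ and using the exponential decay from Lemma~\ref{thm:density_decay} shows that the candidate limit $\int_{\Gamma_\infty}(u_2^{10/3}-u_1^{10/3})=\int_{\Gamma_\infty}F(u_1,v)+\tfrac{10}{3}\int_{\Gamma_\infty}u_1^{7/3}v$ is a well-defined finite integral.

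Next I would rewrite each $\Gamma_L$ integral as $\int_{\Gamma_\infty}\chi_{\Gamma_L}(\cdot)$ and use the identity $\chi_{\Gamma_L}F(u_1,v_L)=F(u_1,\chi_{\Gamma_L}v_L)$, which holds because $F(u_1,0)=0$. Corollary~\ref{cor:supercell_field_convergence} yields $\chi_{\Gamma_L}v_L\to v$ strongly in $L^p(\Gamma_\infty)$ for every $1\leq p<\infty$, and in particular for $p=2$ and $p=10/3$. Extracting a subsequence along which the convergence is also pointwise a.e., continuity of $F$ in its second argument gives $F(u_1,\chi_{\Gamma_L}v_L)\to F(u_1,v)$ a.e.; meanwhile, strong $L^2$ and $L^{10/3}$ convergence forces the dominating sequence $C_1(\chi_{\Gamma_L}v_L)^2+C_2|\chi_{\Gamma_L}v_L|^{10/3}$ to converge in $L^1(\Gamma_\infty)$ to $C_1 v^2+C_2|v|^{10/3}$. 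The generalized dominated convergence theorem (or Vitali) then delivers $\int_{\Gamma_\infty}\chi_{\Gamma_L}F(u_1,v_L)\to\int_{\Gamma_\infty}F(u_1,v)$ along the subsequence, and uniqueness of the limit upgrades this to the full sequence. For the linear term, $u_1^{7/3}\in L^\infty(\mathbb{R}^3)$ together with the $L^1(\Gamma_\infty)$ convergence $\chi_{\Gamma_L}v_L\to v$ immediately gives $\int_{\Gamma_L}u_1^{7/3}v_L\to\int_{\Gamma_\infty}u_1^{7/3}v$. Summing the two pieces finishes the proof.

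The main obstacle is conceptual rather than technical: the domain $\Gamma_L$ grows with $L$, so one cannot apply classical DCT on a fixed domain, and the separated integrals $\int_{\Gamma_L}u_{2,L}^{10/3}$ and $\int_{\Gamma_L}u_1^{10/3}$ individually diverge. The device of transferring to the fixed domain $\Gamma_\infty$ via $\chi_{\Gamma_L}$, combined with the Taylor-like decomposition that absorbs the divergent periodic background into a linear term in $v_L$ (which itself is bounded in $L^p(\Gamma_\infty)$ thanks to the exponential decay outside $\Gamma_{L_0}$), is what reduces everything to a standard $L^1$-convergence argument controlled by Lemma~\ref{lem:KEhelp}.
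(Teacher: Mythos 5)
Your proof is correct, but it takes a genuinely different route from the paper's. The paper's proof never introduces the Taylor remainder $F$: it instead inserts $u_2$ as an intermediary and splits the error into $\int_{\Gamma_L}(u_{2,L}^{10/3}-u_2^{10/3})$ plus $\int_{\Gamma_\infty\setminus\Gamma_L}(u_2^{10/3}-u_1^{10/3})$, then controls each term directly by the \emph{pointwise} exponential decay of $u_{2,L}-u_2$ and $u_2-u_1$ from Lemma~\ref{thm:density_decay}, combined with the two-sided estimate of Lemma~\ref{lem:KEhelp} applied in a sandwich around $u_2^{10/3}$ (valid because $\inf u_2>0$ makes $u_2 - c_1 e^{-c_2\,\mathrm{dist}}$ nonnegative for large $L$). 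You instead center the expansion around $u_1$, absorb the divergent background into a linear term, use the identity $\chi_{\Gamma_L}F(u_1,v_L)=F(u_1,\chi_{\Gamma_L}v_L)$ (correct, since $F(u_1,0)=0$ and $u_1+\chi_{\Gamma_L}v_L\geq 0$ everywhere), and then invoke the $L^p(\Gamma_\infty)$ convergence of $\chi_{\Gamma_L}v_L$ from Corollary~\ref{cor:supercell_field_convergence} together with generalized dominated convergence (Pratt/Vitali). Your version is more modular: once Corollary~\ref{cor:supercell_field_convergence} is in hand, the argument is soft functional analysis and would survive with any mode of $L^2\cap L^{10/3}$ convergence, whereas the paper's proof is more hands-on, re-deriving the needed $L^1$ smallness directly from the decay rates. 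Both ultimately rest on Lemmas~\ref{thm:density_decay} and~\ref{lem:KEhelp}, and neither creates a circularity (Corollary~\ref{cor:supercell_field_convergence} does not depend on Lemma~\ref{lem:DCThelp}). One small presentational point: the subsequence-then-uniqueness step can be skipped by quoting Vitali directly, since $\chi_{\Gamma_L}v_L\to v$ in $L^2$ already gives convergence in measure and $g_L\to g$ in $L^1$ gives uniform integrability of the dominated family.
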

\begin{proof}
We begin by noting that
\begin{equation} \label{eq:DCThelp_0}
\begin{split}
\left | \int_{\Gamma_\infty} \left(\chi_{\Gamma_L} u_{2,L}^{10/3}  - \chi_{\Gamma_L} u_1^{10/3} - u_2^{10/3} + u_1^{10/3} \right) \right | &\leq        \left|\int_{\Gamma_\infty} \chi_{\Gamma_L} u_{2,L}^{10/3} - \chi_{\Gamma_L} u_2^{10/3} \right|  \\ &+  \left|\int_{\Gamma_\infty} \chi_{\Gamma_L} u_2^{10/3} - \chi_{\Gamma_L} u_1^{10/3} - u_2^{10/3} + u_1^{10/3}) \right | \\
&\leq \left |\int_{\Gamma_L}u_{2,L}^{10/3} - u_2^{10/3}\right | + \left |\int_{\Gamma_\infty \setminus \Gamma_L} u_2^{10/3} - u_1^{10/3}\right |.
\end{split}
\end{equation}
Thus, it suffices to show that both the terms on the right hand side vanish in the limit $L \to \infty$ to establish the lemma.

To establish the convergence of the first term in the inequality~\eqref{eq:DCThelp_0}, we observe that using equation~\eqref{eq:dens_decay} it holds that for all $y \in \Gamma_L$,
 $$  | (u_{2,L} - u_2)(y)| \leq c_1e^{-c_2 dist(y, \partial \Omega_{2L-L_0})},
$$
where $c_1, c_2$ are positive constants. Since $\inf u_2 > 0$ and belongs to $L^\infty(\mathbb{R}^3)$, for large enough $L$, 
$$ (-c_1e^{-c_2 dist(y, \partial \Omega_{2L-L_0})} + u_2(y))^{10/3} \leq (u_{2,L}(y))^{10/3} \leq (u_2(y) + c_1e^{-c_2 dist(y, \partial \Omega_{2L-L_0})})^{10/3}.$$
Using Lemma~\ref{lem:KEhelp}, this implies that, for any $y \in \Gamma_L$,
$$
 -\frac{10}{3}f(y)\|u_2\|_{L^\infty(\mathbb{R}^3)}^{7/3} + \frac{7}{3}f(y)^2 ( \inf u_2)^{4/3} \leq u_{2,L}^{10/3} - u_2^{10/3}  \leq \frac{10}{3}f(y)\|u_2\|_{L^\infty(\mathbb{R}^3)}^{7/3}  + Cf(y)^2 + Cf(y)^{10/3},
$$
where $f(y) = c_1e^{-c_2 dist(y, \partial \Omega_{2L-L_0})}$ and $C$ is a positive constant. Note that for any positive $k$, $\lim_{L \rightarrow \infty} \int_{\Gamma_L} f^k(y) dy = 0$. Integrating the two inequalities shown above over $\Gamma_L$ and taking the limit $L\to\infty$, we obtain
\begin{equation}
    \label{eq:DCThelp_t1}
    \lim_{L \rightarrow \infty}\int_{\Gamma_L}  u_{2,L}^{10/3} - u_2^{10/3} = 0.
\end{equation}
To prove convergence of the second term in the inequality~\eqref{eq:DCThelp_0}, we proceed similarly using Lemma~\ref{thm:density_decay} and see that for all $x \in \Gamma_\infty \setminus \Gamma_{L_0}$,
$$| (u_2-u_1)(x)| \leq c_3e^{-c_4 dist(x, \Gamma_{L_0})},
$$
where $c_3, c_4$ are positive constants. We therefore get, for large enough $L$, 
$$ (-c_3e^{-c_4 dist(x, \Gamma_{L_0})} + u_2(x))^{10/3} \leq (u_{1}(x))^{10/3} \leq (u_2(x) + c_3e^{-c_4 dist(x, \Gamma_{L_0})} )^{10/3}.$$
Using Lemma~\ref{lem:KEhelp} again, we obtain, for every $x \in \Gamma_\infty \setminus \Gamma_{L_0}$, 
$$
 -\frac{10}{3}g(x)\|u_2\|_{L^\infty(\mathbb{R}^3)}^{7/3} + \frac{7}{3}g(x)^2 ( \inf u_2)^{4/3} \leq u_{1}^{10/3} - u_2^{10/3}  \leq \frac{10}{3}g(x)\|u_2\|_{L^\infty(\mathbb{R}^3)}^{7/3}  + Dg(x)^2 + Dg(x)^{10/3},
$$
where $g(x) = c_3e^{-c_4 dist(x, \Gamma_{L_0})}$ and $D$ is some positive constant. Note that for any positive $k$, $\lim_{L \rightarrow \infty} \int_{\Gamma_\infty \setminus \Gamma_L} g^k(x)dx = 0$. Integrating these inequalities over $\Gamma_\infty \setminus \Gamma_L$ and taking the limit $L \to\infty$, we obtain 
\begin{equation}
    \label{eq:DCThelp_t2}
        \lim_{L \rightarrow \infty}\int_{\Gamma_\infty \setminus \Gamma_L}  u_2^{10/3} - u_1^{10/3} = 0.
\end{equation}
Using equations~\eqref{eq:DCThelp_t1} and~\eqref{eq:DCThelp_t2} in the inequality~\eqref{eq:DCThelp_0}, we conclude that $\int_{\Gamma_L} ( u_{2,L}^{10/3} - u_1^{10/3} )$ converges to $ \int_{\Gamma_\infty} (u_2^{10/3} - u_1^{10/3})$ in the limit $L\to\infty$. 
\end{proof}

\begin{lemma}
\label{lem:lapgaminf}
    Let $w \in H^1_{per}(\Gamma_\infty)$. Then $w=0$ almost everywhere is a unique solution for $\nabla^2 w = 0$ in the sense of distributions. 
\end{lemma}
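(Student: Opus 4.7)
The plan is to use a cutoff-and-test argument, which is the natural tool given $w \in H^1_{per}(\Gamma_\infty)$ and the infinite extent of $\Gamma_\infty$ along $x_3$. First I would pick a smooth cutoff $\eta_R : \mathbb{R} \to [0,1]$ depending only on $x_3$, with $\eta_R \equiv 1$ on $[-R,R]$, $\eta_R \equiv 0$ outside $[-2R, 2R]$, and $|\eta_R'| \leq C/R$. Then $w \eta_R$ lies in $H^1_{per}(\Gamma_\infty)$ and has compact support along $x_3$, so by a standard density argument it is an admissible test function for the distributional equation $\nabla^2 w = 0$.

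Pairing $\nabla^2 w$ against $w\eta_R$ and integrating by parts gives
\begin{displaymath}
0 = -\int_{\Gamma_\infty} \nabla w \cdot \nabla(w \eta_R) = -\int_{\Gamma_\infty} |\nabla w|^2 \eta_R - \int_{\Gamma_\infty} w\, (\partial_3 w)\, \eta_R'(x_3),
\end{displaymath}
where the contributions on the lateral faces $x_1, x_2 = \pm 1/2$ vanish by periodicity, and no contributions arise at $x_3 = \pm\infty$ thanks to the cutoff.

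The crux is controlling the correction term. By the Cauchy--Schwarz inequality,
\begin{displaymath}
\left| \int_{\Gamma_\infty} w\, (\partial_3 w)\, \eta_R' \right| \leq \frac{C}{R}\, \|w\|_{L^2(\Gamma_\infty)}\, \|\partial_3 w\|_{L^2(\Gamma_\infty)} \longrightarrow 0 \quad \text{as } R \to \infty,
\end{displaymath}
since both $w$ and $\partial_3 w$ are in $L^2(\Gamma_\infty)$. Choosing $\eta_R$ nondecreasing in $R$, the monotone convergence theorem yields $\int_{\Gamma_\infty} |\nabla w|^2 \eta_R \to \int_{\Gamma_\infty} |\nabla w|^2$, and hence $\nabla w = 0$ almost everywhere.

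Finally, $\nabla w = 0$ on the connected set $\Gamma_\infty$ forces $w$ to equal a constant almost everywhere. Since $w \in L^2(\Gamma_\infty)$ and $\Gamma_\infty$ has infinite Lebesgue measure, this constant must be zero, so $w = 0$ a.e. I do not foresee any serious obstacle here; the only mild subtlety is justifying that $w\eta_R$ can be used as a test function despite being merely $H^1$ rather than smooth, but this follows from standard mollification in the $H^1$ norm, using that $\nabla^2 w$ extends to a continuous functional on compactly supported $H^1$ functions.
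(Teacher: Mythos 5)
Your proof takes essentially the same route as the paper's: a cutoff in $x_3$, integration by parts, control of the cutoff correction, and the conclusion that $w$ constant together with $w \in L^2(\Gamma_\infty)$ over an infinite-measure domain forces $w = 0$. The one substantive difference is that the paper first invokes Weyl's lemma to upgrade $w$ to a smooth harmonic function and then applies the divergence theorem on the strip, whereas you pair the distribution $\nabla^2 w$ directly against the $H^1$ test function $w\eta_R$. The ``mild subtlety'' you flag at the end is in fact the crux: the periodic extension of $w\eta_R$ to $\mathbb{R}^3$ is \emph{not} compactly supported (it is compactly supported only in $x_3$), so the standard density argument for $C_c^\infty(\mathbb{R}^3)$ test functions does not immediately give $\int_{\Gamma_\infty}\nabla w\cdot\nabla(w\eta_R)=0$; one needs either Weyl's lemma (as the paper does) or an auxiliary cutoff in the $x_1,x_2$ directions followed by a limit, as the paper carries out in detail in Lemma~\ref{lem:Lposdef}. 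This gap is readily repaired, and the rest of your argument---the Cauchy--Schwarz bound $\frac{C}{R}\|w\|_{L^2}\|\partial_3 w\|_{L^2}$ on the correction term and the monotone convergence step---is correct and a bit more direct than the paper's use of $w_\epsilon\to w$ in $H^1$.
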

\begin{proof}
    Using Weyl's lemma, we conclude that $w$ is smooth and $\nabla^2 w = 0$ pointwise almost everywhere in $\mathbb{R}^3$. 
    Define $w_\epsilon$, for some $\epsilon > 0$, as $w_\epsilon(x) = w(x)f_\epsilon(x_3)$ as in \emph{Step-1} of the proof of theorem~\ref{thm:reflexivemin}. Since 
    $ \nabla \cdot (w_\epsilon \nabla w) = w_\epsilon \nabla^2 w + \nabla w_\epsilon \cdot \nabla w = \nabla w_\epsilon \cdot \nabla w$, we get 
    $$\int_{\Gamma_\infty} \nabla w_\epsilon \cdot \nabla w = \int_{\Gamma_\infty} \nabla \cdot (w_\epsilon \nabla w) = 0. $$
    In showing that the integral on the right is zero, we have used the divergence theorem along with the fact that $w_\epsilon$ and $w$ are periodic in $x_1, x_2$, and $w_\epsilon$ is compact in $x_3$. Noting that
    $$ \int_{\Gamma_\infty} \nabla w_\epsilon \cdot \nabla w = \int_{\Gamma_\infty} \nabla (w_\epsilon - w) \cdot \nabla w + \int_{\Gamma_\infty} \|\nabla w\|^2 = 0,$$
    and using the fact that $w \in H^1(\Gamma_\infty)$ and $w_\epsilon$ converges to $w$ in $H^1(\Gamma_\infty)$, we see that
    $$\int_{\Gamma_\infty} \|\nabla w\|^2 = 0.$$
    This implies $w$ is a constant almost everywhere on $\Gamma_\infty$. Since $w \in L^2(\Gamma_\infty)$ we get $w=0$ almost everywhere on $\mathbb{R}^3$, thereby establishing uniquenesss of the solution $w = 0$.  

\end{proof}

\begin{lemma}
\label{lem:Lposdef}
Let $m_{1, H}$ be the nuclear density defined in section~\ref{sec: minproblem}. Define $f_H = \frac{5}{3}\Tilde{u}_{1,H}^{4/3} + \psi_{1,H}$, where $(\Tilde{u}_{1,H}, \psi_{1, H})$ is the ground state solution corresponding to $m_{1, H}$.  Then:
\begin{enumerate}[(i)]
    \item For any $g \in H^1_{per}(\Gamma_\infty)$, $$\int_{\Gamma_\infty} \|\nabla g\|^2 + \int_{\Gamma_\infty}f_H g^2 \geq 0.$$
    \item For any $g_1 \in H^1_{per}(\Gamma_\infty) $ and $g_2 \in H^2_{per}(\Gamma_\infty)$,
    \begin{equation*}
    \int_{\Gamma_\infty} \lVert\nabla g_1\rVert^2 + \int_{\Gamma_\infty}f_Hg_1^2 \geq \int_{\Gamma_\infty} \lVert\nabla g_2\rVert^2 + \int_{\Gamma_\infty}f_Hg_2^2 - 2 \int_{\Gamma_\infty} (g_1-g_2)\nabla^2 g_2 + 2\int_{\Gamma_\infty}f_H(g_1-g_2)g_2. 
\end{equation*}
\end{enumerate}

\end{lemma}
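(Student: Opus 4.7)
The plan is to exploit that $\Tilde{u}_{1,H}$ is a \emph{strictly positive zero-energy eigenfunction} of the Schr\"odinger-type operator $L:=-\nabla^2+f_H$. Indeed, the Euler--Lagrange equation for the thin-film ground state, $-\nabla^2 \Tilde{u}_{1,H}+\tfrac{5}{3}\Tilde{u}_{1,H}^{7/3}+\psi_{1,H}\Tilde{u}_{1,H}=0$, rewrites as $L\Tilde{u}_{1,H}=0$ because $f_H \Tilde{u}_{1,H}=\tfrac{5}{3}\Tilde{u}_{1,H}^{7/3}+\psi_{1,H}\Tilde{u}_{1,H}$ by definition of $f_H$. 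Part (i) is then the classical statement that the presence of a positive zero eigenfunction forces $L$ to be nonnegative as a quadratic form on $H^1_{\text{per}}(\Gamma_\infty)$, and I would prove it by a Hardy-type substitution. Part (ii) is then a routine bilinear expansion combined with (i).

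For part (i), I would write $h = g/\Tilde{u}_{1,H}$ and use the pointwise Leibniz identity
\begin{equation*}
|\nabla g|^2 + f_H g^2 \;=\; \Tilde{u}_{1,H}^2 |\nabla h|^2 \;+\; \nabla \Tilde{u}_{1,H}\cdot \nabla(h^2 \Tilde{u}_{1,H}) \;+\; f_H \Tilde{u}_{1,H}\,(h^2 \Tilde{u}_{1,H}),
\end{equation*}
which follows from $g=h\Tilde{u}_{1,H}$. Integrating over $\Gamma_\infty$ and using the weak form of $L\Tilde{u}_{1,H}=0$ tested against $h^2 \Tilde{u}_{1,H}$ kills the last two terms and leaves $Q(g):=\int_{\Gamma_\infty}(|\nabla g|^2 + f_H g^2) = \int_{\Gamma_\infty} \Tilde{u}_{1,H}^2|\nabla h|^2 \geq 0$. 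Because $\Tilde{u}_{1,H}$ decays in $x_3$ (the thin film is supported on a bounded slab), the ratio $h$ is not globally in a good space. I would therefore localize via a smooth cutoff $\eta_R$ with $\eta_R=1$ on $[-R,R]$ and vanishing outside $[-2R,2R]$, apply the above identity to $g_R(x):=g(x)\eta_R(x_3)$, and test the EL equation against the compactly supported function $h_R^2 \Tilde{u}_{1,H}$ with $h_R:=g_R/\Tilde{u}_{1,H}$; this is legitimate because the uniform lower bound $\inf_{|x_3|\le 2R}\Tilde{u}_{1,H}\geq \nu_{2R}>0$ recalled in Section~\ref{subsec:thinfilmmod} keeps $h_R\in H^1_{\text{per}}(\Gamma_\infty)$ and $h_R^2 \Tilde{u}_{1,H}$ an admissible test function. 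Passing $R\to\infty$ uses $g\in H^1_{\text{per}}(\Gamma_\infty)$ for the gradient term, dominated convergence with $f_H\in L^\infty(\mathbb{R}^3)$ for the potential term, and the pointwise nonnegativity of the right-hand side.

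For part (ii), I would set $w:=g_1-g_2\in H^1_{\text{per}}(\Gamma_\infty)$ and expand bilinearly to get $Q(g_1)=Q(g_2)+Q(w)+2\int_{\Gamma_\infty}(\nabla g_2\cdot\nabla w + f_H g_2 w)$. Using $g_2\in H^2_{\text{per}}(\Gamma_\infty)$, periodicity in $x_1,x_2$, and the $L^2$ decay in $x_3$ inherited from $H^1$/$H^2$ membership, an integration by parts converts $\int \nabla g_2 \cdot \nabla w$ into $-\int w\nabla^2 g_2$, yielding exactly the right-hand side of the claimed inequality after we discard the nonnegative term $Q(w)\geq 0$ supplied by part (i). The main obstacle throughout is the rigorous Hardy-type substitution in (i): because $\Tilde{u}_{1,H}$ lacks a global positive lower bound, one must thread the cutoff argument carefully and rely on the local lower bounds $\nu_R$ of $\Tilde{u}_{1,H}$ to ensure that $h_R^2 \Tilde{u}_{1,H}$ is admissible in the weak EL equation before passing to the limit. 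The remaining manipulations are straightforward given the uniform bounds on $\Tilde{u}_{1,H}$ and $\psi_{1,H}$ already established in Section~\ref{subsec:thinfilmmod}.
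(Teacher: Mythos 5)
Your proof is correct and rests on the same underlying mechanism as the paper's — the ground-state (Allegretto--Piepenbrink / Hardy-type) transformation using the strictly positive zero mode $\Tilde{u}_{1,H}$ of $L=-\nabla^2+f_H$ — but the implementation of part (i) is genuinely different. The paper does not re-derive the transformation; it cites Lemma 6.2 of \cite{NO17}, which gives $\int_{\mathbb{R}^3}\|\nabla w\|^2 + \int_{\mathbb{R}^3}f_Hw^2\ge 0$ for $w\in H^1(\mathbb{R}^3)$, and then transfers this to the periodic strip $\Gamma_\infty$ by multiplying $g$ by a smooth cutoff $\xi_K$ in the $x_1,x_2$ (periodic) directions and running a scaling/contradiction argument: if $Q(g)=-\beta<0$ then the quadratic form on $g\xi_K$ is $-K^2\beta + O(K)$, which cannot be nonnegative for all $K$. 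You instead prove the positivity directly on $\Gamma_\infty$, substituting $h=g/\Tilde{u}_{1,H}$ and cutting off in $x_3$ only, exploiting the local lower bound $\inf_{|x_3|\le 2R}\Tilde{u}_{1,H}\ge \nu_{2R}$ to make $h_R$ admissible. What your route buys is a self-contained argument that avoids both the external lemma and the area-vs-perimeter contradiction; what the paper's route buys is insulation from the regularity bookkeeping your substitution requires (you need $\Tilde{u}_{1,H}$ smooth enough that $h_R\in H^1$ and that the weak EL equation can legitimately be tested against $h_R^2\Tilde{u}_{1,H}$, which is periodic rather than compactly supported in $x_1,x_2$ — both available here since $\Tilde{u}_{1,H}$ is uniformly bounded in $H^4_{\text{unif}}$, but worth stating). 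Your part (ii) — expand bilinearly, integrate by parts using $g_2\in H^2_{\text{per}}$ and periodicity, discard $Q(g_1-g_2)\ge 0$ — coincides with the paper's.
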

\begin{proof}
Define $\Gamma^K_\infty = (-\frac{K}{2}, \frac{K}{2}) \times (-\frac{K}{2}, \frac{K}{2}) \times (-\infty, \infty)$ for any odd number, $K$ . It follows from the smooth version of Urysohn's lemma that there exists a smooth function $\xi_K$ such that
$$\xi_K (x)=\begin{cases}
1, & x \in \Gamma^K_\infty,\\ 
0, & x \in \mathbb{R}^3 \setminus \Gamma^{K+2}_\infty , \\
h(x), & x \in \Gamma^{K+2}_\infty \setminus \Gamma^{K}_\infty , \\
\end{cases} $$
where $h$ is a non-negative function defined on $\Gamma_K$ such that it is bounded by $1$ and its gradient is bounded by some positive constant $M$, independent of $K$. It is easy to see that $g\xi_K$ is weakly differentiable. Using the periodicity of $g\xi_K$, we see that for any $f \in L^\infty(\mathbb{R}^3)$, 
\begin{equation}
    \label{eq:Lpos_loc1}
    \int_{\Gamma^{K+2}_\infty \setminus \Gamma^{K}_\infty} fg^2\xi_K^2 \leq 4(K+1)\|f\|_{L^\infty(\mathbb{R}^3)}\int_{\Gamma_\infty} g^2.
\end{equation}
A similar argument using the function $f\equiv1$ informs us that $g\xi_K \in L^2(\mathbb{R}^3).$
We further have that
\begin{equation}
\label{eq:Lpos_loc2}
\begin{split}
      \int_{\Gamma^{K+2}_\infty \setminus \Gamma^{K}_\infty} \|\nabla (g \xi_K)\|^2 &\leq 2 \int_{\Gamma^{K+2}_\infty \setminus \Gamma^{K}_\infty} \|\nabla \xi_K\|^2 g^2+ 2\int_{\Gamma^{K+2}_\infty \setminus \Gamma^{K}_\infty} \|\nabla g\|^2 \xi_K^2 \\
    &\leq 8(K+1)(M^2\int _{\Gamma_\infty}g^2 + \int _{\Gamma_\infty} \|\nabla g\|^2).  
\end{split}  
\end{equation}
This implies that $g\xi_K \in H^1(\mathbb{R}^3).$

From section~\ref{subsec:thinfilmmod} we note that
$-\nabla^2 \Tilde{u}_{1,H} + f_H\Tilde{u}_{1,H} =0$ in the sense of distribution, $\Tilde{u}_{1,H} > 0$ and $\Tilde{u}_{1,H}$ belongs to $H^1_{loc}(\mathbb{R}^3)$. Moreover, $f_H \in H^1_{loc} \cap L^\infty (\mathbb{R}^3)$ and is periodic on $\Gamma_\infty$. Hence, using Lemma 6.2 in ~\cite{NO17}  we get that $\int_{\mathbb{R}^3} \|\nabla (g \xi_K)\|^2 + \int_{\mathbb{R}^3}f_H g^2\xi_K^2 \geq 0$ for every odd $K$. Suppose now that $\int_{\Gamma_\infty} \|\nabla g\|^2 + \int_{\Gamma_\infty}f_H g^2 = -\beta$, for some $\beta >0$. This implies that
$$- K^2\beta + \int_{\Gamma^{K+2}_\infty \setminus \Gamma^{K}_\infty} |\nabla g \xi_K|^2 + \int_{\Gamma^{K+2}_\infty \setminus \Gamma^{K}_\infty}f g^2\xi_K^2 \geq 0$$ for every odd $K$. Using equation~\eqref{eq:Lpos_loc1} and \eqref{eq:Lpos_loc2} we get
$$ -K^2\beta + (4K+1)\|f_H\|_{L^\infty(\mathbb{R}^3)}\int _{\Gamma_\infty}g^2 + 8(K+1)M^2 \int _{\Gamma_\infty}g^2 + 8(K+1)\int _{\Gamma_\infty} \|\nabla g\|^2 \geq 0.
$$
But since this inequality won't be satisfied by all $K$, there is no $\beta > 0$ such that $\int_{\Gamma_\infty} \|\nabla g\|^2 + \int_{\Gamma_\infty}f_H g^2 = -\beta$. Hence,
$$\int_{\Gamma_\infty} \|\nabla g\|^2 + \int_{\Gamma_\infty}f_{H} g^2 \geq 0.$$
To show the second part of the lemma, apply the foregoing inequality to $g = g_1 - g_2$ to get
\begin{displaymath}
\begin{split}
    \int_{\Gamma_\infty} \lVert\nabla g_1\rVert^2 + \int_{\Gamma_\infty}f_Hg_1^2 \geq& \int_{\Gamma_\infty} \lVert\nabla g_2\rVert^2 + \int_{\Gamma_\infty}f_Hg_2^2 \\&+ 2 \int_{\Gamma_\infty} \nabla(g_1-g_2)\cdot \nabla g_2 + 2\int_{\Gamma_\infty}f_H(g_1-g_2)g_2. 
\end{split}
\end{displaymath}
Since $g_1-g_2 \in H^1_{per}(\Gamma_\infty)$ and $g_2 \in H^2_{per}(\Gamma_\infty)$, it is easy to see that
$\int_{\Gamma_\infty} \nabla(g_1-g_2)\cdot \nabla g_2 = - \int_{\Gamma_\infty} (g_1-g_2) \nabla^2 g_2$, thereby proving the second part of the lemma. 
\end{proof}

\begin{lemma}
\label{lem:denconvg_help}
    Consider a sequence of functions $\{g_L\}_{L=1}^{\infty}$ in $H^1_{per}(\Gamma_L)$ such that $\|g_L\|_{H^1(\Gamma_L)} < C$ along a subsequence, where $C$ does not depend on $L$. Then $g_L$ converges, along a subsequence, to $g \in H^1_{loc}(\mathbb{R}^3)$ weakly in $H^1_{loc}(\mathbb{R}^3)$, strongly in $L^p_{loc}(\mathbb{R}^3)$ for $1 \leq p < 6$, and almost everywhere on $\mathbb{R}^3$.     
\end{lemma}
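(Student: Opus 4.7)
The plan is to reduce the statement to a standard Cantor diagonal argument built on top of Rellich--Kondrachov compactness on each fixed $\Gamma_K$. Working from the outset with the subsequence along which $\|g_L\|_{H^1(\Gamma_L)}\le C$ (and relabeling it as $g_L$), the first observation is the nested bound: whenever $K\le L$, the inclusion $\Gamma_K\subset \Gamma_L$ gives $\|g_L\|_{H^1(\Gamma_K)}\le \|g_L\|_{H^1(\Gamma_L)}\le C$. Thus, for every fixed $K\in\mathbb{N}$, the tail $\{g_L\}_{L\ge K}$ is a uniformly bounded sequence in the reflexive Hilbert space $H^1(\Gamma_K)$.

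Next, for each fixed $K$, weak compactness of closed balls in $H^1(\Gamma_K)$ together with the Rellich--Kondrachov theorem (the embedding $H^1(\Gamma_K)\hookrightarrow L^p(\Gamma_K)$ is compact for $1\le p<6$ in three dimensions) yields a subsequence of $\{g_L\}_{L\ge K}$ that converges weakly in $H^1(\Gamma_K)$ and strongly in $L^p(\Gamma_K)$ for every $1\le p<6$ to some limit $g^{(K)}\in H^1(\Gamma_K)$. I would then apply a standard Cantor diagonal extraction over $K=1,2,3,\ldots$: extract a subsequence that works on $\Gamma_1$, a further subsequence that works on $\Gamma_2$, and so on, and take the diagonal. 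Because strong $L^p$ and weak $H^1$ convergence on a given $\Gamma_K$ are inherited by any further subsequence, the diagonal subsequence converges simultaneously on every $\Gamma_K$, and uniqueness of limits forces $g^{(K+1)}|_{\Gamma_K}=g^{(K)}$ a.e., so these locally defined pieces glue into a single function $g$ on $\bigcup_K \Gamma_K=\mathbb{R}^3$. By weak lower semicontinuity, $\|g\|_{H^1(\Gamma_K)}\le C$ for every $K$, so $g\in H^1_{loc}(\mathbb{R}^3)$, and the resulting convergence is exactly weak in $H^1_{loc}(\mathbb{R}^3)$ and strong in $L^p_{loc}(\mathbb{R}^3)$ for $1\le p<6$.

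For the pointwise a.e.\ statement, I would invoke the usual fact that $L^1$ convergence on a set of finite measure implies a.e.\ convergence along a further subsequence. Applying this on $\Gamma_1$, then on $\Gamma_2$, and so on, followed by one more diagonal extraction, produces a single subsequence that converges to $g$ almost everywhere on every $\Gamma_K$, hence almost everywhere on $\mathbb{R}^3$.

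I do not expect any serious obstacle: the entire argument is a textbook combination of reflexivity, compact Sobolev embeddings in dimension three, and diagonalization. The only point that needs a moment of care is to confirm that the diagonal extraction preserves both the weak $H^1(\Gamma_K)$ convergence and the strong $L^p(\Gamma_K)$ convergence for \emph{every} $K$ simultaneously, but this is immediate because both notions of convergence are hereditary under passing to a further subsequence, and because the local limits are compatible on overlaps.
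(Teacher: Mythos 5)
Your proposal follows the same skeleton as the paper's proof: uniform bounds on fixed bounded domains, weak $H^1$ compactness plus Rellich--Kondrachov for the strong $L^p$ and a.e.\ convergence, and a Cantor diagonal to get a single subsequence with compatible local limits. However, there is a genuine gap in the coverage of $\mathbb{R}^3$. You write ``$\bigcup_K \Gamma_K=\mathbb{R}^3$,'' but this is false: by the paper's conventions $\Gamma_K=[-\tfrac12,\tfrac12)\times[-\tfrac12,\tfrac12)\times[-\tfrac{K}{2},\tfrac{K}{2})$, so $\bigcup_K\Gamma_K=\Gamma_\infty=[-\tfrac12,\tfrac12)^2\times\mathbb{R}$, an infinite strip of unit cross-section, not all of $\mathbb{R}^3$. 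Your argument therefore only yields convergence on compact subsets of that strip, whereas the lemma claims weak $H^1_{\mathrm{loc}}(\mathbb{R}^3)$, strong $L^p_{\mathrm{loc}}(\mathbb{R}^3)$, and a.e.\ on $\mathbb{R}^3$ convergence.

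To close the gap you must use the periodicity of $g_L$ in the $x$ and $y$ directions. The paper does this explicitly: it introduces enlarged boxes $\Gamma^a_L=(-\tfrac{a}{2},\tfrac{a}{2})^2\times(-\tfrac{L}{2},\tfrac{L}{2})$ and notes that, since $g_L\in H^1_{\mathrm{per}}(\Gamma_L)$ is $1$-periodic in $x_1,x_2$, one has $\|g_L\|_{H^1(\Gamma^a_L)}\le aC$ with $a$-dependent but $L$-independent bound, and then runs the diagonal argument over $\Gamma^a_{R_k}$ (which \emph{do} exhaust $\mathbb{R}^3$ as $a\to\infty$). Equivalently, one could carry out your argument as written on the strip, observe that the limit $g$ on $\Gamma_\infty$ extends periodically in $x_1,x_2$ to a function in $H^1_{\mathrm{loc}}(\mathbb{R}^3)$, and then note that the local convergence on the strip combined with $x_1,x_2$-periodicity of each $g_L$ upgrades to local convergence on all of $\mathbb{R}^3$. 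Either route works, but the step is not optional and your write-up papers over it by misidentifying the union of the $\Gamma_K$.
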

\begin{proof}
    Let $\Gamma_L^a = (-\frac{a}{2},\frac{a}{2}) \times (-\frac{a}{2},\frac{a}{2})  \times (-\frac{L}{2},\frac{L}{2})$ for any odd numbers $a$ and $L$. 
    Then $\|g_L\|_{H^1(\Gamma^a_L)} \leq aC$ for all $L \in \mathcal{I}$, where $\mathcal{I}$ is the subsequence for which $\|g_L\|_{H^1(\Gamma_L)} < C$.  
    Consider a monotonically increasing sequence of real numbers $\{R_k\}_{k=1}^\infty$ such that $R_L < L$ for all $L \in \mathcal{I}$. We then have, for $k\leq L$, 
    $$ \|g_{L}\|_{H^1(\Gamma^a_{R_k})} \leq \|g_{L}\|_{H^1(\Gamma^a_{R_L})} \leq \|g_{L}\|_{H^1(\Gamma^a_{L})} \leq aC.$$
    Fixing a value for $k$ and varying $L \ge k$, we see from the above inequality that $g_{L}|_{\Gamma^a_{R_k}}$ converges along a subsequence to some $\Tilde{g}_k \in H^1(\Gamma^a_{R_k})$ weakly in $H^1(\Gamma^a_{R_k})$, strongly in $L^p(\Gamma^a_{R_k})$ for $1 \leq p < 6$, and almost everywhere on $(\Gamma^a_{R_k})$. Moreover, since weak convergence in $H^1(\Gamma^a_{R_k})$ implies $\|\Tilde{g}_k\|_{H^1(\Gamma^a_{R_k})} \leq \liminf \|g_{L}\|_{H^1(\Gamma^a_{R_k})}$, we get
    $$\|\Tilde{g}_k\|_{H^1(\Gamma^a_{R_k})} \leq aC.$$
Let us denote this subsequence as $\{g_{L^k_{n}}\}_{n=1}^\infty$. Using an argument similar to that just presented, we can extract a subsequence $\{g_{L^{k+1}_{n}}\}_{n=1}^\infty$ of $\{g_{L^k_{n}}\}_{n=1}^\infty$ such that its restriction to $\Gamma^a_{R_{k+1}}$ converges to $\tilde{g}_{k+1} \in  H^1(\Gamma^a_{R_{k+1}})$ weakly in $H^1(\Gamma^a_{R_{k+1}})$, strongly in $L^p(\Gamma^a_{R_{k+1}})$ for $1 \leq p < 6$, almost everywhere on $\Gamma^a_{R_{k+1}}$, and such that
    $$ \|\Tilde{g}_{k+1}\|_{H^1(\Gamma^a_{R_{k+1}})} \leq aC. $$
    Note that $\Tilde{g}_{k+1}|_{\Gamma^a_{R_{k}}} = \Tilde{g}_{k}$. Following this reasoning, we define a function $g:\mathbb{R}^3 \to \mathbb{R}$ such that $g|_{\Gamma^a_{R_{k}}} = \Tilde{g}_{k}$. We then get, for any $k \in \mathbb{N}$, that 
    $$ \| g \|_{H^1(\Gamma^a_{R_{k}})} \leq aC.$$
    Taking the limit $k \to \infty$ we get,
    $$ \|g\|_{H^1(\Gamma^a_\infty)} \leq aC,$$
    and thus establishes that $g \in H^1_{loc}(\mathbb{R}^3).$ To prove the lemma, we use a diagonal argument by noting that for any $m \in \mathbb{N}$, $\{g_{L^k_{k}}\}_{k=1}^\infty$ converges to $\Tilde{g}_{m}$ weakly in $H^1(\Gamma^a_{R_m})$, strongly in $L^p(\Gamma^a_{R_m})$ for $1 \leq p < 6$ and almost everywhere on $\Gamma^a_{R_m}$. This establishes the existence of a subsequence of $\{g_L\}_{L=1}^{\infty}$ that converges to $g$ weakly in $H^1_{loc}(\mathbb{R}^3)$, strongly in $L^p_{loc}(\mathbb{R}^3)$ for $1 \leq p < 6$ and almost everywhere on $\mathbb{R}^3$.
\end{proof}

\section*{Acknowledgments}
The authors are grateful to Vikram Gavini, as well as to the anonymous referees, for useful comments and suggestions.

\appendix

\section{Thermodynamic Limit for TFW models} \label{app:tfw_thms}
We summarize here a few theorems regarding the thermodynamic limit in the context of TFW models that are used in this work. 

\subsection{Thermodynamic Limit for Perfect Crystals}
We recall in this setting the following theorem which establishes the existence and uniqueness of solutions to the coupled system of PDEs \eqref{eq:TFW_smeared_EL}:

\begin{theorem} \label{thm:TFW_smeared_EL_exst_uniq}
\textit{\textup{[}Theorem 6.5 in \cite{CLBL98}\textup{]}}
Given $n \in \mathcal{M}$, there exists a unique solution $(u,\phi) \in L^{7/3}_{\text{loc}}\cap L^2_{\text{unif}}(\mathbb{R}^3) \times L^1_{\text{unif}}(\mathbb{R}^3)$ to the Schr\"odinger-Poisson system \eqref{eq:TFW_smeared_EL} such that $u \geq 0$. Further, $\inf u > 0$ and $u \in L^\infty \cap H^2_{\text{loc}} \cap C^{0,\alpha}(\mathbb{R}^3)$ for all $0 < \alpha < 1$. 
\end{theorem}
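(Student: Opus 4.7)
The plan is to construct $(u,\phi)$ by exhaustion from the well-understood finite-nuclear-mass TFW problem of \cite{BBL81}. For each $R > 0$, set $n_R = n \chi_{B_R}$, which has finite total mass $M_R = \int n_R$ and belongs to $L^2(\mathbb{R}^3)$ since $n \in L^2_{\text{unif}}(\mathbb{R}^3)$. The constrained TFW minimization
\begin{equation*}
\inf \Big\{ \textstyle\int |\nabla u|^2 + \int u^{10/3} + D(u^2 - n_R,\, u^2 - n_R) : u \in H^1(\mathbb{R}^3),\, u \ge 0,\, \int u^2 = M_R \Big\}
\end{equation*}
then admits a unique positive minimizer $u_R$, and together with $\phi_R = (u_R^2 - n_R) \star |x|^{-1}$ it satisfies the Schr\"odinger--Poisson system \eqref{eq:TFW_smeared_EL} for $n_R$ on all of $\mathbb{R}^3$. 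The remainder of the argument would be to push $R \to \infty$ and show the limit solves the full problem with all claimed regularity and bounds.

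The first substantive step is to obtain $R$-independent uniform control. For the $L^\infty_{\text{unif}}$ bound on $u_R$, I would exploit the strong local dissipation of the Schr\"odinger equation: rewriting $-\Delta u_R + \tfrac{5}{3} u_R^{7/3} = -\phi_R u_R$ and testing against truncations of $u_R$ on unit balls, one finds that any large local value of $u_R$ produces a $u^{7/3}$ reaction term that cannot be balanced by the attractive potential, whose local size is controlled by $(H_1)$; this yields $\|u_R\|_{L^\infty_{\text{unif}}} \le C$ depending only on the uniform bound on $n$. Standard elliptic regularity then upgrades $u_R$ to $H^2_{\text{loc}} \cap C^{0,\alpha}_{\text{loc}}$ uniformly in $R$. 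An analogous near-field/far-field decomposition of the Newtonian kernel, combined with $(H_1)$ and the bound on $u_R^2$, gives $\phi_R \in L^1_{\text{unif}}(\mathbb{R}^3)$ uniformly in $R$.

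The crucial and most delicate step, and the expected main obstacle, is the pointwise lower bound $\inf_{\mathbb{R}^3} u > 0$; this is precisely where hypothesis $(H_2)$ enters. The intuition is a screening argument: if $u$ were very small on some ball, Poisson's equation would be dominated by the nuclear term $-\Delta \phi \approx -4\pi n$, forcing $\phi$ to become strongly negative wherever enough nuclear mass accumulates (guaranteed on growing balls by $(H_2)$); but the Schr\"odinger equation then gives $\tfrac{5}{3} u^{4/3} = -\phi - (-\Delta u)/u$ wherever $u > 0$, contradicting smallness of $u$. Making this rigorous requires a careful multi-scale comparison of $\phi_R$ against the harmonic potential of $n$ on balls of growing radii, together with a Harnack-type inequality for the Schr\"odinger equation to propagate a positive lower bound, and this is essentially the content of Proposition 6.4 in \cite{CLBL98}. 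The subtlety is that $(H_2)$ is purely quantitative and global, so the estimates must be sharp enough on every scale simultaneously.

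With these uniform estimates in place, the remainder is routine. A diagonal extraction produces a subsequence along which $u_{R_k} \to u$ weakly in $H^1_{\text{loc}}$, strongly in $L^p_{\text{loc}}$ for $1 \le p < 6$, and almost everywhere, with $u \ge 0$; analogously $\phi_{R_k} \to \phi$. Passing to the limit in the distributional form of the system yields a solution in the claimed spaces, and lower semicontinuity of uniform norms together with a.e.\ convergence preserves $u \in L^\infty \cap H^2_{\text{loc}} \cap C^{0,\alpha}$, $\phi \in L^1_{\text{unif}}$, and $\inf u > 0$. Uniqueness is obtained last by the standard convexity argument: given two solutions $(u^{(i)}, \phi^{(i)})$ with $\inf u^{(i)} > 0$, testing the difference of the Schr\"odinger equations against $u^{(1)} - u^{(2)}$ on an exhausting family of cutoffs and invoking strict convexity of $\rho \mapsto \rho^{5/3}$ together with positivity of the bilinear form $D$ forces $u^{(1)} = u^{(2)}$; uniqueness of $\phi$ then follows from the Poisson equation (modulo the normalization implicit in $L^1_{\text{unif}}$).
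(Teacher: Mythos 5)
This theorem is not proved in the paper: it is quoted verbatim as Theorem 6.5 of \cite{CLBL98}, so there is no ``paper proof'' to compare against. Your sketch is, however, a faithful outline of the strategy that Catto, Le Bris and Lions actually use: truncate the nuclear density, solve the finite-mass problem of \cite{BBL81}, derive $R$-independent $L^\infty_{\text{unif}}$ and $L^1_{\text{unif}}$ estimates from $(H_1)$, use $(H_2)$ together with a multi-scale comparison and a Harnack-type propagation to obtain the strictly positive lower bound on $u$, diagonalize, pass to the distributional limit, and argue uniqueness from the convex structure. You have correctly identified $(H_2)$ as the hinge for $\inf u > 0$ and correctly flagged that point as the most delicate step.

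The one place where your account drifts is the uniqueness argument. You say you would test the difference of the Schr\"odinger equations against $u^{(1)} - u^{(2)}$, but that multiplier does not produce the sign-definite structure you then appeal to: the potential term $\int (\phi_1 u_1 - \phi_2 u_2)(u_1 - u_2)$ does not reorganize into $D(u_1^2 - u_2^2,\, u_1^2 - u_2^2) \ge 0$, and the kinetic term $\int |\nabla(u_1 - u_2)|^2$ is not the one controlled by the convexity of $\rho \mapsto |\nabla\sqrt{\rho}|^2$ that you invoke a line later. The standard TFW uniqueness trick --- the one implicit in your ``strict convexity of $\rho \mapsto \rho^{5/3}$'' remark, and the one used in \cite{CLBL98} and later streamlined in \cite{Blanc06} --- multiplies the $i$-th equation by $(u_1^2 - u_2^2)/u_i$ and subtracts, so that the kinetic contribution becomes a manifestly nonnegative Kato-type expression, the nonlinearity contributes $\int(u_1^{4/3} - u_2^{4/3})(u_1^2 - u_2^2) \ge 0$ by monotonicity, and the electrostatic contribution becomes exactly $D(u_1^2 - u_2^2, u_1^2 - u_2^2) \ge 0$. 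The positivity of $u_i$ (which is precisely the $\inf u > 0$ step) is what makes this multiplier admissible, and the cutoff boundary terms must be controlled using the uniform estimates; both points are genuinely technical in the infinite-mass setting and should not be described as ``standard'' without flagging them. With that correction, the skeleton of your argument is the right one.
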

Note that if the nuclear distribution $m$ is periodic, then the uniqueness of the solution of the coupled system \eqref{eq:TFW_smeared_EL} implies that both $\phi$ and $u$ are periodic with the same periodicity as $m$. 

\begin{corollary} \label{cor:ortner_TFW_periodic}
\textit{\textup{[}Theorem 3.1 in \cite{NO17}\textup{]}}
Let $n \in \mathcal{M}^M$. Then the ground state solution $(u, \phi)$ given by theorem~\ref{thm:TFW_smeared_EL_exst_uniq} are uniformly bounded in $H^4_{\text{unif}}(\mathbb{R}^3) \times H^2_{\text{unif}}(\mathbb{R}^3)$.    
\end{corollary}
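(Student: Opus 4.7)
The plan is to bootstrap the regularity of $(u,\phi)$ from what is already guaranteed by Theorem~\ref{thm:TFW_smeared_EL_exst_uniq}---namely $u \in L^\infty \cap H^2_{\text{loc}}$ with $\inf u > 0$, and $\phi \in L^1_{\text{unif}}$---up to the claimed uniform bounds $u \in H^4_{\text{unif}}$, $\phi \in H^2_{\text{unif}}$. The key tool will be interior elliptic regularity applied on unit balls to the two halves of the Schr\"odinger--Poisson system, combined with the Sobolev embedding $H^2(\mathbb{R}^3)\hookrightarrow L^\infty(\mathbb{R}^3)$ in three dimensions. A preliminary step is to make the constants in Theorem~\ref{thm:TFW_smeared_EL_exst_uniq} quantitatively dependent on $M$: inspection of the proof in \cite{CLBL98} shows that for $n \in \mathcal{M}^M$ one in fact obtains $\|u\|_{L^\infty} \leq C(M)$, $\inf u \geq c(M) > 0$, and $\|\phi\|_{L^1_{\text{unif}}} \leq C(M)$, by tracking through the maximum-principle and Harnack-type arguments used there.

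Next I would bootstrap $\phi$ via Poisson's equation $-\nabla^2 \phi = 4\pi(u^2 - n)$. On any unit ball $B_2(x_0)$ the right-hand side is uniformly controlled in $L^2$, since $u^2 \in L^\infty_{\text{unif}}$ and $n \in L^2_{\text{unif}}$. Combined with the $L^1_{\text{unif}}$ bound on $\phi$, the interior $L^2$-elliptic estimate
\begin{equation*}
\|\phi\|_{H^2(B_1(x_0))} \leq C\bigl(\|\nabla^2 \phi\|_{L^2(B_2(x_0))} + \|\phi\|_{L^1(B_2(x_0))}\bigr)
\end{equation*}
gives $\phi \in H^2_{\text{unif}}$ with norm bounded by some $C(M)$. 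The embedding $H^2 \hookrightarrow L^\infty$ then yields $\|\phi\|_{L^\infty_{\text{unif}}} \leq C(M)$.

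With $\phi \in L^\infty_{\text{unif}}$ in hand I would bootstrap $u$ via the Schr\"odinger equation rewritten as $-\nabla^2 u = -\frac{5}{3}u^{7/3} - \phi u$. The right-hand side is in $L^\infty_{\text{unif}}$, so interior regularity gives $u \in W^{2,p}_{\text{unif}}$ for every finite $p$, hence in particular $u \in H^2_{\text{unif}}$. Since $H^2(\mathbb{R}^3)$ is a Banach algebra (again by $H^2 \hookrightarrow L^\infty$), and since $\inf u \geq c(M) > 0$ allows composition with the smooth map $t \mapsto t^{7/3}$ on a range bounded away from zero, both $u^{7/3}$ and $\phi u$ lie in $H^2_{\text{unif}}$. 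Applying interior elliptic regularity once more upgrades $u$ to $H^4_{\text{unif}}$, closing the argument.

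The main obstacle will be making the initial $L^1_{\text{unif}}$ bound on $\phi$ genuinely quantitative in $M$: purely PDE-based estimates for Poisson's equation on $\mathbb{R}^3$ can produce harmonic contributions to $\phi$ that are not controlled by $M$ alone. The cleanest workaround is to use the Schr\"odinger equation itself in the form $\phi = -\nabla^2 u / u - \frac{5}{3} u^{4/3}$, and then read off a uniform bound on $\phi$ from the local $H^2$ control on $u$ together with the uniform positivity $u \geq c(M) > 0$. This converts the problem of bounding $\phi$ into one of bounding $u$, which is precisely what the input from \cite{CLBL98} already provides; the remaining bootstrap steps are then standard.
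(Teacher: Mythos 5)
The paper does not give a proof of this corollary at all: it is cited verbatim from Theorem~3.1 of \cite{NO17}. Your proposal is therefore an attempt to \emph{reconstruct} that result. The elliptic bootstrap you outline---upgrading $\phi$ to $H^2_{\text{unif}}$ from Poisson once the right-hand side is in $L^2_{\text{unif}}$, then using $H^2(\mathbb{R}^3)\hookrightarrow L^\infty$ and the Banach-algebra property to push $u$ up to $H^4_{\text{unif}}$---is correct and standard, \emph{provided} one already has quantitative $L^\infty$ bounds on $u$ and $\phi$ (or on $\phi$ in some $L^p_{\text{unif}}$) with constants depending only on $M$. That initial quantitative control is the real content of the corollary, and it is precisely there that your argument does not close.

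The workaround you propose for bounding $\phi$---rewriting the Schr\"odinger equation as $\phi = -\nabla^2 u/u - \tfrac{5}{3}u^{4/3}$ and reading off a bound from ``the local $H^2$ control on $u$''---is circular. Theorem~\ref{thm:TFW_smeared_EL_exst_uniq} provides $u\in H^2_{\text{loc}}$, which is a bound on each fixed compact set, \emph{not} a bound on unit balls that is uniform over translates and over all $n\in\mathcal{M}^M$. To upgrade $u\in H^2_{\text{loc}}$ to $u\in H^2_{\text{unif}}$ with a constant $C(M)$ one naturally invokes interior regularity for $-\nabla^2 u = -\tfrac{5}{3}u^{7/3}-\phi u$, whose right-hand side requires a uniform bound on $\phi$---the very quantity you are trying to bound. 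Thus ``convert the problem of bounding $\phi$ into one of bounding $u$'' does not break the loop; it just restates it. The uniform positivity $\inf u \geq c(M)$ and $\|u\|_{L^\infty}\leq C(M)$ that you assert as available from \cite{CLBL98} are also asserted, not derived: Theorem~\ref{thm:TFW_smeared_EL_exst_uniq} as stated only gives $\inf u > 0$ and $u\in L^\infty$ without tracking dependence on $M$, and you would still need a non-circular mechanism (in \cite{NO17} this goes through maximum-principle/screening arguments that pin down the potential, not through a raw elliptic bootstrap) to control the possible harmonic contributions to $\phi$. As it stands, the proposal relegates the heart of the theorem---the uniformity in $M$---to unverified ``inspection of \cite{CLBL98}'' and a self-referential estimate.
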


The finite nuclei system can also be generalised in a similar fashion. 
\begin{proposition}\label{prop:ortner_TFW_general_finite}
\textit{\textup{[}Proposition 6.1 in \cite{NO17}\textup{]}}
Let $m : \mathbb{R}^3 \rightarrow \mathbb{R}_{\geq 0} $ satisfy $\|m\|_{L^2_{unif}(\mathbb{R}^3)} \leq M$. Then define $m_{R_n} = m\chi_{B_{R_n}(0)}$, where $R_n \uparrow \infty$. Then the unique solution to the minimisation problem 
\begin{displaymath}
    \begin{split}
          I(m_{R_n}) &= \inf\; \left\{ E (v, m_{R_n}) \;:\; v \in H^1(\mathbb{R}^3), v\geq 0, \int_{\mathbb{R}^3}v^2 = \int_{\mathbb{R}^3}m_{R_n}\right\},\\
    E (v, m_{R_n}) &=  \int_{\mathbb{R}^3} \lVert \nabla v(x) \rVert^2\,dx + \int_{\mathbb{R}^3} v^{\frac{10}{3}}(x)\,dx\\
    & + \frac{1}{2}\int_{\mathbb{R}^3}\int_{\mathbb{R}^3} \frac{(v^2(x) -m_{R_n}(x))(v^2(x) -m_{R_n}(x))}{\lVert x - y\rVert} \, dx \,dy ,\\  
    \end{split}
\end{displaymath}
satisfy
\begin{displaymath}
     \begin{split}
        &-\nabla^2 u_{R_n} + \frac{5}{3}u_{R_n}^{5/3} + \phi_{R_n}u_{R_n} = 0, \\
    &-\nabla^2 \phi_{R_n} = 4\pi(u_{R_n}^2 - m_{R_n}).  
    \end{split}   
\end{displaymath}

Moreover, $(u_{R_n}, \phi_{R_n})$ is uniformly bounded in $H^4_{\text{unif}}(\mathbb{R}^3) \times H^2_{\text{unif}}(\mathbb{R}^3)$.
\end{proposition}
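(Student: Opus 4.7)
The plan is to handle the three assertions in sequence: (i) existence and uniqueness of a minimizer of the finite nuclei problem for each fixed $n$, (ii) derivation of the Schr\"odinger-Poisson Euler-Lagrange system, and (iii) the uniform-in-$n$ bound in $H^4_{\text{unif}}(\mathbb{R}^3) \times H^2_{\text{unif}}(\mathbb{R}^3)$. The first two are essentially a specialization of the classical Benilan-Brezis-Lieb theory \cite{BBL81} to the truncated nuclear density $m_{R_n}$; the real content of the proposition is the uniformity of the regularity estimates.

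For fixed $n$, since $m_{R_n}$ is compactly supported and belongs to $L^2(\mathbb{R}^3)$, the admissible set is non-empty, and the functional $E(v, m_{R_n})$ is coercive and weakly lower semi-continuous on $H^1(\mathbb{R}^3)$. The direct method yields a minimizer $u_{R_n} \geq 0$; strict convexity of $E(\sqrt{\rho}, m_{R_n})$ in $\rho = v^2$ (classical since BBL81) gives uniqueness. The first variation, together with the charge constraint, produces $-\Delta u_{R_n} + \tfrac{5}{3} u_{R_n}^{7/3} + \phi_{R_n} u_{R_n} = \theta_n u_{R_n}$ and $-\Delta \phi_{R_n} = 4\pi(u_{R_n}^2 - m_{R_n})$. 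Neutrality of the finite molecular system together with the decay of $u_{R_n}$ at infinity (another standard BBL81 consequence, since $m_{R_n}$ has finite total mass) forces $\theta_n = 0$, producing the stated system.

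For the uniform bound, the strategy is a pointwise $L^\infty$ estimate on $u_{R_n}$ followed by an elliptic bootstrap. At a point $x_0$ where $u_{R_n}$ attains its maximum (or via a Moser iteration on a unit ball), $-\Delta u_{R_n}(x_0) \geq 0$ yields $\tfrac{5}{3} u_{R_n}(x_0)^{4/3} \leq -\phi_{R_n}(x_0)$, so an $L^\infty$ bound on $u_{R_n}$ reduces to a pointwise lower bound on $\phi_{R_n}$. Using the Newtonian representation $\phi_{R_n}(x) = \int |x-y|^{-1}(m_{R_n}(y) - u_{R_n}(y)^2)\,dy$, the negative part is controlled locally by $\|m_{R_n}\|_{L^2_{\text{unif}}} \leq M$ and by $\|u_{R_n}\|_{L^\infty}\|u_{R_n}\|_{L^2_{\text{unif}}}$; a fixed-point argument closes the loop, yielding $\|u_{R_n}\|_{L^\infty} \leq C(M)$ independent of $n$. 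This is essentially the argument developed in Section 3 of \cite{NO17} and the maximum-principle reasoning of Lemma~6.5 in \cite{CLBL98}. Once $u_{R_n} \in L^\infty$ uniformly, the Poisson equation has right-hand side in $L^2_{\text{unif}}$ uniformly, so Calder\'on-Zygmund / Newtonian potential estimates on unit balls give $\phi_{R_n} \in H^2_{\text{unif}}$ uniformly; plugging back into the Schr\"odinger equation and iterating local $W^{k,p}$ elliptic regularity lifts $u_{R_n}$ successively to $H^2_{\text{unif}}$, $H^3_{\text{unif}}$, and $H^4_{\text{unif}}$, with constants depending only on $M$.

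The main obstacle is the uniform $L^\infty$ bound: $m_{R_n}$ is compactly supported and therefore does \emph{not} belong to the class $\mathcal{M}$ (it fails condition $(H_2)$), so Theorem~\ref{thm:TFW_smeared_EL_exst_uniq} and Corollary~\ref{cor:ortner_TFW_periodic} cannot be invoked directly. The key observation that makes the argument go through is that the regularity estimates in \cite{NO17} rely only on $\|m_{R_n}\|_{L^2_{\text{unif}}} \leq M$ and the self-screening structure of the coupled Schr\"odinger-Poisson system, neither of which requires $(H_2)$. Care must therefore be taken to re-run the NO17 arguments locally rather than as global consequences of the existence theory for $\mathcal{M}$, but the bootstrap is otherwise straightforward.
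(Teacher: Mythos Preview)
The paper does not provide its own proof of this proposition: it is stated in Appendix~\ref{app:tfw_thms} as a quotation of Proposition~6.1 in \cite{NO17}, without argument. Your sketch is therefore not being compared against an in-paper proof but against the original Nazar--Ortner argument, and in that regard it is faithful: existence and uniqueness for each fixed $n$ from \cite{BBL81}, the Euler--Lagrange system with vanishing Lagrange multiplier from neutrality, and then the uniform $L^\infty$ bound via a maximum-principle / Moser-type argument followed by elliptic bootstrap. Your identification of the main obstacle---that $m_{R_n}$ fails $(H_2)$ so one cannot invoke Theorem~\ref{thm:TFW_smeared_EL_exst_uniq} or Corollary~\ref{cor:ortner_TFW_periodic} directly, and that the NO17 regularity machinery depends only on the $L^2_{\text{unif}}$ bound---is exactly the point of the proposition and is correctly handled.

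One minor remark: in your sketch of the pointwise $L^\infty$ estimate you write that the negative part of $\phi_{R_n}$ is controlled by $\|u_{R_n}\|_{L^\infty}\|u_{R_n}\|_{L^2_{\text{unif}}}$, which is circular as stated since $\|u_{R_n}\|_{L^2_{\text{unif}}}$ is not yet known to be bounded uniformly. The actual NO17 argument (their Lemma~6.2 and the proof of Proposition~6.1) closes this loop differently, using the coupled structure of the system to get a self-contained inequality for $\sup u_{R_n}$ in terms of $M$ alone; you gesture at this with ``a fixed-point argument closes the loop'' but would need to be more careful in a full write-up. Also note that the displayed Euler--Lagrange equation in the statement contains what appears to be a typo ($u_{R_n}^{5/3}$ should be $u_{R_n}^{7/3}$), which you silently and correctly fixed.
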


The thermodynamic limit arguments presented above can be readily generalized to the case of a \emph{supercell}, which is a volume that is larger than the unit cell, and which can be periodically tiled to fill $\mathbb{R}^3$. Specifically, the unit cell $\Gamma_0$ is replaced with a supercell $\Gamma$ with dimensions $L = (L_1, L_2, L_3)$. Further, define $\mathcal{R}_\Gamma = L_1\mathbb{Z} \times L_2\mathbb{Z} \times L_3\mathbb{Z}$ and its reciprocal lattice $\mathcal{R}^*_\Gamma = 2\pi \Big(\frac{\mathbb{Z}}{L_1} \times \frac{\mathbb{Z}}{L_2} \times \frac{\mathbb{Z}}{L_3} \Big)$. We denote by $\mathcal{M}_L$ the set of nuclear densities in $\mathcal{M}$ that are $\mathcal{R}_L$-periodic. Further details can be found in \cite{CE11}. It is to be noted that this modification is commonly employed in numerical computations.

Given a periodic nuclear charge distribution  $m \in \mathcal{M}_L$, the TFW energy functional on the supercell is written as follows: for any $v \in H^1_{\text{per}}(\Gamma)$,
\begin{equation}
 \label{eq:TFW_energy_supercell}
 \begin{split}
  E^{L}_{m}(v) &= \int_{\Gamma} \lVert\nabla v\rVert^2 + \int_{\Gamma} v^{\frac{10}{3}}  + \frac{1}{2}D_{\Gamma}(m - v^2, m - v^2), \\
D_{\Gamma}(f, g) &= \iint_{{\Gamma} \times {\Gamma}} f(x)g(y)G_\Gamma(x-y)\,dx dy, \quad f, g \in L^{2}({\Gamma}).   
 \end{split}
\end{equation}

In the equation displayed above, $$G_\Gamma(x) = \frac{4\pi}{|\Gamma|}\sum_{k \in \mathcal{R}^*_{\Gamma} \setminus \{0\}} \frac{e^{ik\cdot x}}{\|k\|^2} + C,$$
where the constant $C$ is chosen such that the minimum of $G$ is zero.

It is easy to see that $G_\Gamma \in L^2_{per}(\Gamma)$ is the unique solution of the following PDE:
\begin{equation}
\label{eq:green_TFW_supercell}
    \begin{split}
      -\nabla^2 G_\Gamma &= 4\pi\left(-|\Gamma|^{-1} + \sum_{k \in \mathcal{R}_\Gamma } \delta_k \right), \\ \min_{\mathbb{R}^3}G_\Gamma = 0, & \quad G_\Gamma \hspace{2pt}
 \text{periodic over}\hspace{2pt} \Gamma.  
    \end{split}
\end{equation}

The ground state energy per neutral supercell is obtained by solving the following minimization problem:
\begin{equation} \label{eq:TFW_min_supercell}
I^{L}_m = \inf\left\{E^L_m(v)\;:\; v \in H^1_{per}(\Gamma), \int_{\Gamma}v^2 = \int_{\Gamma} m \right\} .
\end{equation}
We have the following result about the existence and uniqueness of solution to this minimization problem, in analogy with Theorem~\ref{thm:TFW_smeared_EL_exst_uniq}.
\begin{theorem} \label{thm:TFW_supercell_exst_uniq}
\textit{\textup{[}Proposition 2.1 in \cite{CE11}\textup{]}}
 Let $m \in L^2_{per}(\Gamma)$. Then there exists exactly two minimizers $u_\Gamma > 0$ and $-u_\Gamma$ for the minimization problem \eqref{eq:TFW_min_supercell}. Further, $u_\Gamma \in H^3_{per}(\Gamma) \hookrightarrow C^1(\mathbb{R}^3) \cap L^{\infty}(\mathbb{R}^3)$ and satisfies the Euler equation 
\begin{equation} \label{eq:TFW_supercell_EL}
-\nabla^2 u_\Gamma + \frac{5}{3}u_\Gamma^{7/3} + \phi_\Gamma u_\Gamma = 0,
\end{equation}
where $$\phi_\Gamma(x) = \int_{\Gamma} G_\Gamma(x-y) (u_\Gamma^2(y) - m_\Gamma(y))\,dy - \epsilon_\Gamma,$$ and $\epsilon_\Gamma$ is the Lagrange multiplier of the constraint $\int_{\Gamma} u^2 =\int_{\Gamma} m$. The potential $\phi_\Gamma$ is the unique solution in $H^1_{per}(\Gamma)$ to the periodic Poisson problem:
\begin{equation} \label{eq:TFW_supercell_potential}
\begin{split}
-\nabla^2 \phi_\Gamma &= 4\pi (u_\Gamma^2 - m),\\
\int_{\Gamma}\phi_\Gamma &= 0. 
\end{split}
\end{equation}
In addition, there exist constants  $0 < m \leq M < \infty$ such that $m \leq u_L \leq M$.
\end{theorem}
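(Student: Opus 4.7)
The plan is to establish the five assertions of the theorem in sequence: existence of a minimizer, derivation of the Euler--Lagrange system together with the properties of $\phi_\Gamma$, regularity bootstrap to $H^3_{per}$, uniqueness modulo sign, and strict two-sided pointwise bounds.

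\emph{Existence.} I would apply the direct method of the calculus of variations. The admissible set is nonempty (take the constant $u \equiv (\int_\Gamma m / |\Gamma|)^{1/2}$), and the functional is bounded below because all three summands are non-negative: the kinetic and $L^{10/3}$ terms manifestly, and the electrostatic term because for charge-neutral $f = m - u^2$ the explicit Fourier form of $G_\Gamma$ in equation~\eqref{eq:green_TFW_supercell} yields
\begin{equation*}
D_\Gamma(f, f) = \frac{4\pi}{|\Gamma|} \sum_{k \in \mathcal{R}^*_\Gamma \setminus \{0\}} \frac{|\widehat{f}(k)|^2}{\|k\|^2} \geq 0.
\end{equation*}
A minimizing sequence $\{u_n\}$ is bounded in $H^1_{per}(\Gamma)$ by the kinetic term, so after extraction it converges weakly in $H^1_{per}$, strongly in $L^p_{per}(\Gamma)$ for $1 \leq p < 6$, and a.e.\ by Rellich--Kondrachov. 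The charge constraint passes to the limit by $L^2$ convergence of $u_n^2$; the $H^1$ and $L^{10/3}$ norms are weakly lower semi-continuous; and the electrostatic term, quadratic and convex in $u^2$ with $u_n^2 \to u_\Gamma^2$ in $L^2$, passes as well, producing a minimizer $u_\Gamma$.

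\emph{Euler--Lagrange equation and regularity.} Since the energy depends on $u$ only through $u^2$ and $|\nabla |u||^2 = |\nabla u|^2$ a.e., replacing $u_\Gamma$ by $|u_\Gamma|$ preserves the value of $E^L_m$, so I may assume $u_\Gamma \geq 0$ without loss of generality. Taking the first variation under the constraint $\int_\Gamma u^2 = \int_\Gamma m$ yields the multiplier $\epsilon_\Gamma$ and the stated Euler equation. The associated $\phi_\Gamma$ is then forced to satisfy $-\nabla^2 \phi_\Gamma = 4\pi(u_\Gamma^2 - m)$; any two zero-mean periodic $H^1$ solutions of this Poisson problem differ by a constant and therefore coincide, giving uniqueness of $\phi_\Gamma$ in $H^1_{per}(\Gamma)$. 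Regularity then follows by bootstrap. Since $u_\Gamma^2, m \in L^2_{per}$, elliptic regularity on the torus gives $\phi_\Gamma \in H^2_{per}(\Gamma) \hookrightarrow C^{0,1/2}(\mathbb{R}^3) \hookrightarrow L^\infty$. The Euler equation then reads as a semilinear Schr\"odinger equation with bounded $\phi_\Gamma$ and coercive nonlinearity $\frac{5}{3}u_\Gamma^{7/3}$; a Moser iteration yields $u_\Gamma \in L^\infty$. Substituting back gives $u_\Gamma^{7/3} + \phi_\Gamma u_\Gamma \in L^2_{per}$, hence $u_\Gamma \in H^2_{per}$, and one further iteration using that $t \mapsto t^{7/3}$ is Lipschitz on bounded sets places $u_\Gamma \in H^3_{per}(\Gamma)$, which in three dimensions embeds into $C^1(\mathbb{R}^3) \cap L^\infty(\mathbb{R}^3)$.

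\emph{Uniqueness up to sign.} Changing variables to $\rho = u^2$ recasts the problem as minimizing
\begin{equation*}
F(\rho) = \int_\Gamma \lVert \nabla \sqrt{\rho}\rVert^2 + \int_\Gamma \rho^{5/3} + \frac{1}{2} D_\Gamma(m - \rho, m - \rho)
\end{equation*}
over non-negative $\rho$ with $\sqrt{\rho} \in H^1_{per}(\Gamma)$ and $\int_\Gamma \rho = \int_\Gamma m$. The admissible set is convex; $\rho \mapsto \int |\nabla \sqrt{\rho}|^2$ is convex by Lieb's classical theorem; $\rho \mapsto \int \rho^{5/3}$ is strictly convex; and $\rho \mapsto D_\Gamma(m-\rho,m-\rho)$ is convex by the positive semi-definiteness of $D_\Gamma$ on mean-zero perturbations. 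Strict convexity of $F$ then forces a unique minimizing density $\rho_\Gamma =: u_\Gamma^2$, and combined with the sign argument above the only minimizers of the original problem are $\pm u_\Gamma$.

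\emph{Pointwise bounds---the main obstacle.} The upper bound $u_\Gamma \leq M$ is immediate from $u_\Gamma \in L^\infty$ together with continuity and periodicity. The essential difficulty is the strict positive lower bound. Rewriting the Euler equation as $(-\nabla^2 + V) u_\Gamma = 0$ with bounded periodic potential $V = \frac{5}{3} u_\Gamma^{4/3} + \phi_\Gamma \in L^\infty_{per}(\Gamma)$, the non-negative continuous function $u_\Gamma$ satisfies a linear divergence-form equation with bounded coefficients. The Harnack inequality of Trudinger (equivalently the strong maximum principle) then forces either $u_\Gamma \equiv 0$ or $u_\Gamma > 0$ everywhere; the constraint $\int_\Gamma u_\Gamma^2 = \int_\Gamma m > 0$ excludes the trivial alternative, so $u_\Gamma > 0$ pointwise, and by continuity and periodicity $\inf_{\mathbb{R}^3} u_\Gamma =: m > 0$ is attained. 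The step I expect to be most delicate is closing the Moser bootstrap for the $L^\infty$ bound uniformly in the supercell, since $V$ is bounded but sign-indefinite; once that is secured, both the upper bound and the Harnack-based strict positivity follow from standard De Giorgi--Nash--Moser theory with constants depending only on $\|V\|_{L^\infty}$ and the geometry of $\Gamma$.
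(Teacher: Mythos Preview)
The paper does not actually prove this theorem: it is stated in Appendix~\ref{app:tfw_thms} as a quoted result from \cite{CE11} (Proposition~2.1 there), and the appendix explicitly opens with ``We summarize here a few theorems regarding the thermodynamic limit in the context of TFW models that are used in this work.'' There is therefore no proof in the present paper to compare your proposal against.

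That said, your outline is sound and is essentially the standard route one finds in \cite{CE11} and the references therein: existence by the direct method using the Fourier positivity of $D_\Gamma$ on mean-zero functions; uniqueness of the density via strict convexity of the functional in $\rho = u^2$ (the convexity of $\rho \mapsto \int |\nabla\sqrt{\rho}|^2$ being the key classical ingredient); elliptic bootstrap through the Euler--Lagrange system to reach $H^3_{per}$; and strict positivity from the Harnack inequality applied to the linearized operator $-\nabla^2 + V$ with $V = \tfrac{5}{3}u_\Gamma^{4/3} + \phi_\Gamma \in L^\infty$. Your identification of the Moser $L^\infty$ bound as the place requiring care is accurate, though for a fixed supercell $\Gamma$ there is no uniformity issue---the constants are allowed to depend on $\Gamma$ in the statement as written.
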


\subsection{TFW theory for polymers and thin Films}
\label{sec:thinfilm}
We collect the relevant results related to thin films that will be needed later in the study of  quasi-planar defects.

\subsubsection{Polymers}
\label{subsec:polymers}
Consider a positive bounded nuclear distribution $m_{a,b}$ contained within $\mathbb{R} \times [-a/2, a/2) \times [-b/2, b/2)$, for some $a, b > 0$, such that $m_{a,b}$ is periodic in $x_1$ with periodicity $1$. We denote by $\Gamma^P_\infty= [-\frac{1}{2},\frac{1}{2}) \times \mathbb{R}  \times \mathbb{R}$ the corresponding unit cell. We denote by $L^p_{per}(\Gamma^P_\infty)$ the space of $L^p$ functions periodic along $x_1$, with a similar definition for Sobolev spaces. 

The energetics of the polymer model is posed as the following minimization problem in \cite{BTF00}:
 \begin{equation}
     \label{eq:polymermin}
     I^P_H = \inf_{w \in X^{m^P_H}} E^{m^P_H}_H(w),
 \end{equation}

 where, 
 \begin{equation} \label{eq:polymer_green}
 \begin{split}
 E^{m^P_H}_H(u) &= \int_{\Gamma_\infty} \|\nabla u\|^2 + \int_{\Gamma_\infty} |u|^{10/3} + \frac{1}{2} \int_{\Gamma_\infty} (G^P *_{\Gamma_\infty}(u^2-m^P_H)) (u^2-m^P_H), \\
 X^{m^P_H} &= \left\{w \in H^1_{per}(\Gamma^P_\infty) \;:\; (\log(2+\|x\|))^{1/2} w \in L^2(\Gamma^P_\infty), \int_{\Gamma^P_\infty} w^2 = \int_{\Gamma^P_\infty} m_H\right\},\\
 G^P(x) &= -2\log\Big(\sqrt{x_1^2+x_2^2}\Big) + \sum_{k \in \mathbb{Z} \times \{0\} \times \{0\}}\left( 
 \frac{1}{\lVert x-k\rVert} -\int_{(-\frac{1}{2}, \frac{1}{2}) \times \{0\} \times \{0\}}\frac{dy}{\lVert x-k-y\rVert}   \right).
 \end{split}
 \end{equation}

We have the following result:
\begin{theorem}
\label{thm:polymer}
\textit{\textup{[}Theorem 2.4 in \cite{BTF00}\textup{]}}
    Let $\int_{\Gamma^P_\infty}m_{a,b}(x)\,dx$ be finite. Then the system,
    \begin{displaymath}
        \begin{split}
           &-\nabla^2 u + \frac{5}{3}u^{7/3} + u\phi =0, \\
        &-\nabla^2 \phi = 4\pi (u^2-m_{a,b}),\\
        &u\geq0,  
        \end{split}   
    \end{displaymath}
    has a unique solution $(u_{a,b}, \phi_{a,b})$ in $(L^2_{\text{unif}} \cap L^{7/3}_{\text{loc}}(\mathbb{R}^3)) \times L^1(\mathbb{R}).$ which is periodic in $x_1$. Moreover $u_{a,b}$ is uniformly bounded almost everywhere and there exists a positive constant, $C$, independent of $a$ and $b$ such that $u_{a,b}(x) \leq \frac{C}{(x_1^2+x_2^2)^{3/4}}$ for almost every $x$ outside support of $m_{a,b}$.
\end{theorem}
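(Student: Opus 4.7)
The plan is to establish this result by combining a supercell thermodynamic-limit construction (for existence and basic regularity), a strict-convexity argument (for uniqueness), and a barrier/supersolution argument in the transverse plane (for the pointwise decay).

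I would approximate the polymer by a family of fully three-dimensional periodic supercell problems. For $L \in \mathbb{N}$ with $L > \max\{a,b\}$, set $\Gamma_L^P = [-\tfrac{1}{2},\tfrac{1}{2}) \times [-\tfrac{L}{2},\tfrac{L}{2}) \times [-\tfrac{L}{2},\tfrac{L}{2})$ and periodize $m_{a,b}$ in $x_2, x_3$ with period $L$ to obtain $m_{L,a,b}$, which by the support assumption agrees with $m_{a,b}$ on $\Gamma_L^P$. Theorem~\ref{thm:TFW_supercell_exst_uniq} furnishes a unique positive ground state $(u_L, \phi_L)$; uniqueness together with the $x_1$-periodicity of $m_{a,b}$ forces $(u_L, \phi_L)$ to be $x_1$-periodic with period $1$. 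Boundedness of $m_{a,b}$ gives a uniform-in-$L$ bound on $\|m_{L,a,b}\|_{L^2_{\text{unif}}(\mathbb{R}^3)}$, whence Corollary~\ref{cor:ortner_TFW_periodic} supplies $\|(u_L, \phi_L)\|_{H^4_{\text{unif}} \times H^2_{\text{unif}}} \leq C$ uniformly. A diagonal extraction along the lines of Lemma~\ref{lem:denconvg_help} then produces subsequential limits $u \in L^2_{\text{unif}} \cap L^{7/3}_{\text{loc}}(\mathbb{R}^3)$ and $\phi \in L^1(\mathbb{R})$, both $x_1$-periodic with $u \geq 0$; passing to the distributional limit in the Euler--Lagrange system~(\ref{eq:TFW_supercell_EL})--(\ref{eq:TFW_supercell_potential}) yields the claimed PDE system.

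For uniqueness, the polymer TFW energy $E^{m^P_H}_H(\sqrt\rho)$ expressed in $\rho = u^2$ is the sum of the strictly convex kinetic and local terms $\int_{\Gamma^P_\infty}\|\nabla\sqrt\rho\|^2 + \int_{\Gamma^P_\infty}\rho^{5/3}$ and a positive-semidefinite quadratic Coulomb form; a cutoff-and-pass-to-the-limit argument (handling the unbounded transverse directions by comparison on slabs of increasing width) then forces equality of any two admissible solutions, with uniqueness of the potential inherited from a polymer version of Lemma~\ref{lem:lapgaminf}. A uniform $L^\infty$ bound on $u$ follows from Moser iteration applied to $-\Delta u + \tfrac{5}{3}u^{7/3} = -u\phi$, with the $u^{7/3}$ nonlinearity acting as an absorber and $\phi$ controlled in $H^2_{\text{unif}}$ by the regularity already secured. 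For the transverse decay outside $\mathrm{supp}(m_{a,b})$, letting $r$ denote the relevant transverse radial coordinate, the candidate $\bar u(x) = C r^{-3/2}$ satisfies $-\Delta \bar u = \tfrac{3}{4}Cr^{-7/2}$ while $\tfrac{5}{3}\bar u^{7/3} = \tfrac{5}{3}C^{7/3}r^{-7/2}$, so for $C$ large (depending only on a uniform bound of $\phi$) one has $-\Delta \bar u + \tfrac{5}{3}\bar u^{7/3} + \bar u\phi \geq 0$, and the maximum principle on the transverse exterior yields $u \leq \bar u$ with constant independent of $a, b$.

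The principal obstacle is securing a uniform, $(a,b)$-independent $L^\infty$ control on $\phi$ strong enough to close the supersolution comparison, since the information $\phi \in L^1(\mathbb{R})$ together with the Poisson equation does not immediately give pointwise bounds. One resolves this by averaging $-\Delta \phi = 4\pi(u^2 - m_{a,b})$ in $x_1$ to reduce to a two-dimensional problem whose Green's function grows only logarithmically, then bootstrapping between a preliminary (weak) transverse decay of $u^2$ obtained from the $u^{7/3}$ absorption and the resulting pointwise decay of $\phi$ until the barrier comparison applies. A secondary subtlety is executing the convex uniqueness argument rigorously on the unbounded polymer cell without assuming a priori decay, which is handled by comparison on transverse slabs together with the uniform $L^\infty$ bound from the Moser step.
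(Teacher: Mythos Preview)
The paper does not prove this theorem; it is quoted from \cite{BTF00}, and the only hint given is the remark following the statement that ``$(u_{a,b},\phi_{a,b})$ are derived from taking thermodynamic limit of solutions of finite TFW models along any Van Hove sequence.'' Your proposal instead approximates by fully periodic supercells $\Gamma_L^P$ with transverse period $L$. This is a legitimate alternative route, but you should be aware of a genuine obstruction: the uniform $H^4_{\text{unif}}\times H^2_{\text{unif}}$ bounds you invoke from Corollary~\ref{cor:ortner_TFW_periodic} depend not only on $M=\|m\|_{L^2_{\text{unif}}}$ but implicitly on the quantitative form of hypothesis $(H_2)$. For $m_{L,a,b}$ the transverse voids have width $L-\max\{a,b\}$, so the constant governing $(H_2)$ degenerates as $L\to\infty$, and the estimates of \cite{NO17} need not be uniform in $L$. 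The approach in \cite{BTF00} via finite (compactly supported) approximants and Proposition~\ref{prop:ortner_TFW_general_finite} avoids this, because there the uniform bounds come from the molecular theory and do not require $(H_2)$ at all.

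Separately, your barrier computation contains a sign error: for $\bar u = Cr^{-3/2}$ with $r$ the two-dimensional transverse radius one has $-\Delta\bar u = -\tfrac{9}{4}Cr^{-7/2}$, which is negative, not $+\tfrac{3}{4}Cr^{-7/2}$. The supersolution inequality can still be arranged because $\tfrac{5}{3}\bar u^{7/3}=\tfrac{5}{3}C^{7/3}r^{-7/2}$ dominates $\tfrac{9}{4}Cr^{-7/2}$ for $C$ large, but the term $\bar u\phi\sim Cr^{-3/2}\phi$ decays \emph{slower} than $r^{-7/2}$, so a mere $L^\infty$ bound on $\phi$ is insufficient; you correctly flag this and propose the bootstrap via transverse averaging and logarithmic Green's function, which is indeed how \cite{BTF00} closes the argument.
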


Using equation $(2.53)$ in \cite{BTF00}, we can also deduce that $|\nabla u_{a,b}(x)| \leq C/(x_1^2+x_2^2)^{3/4}$ for almost every $x$ outside support of $m_{a,b}$. Note that $(u_{a,b}, \phi_{a,b})$ are derived from taking thermodynamic limit of solutions of finite TFW models along any Van Hove sequence. It also holds that $(u_{a,b}, \phi_{a,b})$ is uniformly bounded in $H^4_{\text{unif}}(\mathbb{R}^3) \times H^2_{\text{unif}}(\mathbb{R}^3)$. In addition, if we assume $m_{a,b}$ to be smooth then $\phi_{a,b}$ is also uniformly bounded almost everywhere.

\subsubsection{Thin films}
\label{subsec:thinfilm}

Consider a positive bounded nuclear distribution $m_H$ contained within $\mathbb{R} \times \mathbb{R} \times [-H/2, H/2)$, for some $H > 0$, such that $m_H$ is periodic in $x_1$ and $x_2$ with periodicity $1$, and is symmetric with respect to $x_1$ and $x_2$. We denote by $\Gamma_\infty= [-\frac{1}{2},\frac{1}{2}) \times [-\frac{1}{2},\frac{1}{2})  \times \mathbb{R}$ the corresponding unit cell. We denote by $L^p_{per}(\Gamma_\infty)$ the space of $L^p$ functions periodic along $x$ and $y$, with a similar definition for Sobolev spaces. The energetics of the thin film model is posed as the following minimization problem in \cite{BTF00}:
\begin{equation}
    \label{eq:thinfilmmin}
    I_H = \inf_{w \in X^{m_H}} E^{m_H}_H(w),
\end{equation}
where, 
\begin{equation} \label{eq:thinfilm_green}
\begin{split}
E^{m_H}_H(u) &= \int_{\Gamma_\infty} \|\nabla u(x)\|^2\,dx + \int_{\Gamma_\infty} |u|^{10/3}(x)\,dx\\
 &+ \frac{1}{2} \int_{\Gamma_\infty}\int_{\Gamma_\infty} G(x - y) (u^2(x)-m_H(x)) (u^2(y)-m_H(y))\,dxdy, \\
X^{m_H} &= \left\{w \in H^1_{per}(\Gamma_\infty) \;:\; (1+\|x\|)^{1/2}w \in L^2(\Gamma_\infty), \int_{\Gamma_\infty} w^2 = \int_{\Gamma_\infty} m_H\right\},\\
G(x) &= -2\pi|x_3| + \sum_{k \in \mathbb{Z}^2 \times \{0\}}\left( 
\frac{1}{\lVert x-k\rVert} -\int_{(-\frac{1}{2}, \frac{1}{2})^2 \times \{0\}}\frac{dy}{\lVert x-k-y\rVert}   \right).
\end{split}
\end{equation}

The minimization problem \eqref{eq:thinfilmmin} is well defined since, $\int_{\Gamma_\infty}\int_{\Gamma_\infty} G(x-y)f(x)g(y)\,dxdy$ is well defined for all $f,g \in L^1(\Gamma_\infty) \cap L^2_{\text{unif}}(\mathbb{R}^3)$ such that $|x_3|f,|x_3|g \in L^1(\Gamma_\infty)$.  Note that $|G(x) + 2\pi |x_3| - \frac{1}{|x|}| \in L^\infty(\mathbb{R}^3)$ and $w \in H^1_{per}(\Gamma_\infty)$ implies $w \in L^p_{\text{unif}}(\mathbb{R}^3)$ for all $1 \leq p \leq 6$. 

We then have the following result:

\begin{theorem}
\textit{\textup{[}Theorem 3.2 in \cite{BTF00}\textup{]}} \label{thm:thinfilm}
    There exists exactly two minimizers $u_H > 0$ and $-u_H$ for the minimization problem \eqref{eq:thinfilmmin}, and
    a real number $d_H$ such that,
    $ \phi_H = \int_{\Gamma_\infty} G(x -y)(u_H^2(y)-m_H(y))\,dy + d_H$, 
    where $G$ is as specified in \eqref{eq:thinfilm_green}. Further, $u_H(x)$ satisfies
    \begin{equation}
    \label{eq:thin_film_EL}
    -\nabla^2 u_H + \frac{5}{3}u_H^{7/3} + u_H\phi_H =0     
    \end{equation}
     in the sense of distributions, and there exists a positive constant $C$ not depending on the nuclear density such that 
     $$  u_H(x) \leq \frac{C}{|x_3|^{3/2}}, \hspace{4pt} \forall |x_3|\geq H .$$
     Moreover, $u_H$ is uniformly bounded almost everywhere and for any $R> 0$ there exists a positive constant $\nu_R$ such that $\inf_{|x_3|< R} u_H \geq \nu_R$.
\end{theorem}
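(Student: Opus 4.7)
The plan is to apply the direct method of the calculus of variations to the constrained minimization problem \eqref{eq:thinfilmmin}, and then extract positivity, the Euler--Lagrange equation, and the pointwise decay from the resulting minimizer. First I would verify that $E^{m_H}_H$ is well defined and bounded below on $X^{m_H}$. The only delicate term is the non-local one $\tfrac{1}{2}D_G(u^2-m_H,u^2-m_H)$, because $G(x)=-2\pi|x_3|+(\text{bounded remainder})+\|x\|^{-1}$ locally. The weighted constraint $(1+\|x\|)^{1/2}w\in L^2(\Gamma_\infty)$ is designed precisely so that $\int_{\Gamma_\infty}|x_3|w^2<\infty$, and together with the stated boundedness of $G(x)+2\pi|x_3|-\|x\|^{-1}$ and a Hardy estimate for the Coulomb singularity, this makes $D_G$ well defined. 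Solving the periodic Poisson problem $-\Delta\psi=4\pi(u^2-m_H)$ on the strip represents $D_G(u^2-m_H,u^2-m_H)$ as $\int|\nabla\psi|^2\ge 0$, which combined with the kinetic and TF terms gives coercivity on $H^1_{per}(\Gamma_\infty)$ plus the weighted $L^2$-norm.

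A minimizing sequence $\{u_n\}\subset X^{m_H}$ is therefore bounded in these norms; along a subsequence, $u_n\rightharpoonup u_H$ in $H^1_{loc}$, strongly in $L^p_{loc}$ for $1\le p<6$, and almost everywhere. The weighted bound rules out escape of mass along $x_3$, so the constraint $\int u_H^2=\int m_H$ passes to the limit. Weak lower semicontinuity of each term (the kinetic and TF pieces by classical arguments, the Coulomb piece via the Dirichlet integral for $\psi$) shows that $u_H$ is a minimizer. For uniqueness I would change variables to $\rho=u^2$: the kinetic term $\int|\nabla\sqrt{\rho}|^2$ is strictly convex in $\rho$ by the classical Lieb argument, $\int\rho^{5/3}$ is strictly convex, and $D_G(\rho-m_H,\rho-m_H)$ is convex. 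Strict convexity in $\rho$ yields a unique $\rho_H=u_H^2$, so $u_H=\pm\sqrt{\rho_H}$; since the functional depends only on $u^2$ and $|\nabla u|^2$, we may take $u_H\ge 0$.

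Taking a variation under the mass constraint produces a Lagrange multiplier $d_H$ and, in the distributional sense,
\begin{displaymath}
-\Delta u_H+\tfrac{5}{3}u_H^{7/3}+u_H\bigl(G*_{\Gamma_\infty}(u_H^2-m_H)+d_H\bigr)=0,
\end{displaymath}
which identifies $\phi_H$ as stated. The strong maximum principle applied to $-\Delta u_H+Vu_H=0$ with $V=\tfrac{5}{3}u_H^{4/3}+\phi_H\in L^\infty_{loc}$ rules out interior zeros, so $u_H>0$ ($u_H\equiv 0$ is excluded by the mass constraint). For the pointwise decay, outside the support of $m_H$ the equation reduces to $-\Delta u_H+\tfrac{5}{3}u_H^{7/3}+u_H\phi_H=0$, and periodicity in $x_1,x_2$ combined with neutrality makes $\phi_H$ approach a finite limit as $|x_3|\to\infty$ (the $|x_3|$-growth of $G*(u_H^2-m_H)$ cancels). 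Balancing $\partial_{x_3}^2 u$ against $u^{7/3}$ in a self-similar ansatz $u\sim C|x_3|^{-\alpha}$ forces $\alpha=3/2$, and a comparison-principle argument against the explicit super-solution $C|x_3|^{-3/2}$ yields the pointwise bound. The local lower bound $\inf_{|x_3|<R}u_H\ge\nu_R>0$ then follows from Harnack's inequality on compact sub-cylinders of $\{|x_3|<R\}$, using $u_H>0$ and local boundedness of $V$.

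The hardest step will be obtaining the sharp polynomial decay $|x_3|^{-3/2}$: this is stronger than the exponential estimates used elsewhere in the paper and requires a carefully constructed super-solution that simultaneously respects the TFW nonlinearity and the behaviour of $\phi_H$ at infinity. A secondary subtlety is the preservation of charge neutrality under weak convergence on the unbounded strip; the weighted norm $(1+\|x\|)^{1/2}u\in L^2$ must be used critically to preclude loss of mass as $|x_3|\to\infty$, and its lower semicontinuity has to be combined with the Coulomb estimate to close the existence argument.
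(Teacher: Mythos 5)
The paper does not prove Theorem~\ref{thm:thinfilm}: it is quoted as is from Theorem~3.2 of Blanc and Le~Bris \cite{BTF00} and used as a black box (it is then lightly adapted in Section~\ref{subsec:thinfilmmod}). So there is no in-paper proof for your proposal to be compared against, and I am evaluating your sketch on its own terms against what that external result requires.

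Your overall architecture (direct method with the weighted norm providing tightness; convexity in $\rho=u^2$ for uniqueness; Lagrange multiplier for the Euler--Lagrange equation; strong maximum principle and Harnack for positivity and the local lower bound; a supersolution comparison for the far-field decay) is the right one and is in the spirit of \cite{BTF00}. Two points need repair, however. First, you attribute \emph{strict} convexity of the Weizs\"acker term $\int|\nabla\sqrt{\rho}|^2$ in $\rho$ to ``the classical Lieb argument.'' The Benguria--Brezis--Lieb/Hoffmann-Ostenhof argument gives only convexity; the functional $\rho\mapsto\int|\nabla\rho|^2/(4\rho)$ is $1$-homogeneous and hence affine along rays $\rho\mapsto t\rho$, so it is not strictly convex. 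Strict convexity of the total energy on the constraint set must be supplied by the $\int\rho^{5/3}$ term (as in the original TFW uniqueness argument), and your proof should say so. Second, and more seriously, the Sommerfeld-type bound $u_H(x)\le C|x_3|^{-3/2}$ is only gestured at. The self-similar balancing identifies the exponent but proves nothing; the comparison-principle argument requires (i) a one-sided bound $\phi_H\ge -\varepsilon$ (or $\phi_H\ge0$) in the exterior of $\operatorname{supp} m_H$, typically obtained from superharmonicity $-\Delta\phi_H=4\pi u_H^2\ge0$ there together with the decay of $\phi_H$ forced by charge neutrality, and (ii) a genuine construction of the barrier as a supersolution of $-\Delta v+\tfrac{5}{3}v^{7/3}\ge0$ together with a sliding or maximum-principle argument to show $u_H\le v$. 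Neither (i) nor (ii) appears in your proposal beyond the assertion that ``the $|x_3|$-growth of $G*(u_H^2-m_H)$ cancels,'' which is precisely the statement that needs proof. Finally, a smaller point: your remark that the $|x_3|^{-3/2}$ bound is ``stronger than the exponential estimates used elsewhere in the paper'' is backwards as a matter of decay rate; what is remarkable about it is the uniformity of the constant $C$ in the nuclear data, not the rate itself. The tightness step (passing the mass constraint to the limit on the unbounded strip via the $(1+\|x\|)^{1/2}$ weight) is correctly flagged but would also need to be carried out.
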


$I_H$ gives the energy of a thin film up to a constant that depends only on the nuclear density. Note that $\phi_H(x)$  satisfies $$ -\nabla^2 \phi_H = 4\pi (u_H^2 - m_H) $$ in the sense of distributions. 

Now $(u_H, \phi_H)$ are derived from taking thermodynamic limit of solutions of finite TFW models along any Van Hove sequence. It also holds that $(u_{H}, \phi_{H})$ is uniformly bounded in $H^4_{\text{unif}}(\mathbb{R}^3) \times H^2_{\text{unif}}(\mathbb{R}^3)$. In addition, if we assume $m_H$ to be smooth
then $\phi_H$ is also uniformly bounded almost everywhere. 

The requirement of symmetry along both the $x_1$ and $x_2$ directions in Theorem~\ref{thm:thinfilm} is too stringent for application to planar defects such as stacking faults. We can, however, relax this requirement to impose symmetry only along one direction if we extend the polymer to thin film, as described earlier.

\section{Numerical algorithm for quasi-1D TFW models} \label{app:tfw_numerics}
For the numerical results presented in the text we follow an algorithm based on the augmented Lagrangian approach presented in \cite{SP14}. We present the details of the numerical algorithm here in a 1D context for completeness. The staggered algorithm consists of solving the constrained energy minimization problem for $u$ for a fixed $\phi$, and using a Poisson solver for $\phi$ at fixed $u$, and iterating this process until convergence. The charge neutrality constraint for the minimization problem is handled by introducing the following augmented Lagrangian: for real constants $\mu$, $c$, define 
\begin{equation} \label{eq:tfw_aug_Lagr}
\begin{split}
L(u, \mu, c) &= C_W \int_{\mathbb{R}} u'(z)^2\, dz + C_{TF} \int_{\mathbb{R}} u(z)^{10/3}\,dz + \int_{\mathbb{R}} (m(z) - u(z)^2)\phi(z)\,dz\\
 &+ \mu\left(\int_{\mathbb{R}} (u(z)^2 - m(z))\,dz\right) + \frac{1}{2c}\left(\int_{\mathbb{R}} (u(z)^2 - m(z))\,dz\right)^2,
\end{split}
\end{equation}
where the potential $\phi(z)$ satisfies
\begin{equation} \label{eq:tfw_potl}
-\phi''(z) = 4\pi(m(z) - u(z)^2), \quad z \in \mathbb{R}.
\end{equation}
The steps involved in the staggered algorithm listed in Algorithm~\ref{alg:tfw_alm}. Note that for all numerical calculations, we take $C_{TF}=3.2$ and $C_W = 1$.

Since the domain is infinitely large along the $z$ direction, a choice of appropriate boundary conditions is necessary for the practical implementation of the aforementioned minimization problem. This problem can be addressed using bulk boundary conditions, as we have demonstrated that the difference in the electronic fields between the perturbed and perfect crystals is bounded by an exponentially decaying function away from the defect boundary. Additionally, we show that electronic fields with periodic boundary conditions converge to the true electronic fields. Given the ease of implementing periodic boundary conditions, we apply them over a domain $[-L, L]$, where $L$ is chosen large enough to ensure that the perturbations decay within this region. 

We use a uniform $(x_i)_{i=0}^N$ over the domain $[-L,L]$ to locate points where the density and potential fields will be calculated; the algorithm works even for non-uniform grids too. The values of the square root electron density $u$ and the potential $\phi$ at these grid points are denoted as $(u_i)_{i=0}^N$ and $(\phi_i)_{i=0}^N$, respectively. The minimization problem in the first step of the loop in Algorithm~\ref{alg:tfw_alm} is computed using the gradient descent algorithm, and the final step requiring the solution of equation~\eqref{eq:tfw_potl} is performed using the finite difference method. 

\begin{algorithm}[h]
\caption{Staggered algorithm using the Augmented Lagrangian Method for solving the TFW model numerically}
\label{alg:tfw_alm}
\KwData{Mesh $(x_i)_{i=0}^N$, Nuclear density $(m_i)_{i=0}^N$, initial guess for square root electron density $u^0$, initial guess for potential $\phi^0$, initial guess for augmented Lagrangian parameters $\mu^0, c^0, \kappa$}
\KwResult{Square root electron density $(u_i)_{i=0}^N$, Potential $(\phi_i)_{i=0}^N$, updated augmented Lagrange parameters $\mu, c$}
$u \gets u^0,$
$\phi \gets \phi^0,$
$\mu \gets \mu^0,$
$c \gets c^0$\;
$k \gets 0$\;
\While{not converged}{
    $u^{k+1} = \text{argmin}_u\, L(u, \mu^k, c^k)$\;
    $\mu^{k+1} = \mu^k + \frac{1}{c^k}\left(\int (u(z)^2 - m(z))\,dz\right)$\;
    $c^{k+1} = \kappa c^k$\;
    Compute $\phi^{k+1}$ by solving equation \eqref{eq:tfw_potl} with $u \gets u^{k+1}$\;
    $k \gets k + 1$\;
}
\end{algorithm}
The solution of the minimization problem for $u$ requires the functional derivative of the augmented Lagrangian $L$ with respect to $u$. This is easily computed as
\begin{displaymath}
\partial_u L(u, \mu, c) = -2C_W u''(z) + \frac{10}{3}C_{TF} u(z)^{7/3} - 2u(z)\phi(z) + 2\mu u(z) + \frac{2}{c}\left(\int (u(z)^2 - m(z))\,dz\right)u(z).
\end{displaymath}
All the integrals are evaluated using the trapezoid rule.

Although Dirichlet and periodic boundary conditions both lead to the same electronic fields, one might question whether their rates of convergence differ. In \cite{VG15}, it is demonstrated that the energy converges more rapidly with periodic boundary conditions, while the convergence rate for the electronic fields remains the same. We observed the same trend in our numerical simulations. For this reason, we employ periodic boundary conditions for all the results shown in this work. 

As a simple test of the algorithm, we present numerical results for a simple jellium model. The reference nuclear distribution corresponds to a uniform charge density field, with density $1$. The perturbed nuclear density adds a uniform bump with density $11$ and width $0.05$ (arbitrary units) about the the origin, as shown in Figure~\ref{fig:jel_nuc_dens}. Note that the nuclear distribution varies only along the $x_3$ direction. We show in Figure~\ref{fig:dE_jellium} the effect of varying the width of simulation cell on the relative energy of the defect, clearly indicating the convergence of the staggered numerical algorithm.  
\begin{figure}[h]
\begin{subfigure}{0.49\textwidth}
\centering
\includegraphics[width=0.9\textwidth]{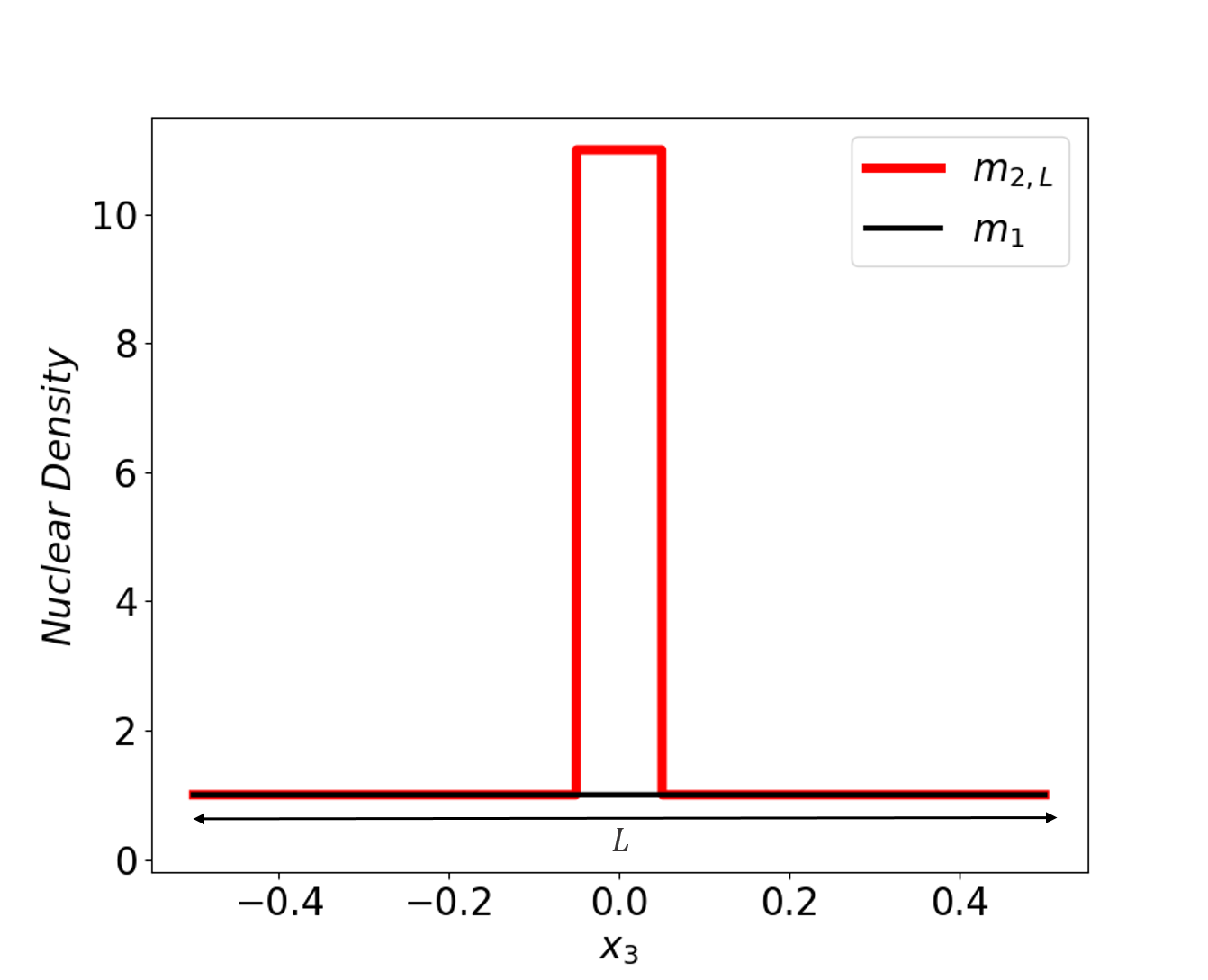}
\caption{Perfect and perturbed jellium configurations.}
\label{fig:jel_nuc_dens}
\end{subfigure}
~
\begin{subfigure}{0.49\textwidth}
\centering
\includegraphics[width=0.9\textwidth]{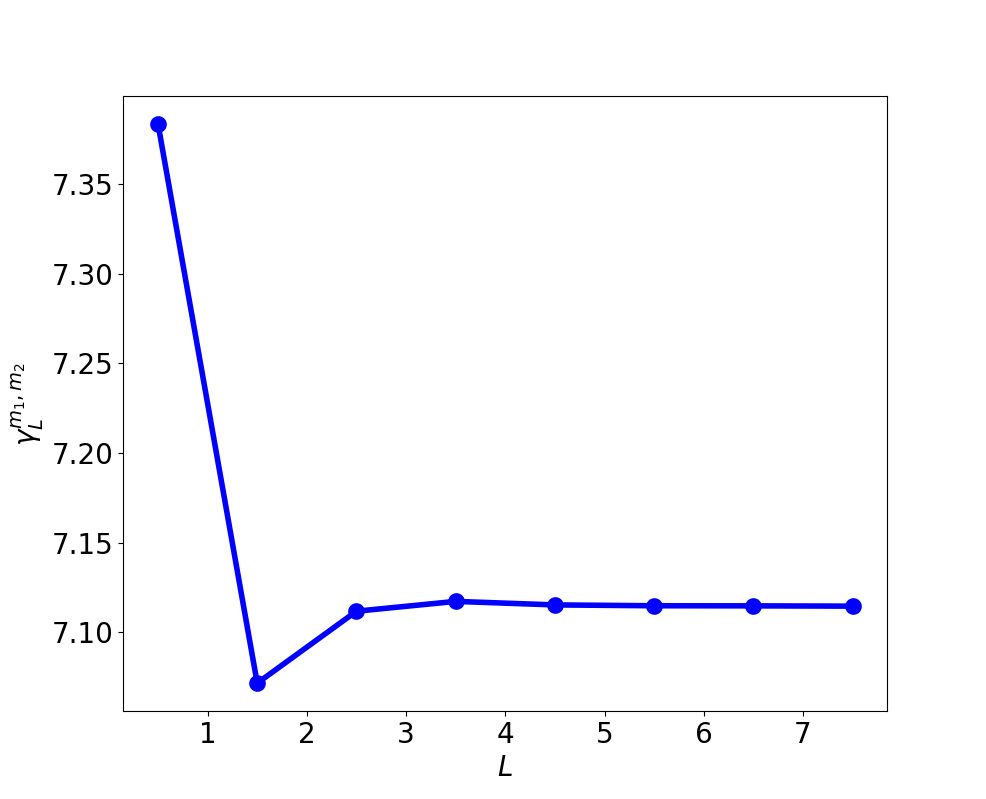}
\caption{Convergence of relative energy.}
\label{fig:dE_jellium}
\end{subfigure}
\caption{In figure~\ref{fig:jel_nuc_dens} we plot the nuclear density of the perfect and perturbed crystals over a finite domain; the density of the perfect crystal extends periodically, while that of the perturbed configuration is identical to the perfect crystal outside the central perturbed region. In figure~\ref{fig:dE_jellium} we show the convergence of the relative energy as a function of the domain size, to illustrate the convergence of the staggered algorithm.} 
\end{figure}
The decay of the difference in square root electron density, and the difference in potential, between the perturbed and reference jellium models is shown in Figure~\ref{fig:gsf_jellium}. We thus see that the perturbations decay exponentially as expected from the theoretical analysis, but the domain over which this decay occurs is typically much larger than the size of the domain where the reference nuclear density is perturbed.
\begin{figure}[h]
\begin{subfigure}{0.45\textwidth}
\centering
\includegraphics[width=0.9\textwidth]{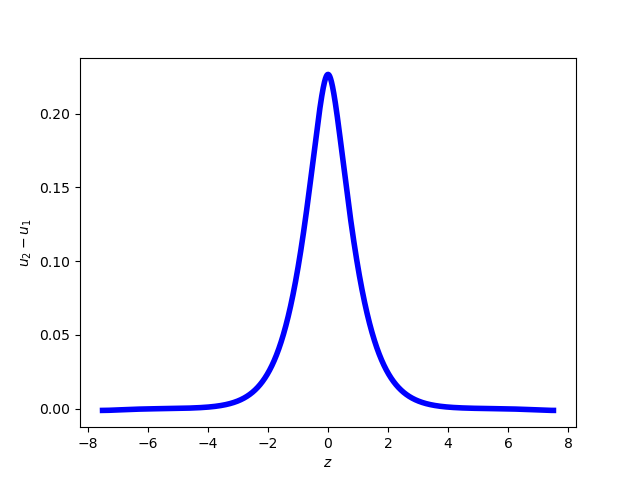}
\end{subfigure}
~
\begin{subfigure}{0.45\textwidth}
\centering
\includegraphics[width=0.9\textwidth]{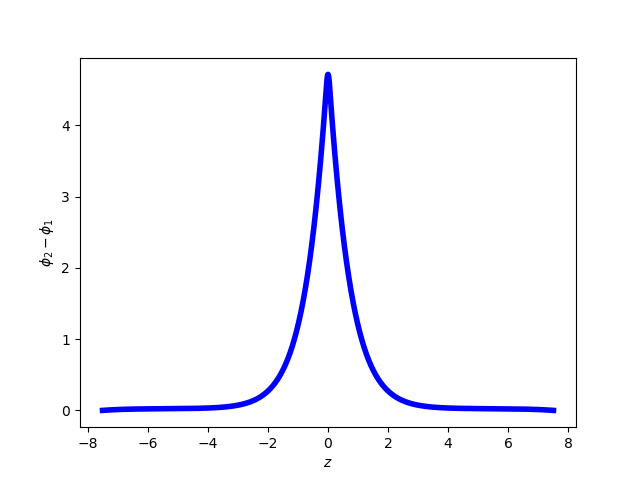}
\end{subfigure}
\caption{Decay of the difference in electronic fields between the perturbed and reference jellium models. We note for reference that the width of the perturbation to the reference nuclear density is $0.05$; both the perturbation to the square root electronic density and potential decay over much larger distances, as shown.}
\label{fig:gsf_jellium}
\end{figure}

\newpage

\begin{thebibliography}{LKBK01}

\bibitem[BB00]{BTF00}
X.~Blanc and C.~Le Bris, \emph{{Thomas-Fermi type theories for polymers and
  thin films}}, Adv. Differential Equations \textbf{5} (2000), no.~7-9,
  977 -- 1032.

\bibitem[BBL81]{BBL81}
R.~Benguria, H.~Brezis, and E.~Lieb, \emph{The Thomas--Fermi--von Weizs\"acker
  theory of atoms and molecules}, Commun. Math. Phys. \textbf{79} (1981),
  167--180.

\bibitem[Bla06]{Blanc06}
X.~Blanc, \emph{Unique solvability of a system of nonlinear elliptic pdes
  arising in solid state physics}, SIAM J. Math. Anal. \textbf{38} (2006), no.~4, 1235--1248.

\bibitem[CE11]{CE11}
E.~Canc\`es and V.~Ehrlacher, \emph{Local defects are always neutral in the
  {T}homas--{F}ermi--von {W}eizs\"acker theory of crystals}, Arch. Rational Mech. Anal. \textbf{202} (2011), 933--973.

\bibitem[CLBL98]{CLBL98}
I.~Catto, C.~Le~Bris, and P-L. Lions, \emph{The mathematical theory of
  thermodynamic limits}, Oxford Mathematical Monographs, Oxford University
  Press, 1998.

\bibitem[DIG15]{VG15}
S. Das, M. Iyer, and V. Gavini, \emph{Real-space formulation of
  orbital-free density functional theory using finite-element discretization:
  The case for al, mg, and al-mg intermetallics}, Phys. Rev. B \textbf{92}
  (2015), 014104.

\bibitem[Dir30]{Dirac30}
P.A.M. Dirac, \emph{Note on exchange phenomena in the {T}homas atom},
  Math. Proc. Camb. Phil. Soc. \textbf{26}
  (1930), 376--385.

\bibitem[Fer27]{Fermi27}
E.~Fermi, \emph{Un metodo statistico per la determinazione di alcune priorieta
  del atomo}, Rend. Accad. Nat. Lincei \textbf{6} (1927), 602--607.

\bibitem[GLM21]{GLM21}
D.~Gontier, S.~Lahbabi, and A. Maichine, \emph{Density functional theory for two-dimensional
homogeneous materials}, Commun. Math. Phys., \textbf{388(3)} (2021), 1475–1505.


\bibitem[HK64]{HK64}
P.~Hohenberg and W.~Kohn, \emph{Inhomogeneous electron gas}, Phys. Rev.
  \textbf{136} (1964), B864--B871.

\bibitem[Lie83]{Lieb83}
E.H. Lieb, \emph{Density functionals for coulomb systems}, Int. J. Quantum
  Chem. \textbf{24} (1983), 243--277.

\bibitem[LKBK01]{LKBK01}
G. Lu, N. Kioussis, V.~V. Bulatov, and E. Kaxiras,
  \emph{Dislocation core properties of aluminum: a first-principles study},
  Materials Science and Engineering: A \textbf{309--310} (2001), 142--147.

\bibitem[LS77]{LS77}
E.H. Lieb and B.~Simon, \emph{The thomas--fermi theory of atoms, molecules and
  solids}, Advances in Mathematics \textbf{23} (1977), 22--116.

\bibitem[Nab47]{nabarro47}
F.~R.~N. Nabarro, \emph{Dislocations in a simple cubic lattice}, Proc. Phys.
  Soc. \textbf{59} (1947), 256.

\bibitem[NO17]{NO17}
F.Q. Nazar and C.~Ortner, \emph{Locality of the {T}homas--{F}ermi--von
  {W}eizs\"acker equations}, Arch. Rational Mech. Anal.
  \textbf{224} (2017), 817--870.

\bibitem[Pei40]{peierls40}
R. Peierls, \emph{The size of a dislocation}, Proc. Phys. Soc. \textbf{52}
  (1940), 34.

\bibitem[Ric18]{Ricaud18}
J.~Ricaud, \emph{Symmetry Breaking in the Periodic
{T}homas--{F}ermi--{D}irac--von {W}eizs\"acker
Model}, Ann. Henri Poincaré
  \textbf{18} (2018), 3129--3177.


\bibitem[Sol16]{Solovej16}
J.P. Solovej, \emph{A new look at {T}homas--{F}ermi theory}, Molecular Physics
  \textbf{114} (2016), 1036--1040.

\bibitem[SP14]{SP14}
P.~Suryanarayana and D.~Phanish, \emph{Augmented lagrangian formulation of
  orbital-free density functional theory}, Journal of Computational Physics
  \textbf{275} (2014), 524--538.

\bibitem[Tho27]{Thomas27}
L.H. Thomas, \emph{The calculation of atomic fields}, Proc. Camb. Philos. Soc.
  \textbf{23} (1927), 542--548.

\bibitem[Vit92]{Vitek92}
V.~Vitek, \emph{Structure of dislocation cores in metallic materials and its
  impact on their plastic behaviour}, Progress in Materials Science \textbf{36}
  (1992), 1--27.

\bibitem[Wei35]{vw35}
C.F.v. Weizs\"acker, \emph{Zur theorie der kernmassen}, Z. Physik \textbf{96}
  (1935), 431--458.

\end{thebibliography}
\providecommand{\bysame}{\leavevmode\hbox to3em{\hrulefill}\thinspace}
\providecommand{\MR}{\relax\ifhmode\unskip\space\fi MR }
\providecommand{\MRhref}[2]{%
  \href{http://www.ams.org/mathscinet-getitem?mr=#1}{#2}
}
\providecommand{\href}[2]{#2}

\end{document}